\documentclass[a4paper,reqno]{amsart}
\pdfoutput=1

\usepackage[dvipsnames]{xcolor}
\usepackage{hyperref}
\hypersetup{
    colorlinks=true,
    citecolor=Green,
    linkcolor=NavyBlue,
    filecolor=magenta,
    urlcolor=BrickRed,
  }

\usepackage[
style=alphabetic,
maxnames=99,
maxalphanames=99,
useprefix=true]{biblatex}
\addbibresource{references.bib}

\usepackage{amsmath,amssymb,amsthm}
\usepackage{enumitem}
\usepackage{mathtools}
\usepackage[capitalize,noabbrev]{cleveref}
\usepackage{tikz}
\usetikzlibrary{positioning,arrows}
\usepackage{stmaryrd}
\usepackage{fontawesome5}
\usepackage{microtype}
\usepackage[T1]{fontenc}

\renewcommand{\cref}{\Cref}

\newcommand{\qedNoProof}{\hfill\qedsymbol}

\newtheorem{theorem}{Theorem}
\newtheorem{lemma}[theorem]{Lemma}
\newtheorem{proposition}[theorem]{Proposition}
\newtheorem{corollary}[theorem]{Corollary}
\theoremstyle{definition}
\newtheorem{definition}[theorem]{Definition}

\theoremstyle{remark}
\newtheorem{remark}[theorem]{Remark}

\newtheorem*{claim}{Claim}


\newcommand{\Nat}{\mathbb{N}}
\newcommand{\Prop}{\mathsf{Prop}}

\newcommand{\defeq}{\vcentcolon\equiv}
\newcommand{\Zero}{\mathbf 0}
\newcommand{\One}{\mathbf 1}
\newcommand{\Two}{\mathbf 2}
\newcommand{\Three}{\mathbf 3}
\DeclareMathOperator{\id}{id}
\DeclareMathOperator{\inl}{\mathsf{inl}}
\DeclareMathOperator{\inr}{\mathsf{inr}}

\newcommand{\Subtype}[3]{\{#1 : #2 \ |\ #3\}}


\newcommand{\Ord}{\mathsf{Ord}}
\newcommand{\posalpha}{\alpha_{>\bot}}
\newcommand{\DL}[2]{[{#1},{#2}]_{<}}
\newcommand{\abstrexp}[2]{{#1}^{#2}}
\let\exp\relax
\DeclareMathOperator{\exp}{\mathsf{exp}}
\newcommand{\listexp}[2]{\exp\left({#1},{#2}\right)}
\newcommand{\cons}{::}
\newcommand{\nill}{[\,]}
\DeclareMathOperator{\List}{\mathsf{List}}
\newcommand{\initseg}{\mathrel{\downarrow}}
\let\sup\relax
\DeclareMathOperator{\sup}{\mathsf{sup}}
\let\lim\relax
\DeclareMathOperator{\lim}{\mathsf{lim}}
\DeclareMathOperator{\map}{\mathsf{map}}
\DeclareMathOperator{\isdecreasing}{\mathsf{is-decreasing}}
\newcommand{\denotes}[2]{\llbracket {#1} \rrbracket_{#2}}
\newcommand{\denotesprime}[2]{\llbracket {#1} \rrbracket_{#2}'}
\newcommand{\contoabs}{\mathsf{con}\mbox{-}\mathsf{to}\mbox{-}\mathsf{abs}}
\newcommand{\normalize}{\mathsf{normalize}}

\DeclarePairedDelimiter{\pa}{(}{)}

\newcommand{\LEM}{\textup{LEM}}
\newcommand{\sbullet}{\scalebox{0.5}{$\bullet$}}

\newcommand{\grayson}[2]{\mathop{\mathsf{Gr}}(#1,#2)}

\newcommand{\incr}{\leq_{\textup{\textsf{cl}}}}
\newcommand{\incrst}{<_{\textup{\textsf{cl}}}}
\newcommand{\Id}{\mathsf{Id}}
\crefname{enumi}{Assumption}{Assumptions}


\newcommand{\formalized}{{\color{NavyBlue!75!White}{\raisebox{-0.5pt}{\scalebox{0.8}{\faCog}}}}}
\newcommand{\flinkurl}[1]{\href{#1}{\formalized}}
\newcommand{\baseurl}{https://cs.bham.ac.uk/\~mhe/TypeTopology/Ordinals.Exponentiation.PaperJournal.html}
\newcommand{\flink}[1]{\flinkurl{\baseurl\##1}}
\newcommand{\flinkprime}[2]{\flinkurl{\baseurl\#fixed-assumptions-#1.#2}}

\newenvironment{fequation}[1]
{
  \newtagform{formalized-eq}{(\flink{#1} }{)}%
  \usetagform{formalized-eq}%
  \begin{equation}%
}
{
  \end{equation}%
  \usetagform{default}\ignorespacesafterend%
}

\newenvironment{fequationprime}[2]%
{
  \newtagform{formalized-eq}{(\flinkprime{#1}{#2} }{)}%
  \usetagform{formalized-eq}%
  \begin{equation}%
}
{
  \end{equation}%
  \usetagform{default}\ignorespacesafterend%
}

\begin{document}

\title{Constructive Ordinal Exponentiation}

\author[de Jong]{Tom de Jong}
\author[Kraus]{Nicolai Kraus}
\address{School of Computer Science, University of Nottingham, Nottingham, UK}
\email{\href{mailto:tom.dejong@nottingham.ac.uk}{\texttt{tom.dejong@nottingham.ac.uk}} \\
       \href{mailto:nicolai.kraus@nottingham.ac.uk}{\texttt{nicolai.kraus@nottingham.ac.uk}}}
\urladdr{\url{https://tdejong.com} \\ \url{https://cs.nott.ac.uk/~psznk}}
\author[Nordvall Forsberg]{Fredrik Nordvall Forsberg}
\address{Computer \& Information Sciences, University of Strathclyde, Glasgow, UK.}
\email{\href{mailto:fredrik.nordvall-forsberg@strath.ac.uk}{\texttt{fredrik.nordvall-forsberg@strath.ac.uk}}}
\urladdr{\url{https://fredriknf.com}}
\author[Xu]{Chuangjie Xu}
\email{\href{mailto:cj-xu@outlook.com}{\texttt{cj-xu@outlook.com}}}
\urladdr{\url{https://cj-xu.github.io}}

\begin{abstract}
	Cantor's ordinal numbers, a powerful extension of the natural numbers,
	are a cornerstone of set theory.
	They can be used to reason about the termination of processes, prove the consistency of logical systems, and justify some of the core principles of modern programming language theory such as recursion.
	In classical mathematics, ordinal arithmetic is well-studied;
	constructively, where ordinals are taken to be transitive, extensional, and wellfounded orders on sets, addition and multiplication are well-known.
	We present a negative result showing that general constructive ordinal exponentiation is impossible, but we suggest two definitions that come close.
	The first definition is abstract and solely motivated by the expected equations; this works as long as the base of the exponential is positive.
	The second definition is based on decreasing lists and can be seen as a constructive version of Sierpi{\'n}ski's definition via functions with finite support; this requires the base to have a trichotomous least element.
	Whenever it makes sense to ask the question, the two constructions are equivalent, allowing us to prove algebraic laws, cancellation properties, and preservation of decidability of the exponential.
	The core ideas do not depend on any specific constructive set theory or type theory, but a concrete computer-checked mechanization using the Agda proof assistant is given in homotopy type theory.
\end{abstract}

\keywords{ordinal arithmetic, exponentiation, constructive mathematics, homotopy
  type theory, univalent foundations}

\maketitle

\section{Introduction}
In classical mathematics, ordinals have rich and interesting structure.
How much of this structure can be developed in a constructive setting?
Classical ordinals have powerful applications such as tools for establishing consistency of logical
theories~\cite{Rathjen2007}, proving termination of processes~\cite{Floyd1967}, and justifying induction and recursion~\cite{Aczel1977,DybjerSetzer1999}, which would all be valuable to have in constructive mathematics and, for example, in proof assistants based on constructive type theory.
In this article, we explore the structure of ordinal arithmetic constructively, with an emphasis on the most difficult case of ordinal exponentiation.

\subsection{Ordinal Arithmetic and the Contributions of this Paper}

Ordinal arithmetic generalizes natural number arithmetic, so it is useful to consider the latter for a moment.
If we think of a natural number as a finite set, it is natural to define the sum of two natural numbers as the disjoint union of these sets; however, if we think of Peano natural numbers that are constructed from $0$ and a successor operation $S$, then an obvious recursive definition of $m + S(n)$ is given by $S(m + n)$.

Thus, as explained e.g.\ by Halmos in \emph{Naive Set Theory}~\cite[Sec.~21]{Halmos1960}, there are two standard approaches to ordinal arithmetic.
Halmos 
refers to the two approaches as \emph{set theoretic} (constructions on sets) and \emph{recursive} (definitions by transfinite recursion on one argument).
In the classical literature on ordinal numbers, both approaches are used.
The following recursive definitions of addition and multiplication are standard:
\[\label{eq:addition-multiplication-spec}
\begin{aligned}
	\alpha + 0 &\coloneqq \alpha\\
	\alpha + (\beta+1) &\coloneqq  (\alpha + \beta) + 1 \\
	\alpha + \lambda &\coloneqq \sup_{\beta < \lambda}\left(\alpha + \beta\right) &&\text{(if \(\lambda\) is a limit)} \\[0.5em]
	\alpha \times 0 &\coloneqq 0\\
	\alpha \times (\beta + 1) &\coloneqq  (\alpha \times \beta) + \alpha \\
	\alpha \times \lambda &\coloneqq \sup_{\beta < \lambda}\left(\alpha \times \beta\right) &&\text{(if \(\lambda\) is a limit)}
\end{aligned}
\tag{\(\ast\)}
\]
Similarly, exponentiation can classically be defined by:
\[\label{eq:intro-spec}
\begin{aligned}
	\alpha^0 &\coloneqq 1\\
	\alpha^{\beta+1} &\coloneqq  \alpha^\beta \times \alpha \\
	\alpha^{\lambda} &\coloneqq \sup_{\beta < \lambda}\alpha^{\beta} && \text{(if \(\lambda\) is a limit, \(\alpha \neq 0\))} \\
	0^\beta &\coloneqq 0  && \text{(if \(\beta \neq 0\))}
\end{aligned}
\tag{\(\dagger\)}
\]
A functional programmer, for whom constructivity is naturally non-negotiable, will recognize the \emph{zero} and \emph{successor} cases for the implementation of natural number arithmetic (see, for example, Hutton~\cite{hutton2016programming}).
When moving from natural to ordinal numbers, the situation becomes more delicate, and much depends on the very definition of what ordinal numbers are.
If we took an ordinal to be a certain ``syntax tree,'' we could view the above definitions as the straightforward extension of the relevant equations for the natural numbers by a case for limit ordinals; the above choice ensures that each operation is continuous in the second argument.
However, such ``syntactic'' presentations are merely notation systems, and the traditional definition of an ordinal is that of well ordered set with certain properties.
Classically, the distinction is not too important because it is possible to switch between different representations (as long as the ordinals involved are sufficiently small to have a notation in the notation system); constructively, the distinction is crucial.
Being able to make a case distinction on whether a well ordered set is zero, a successor, or a limit, is available if and only if the law of excluded middle holds~\cite[Thm~63]{KNFX2023}.
Therefore, the above equations are, constructively, not a valid definition;
however, they are still a reasonable \emph{specification} of how the operations should behave.
(The definition of ordinals as well as alternative approaches are discussed in \cref{sec:ord-in-hott} below.)

For addition and multiplication, there are well known constructions using what Halmos~\cite{Halmos1960} calls the set-theoretic approach. As suggested above, addition uses disjoint unions, and it is not difficult to define multiplication via cartesian products of sets.
These are well behaved constructively --- for example, one can prove that they satisfy the specification \eqref{eq:addition-multiplication-spec} above.
On the other hand, subtraction (and, similarly, division) is inherently non-constructive for the approach of ordinals as well ordered sets~\cite[{\texttt{Ordinals.AdditionProperties}}]{TypeTopologyOrdinals}.
For exponentiation, most classical textbooks on ordinals~\cite{Halmos1960,Enderton1977,Rosenstein1982,Holz1999,Jech2003,Pohlers2009} simply employ the non-constructive definition by cases.
Notable exceptions are the classical monograph by Sierpi{\'n}ski~\cite{sierpinski} and the constructive work by Grayson~\cite{Grayson1978,Grayson1979,Grayson1982}, which we will return to shortly.
In fact, going back all the way to Cantor's original writings on ordinals~\cite{Cantor1897}, addition and multiplication are defined ``explicitly'' in \S 14, whereas exponentiation (with base \(\alpha > 1\)) is defined by case distinction and transfinite recursion in \S 18.

Perhaps one reason why constructive ordinal exponentiation has so far been underdeveloped is that the operation is somewhat non-intuitive, even in a classical setting.
For example, for the first infinite ordinal \(\omega = \sup_{n : \Nat}n\), continuity in the exponent (cf.~the specification \eqref{eq:intro-spec}) implies that \(2^\omega = \omega\) --- an equation very different from what one might expect from, for example, cardinal exponentiation.
This in turn rules out an understanding of ordinal exponentiation in terms of category theoretic exponentials: even though the collection of ordinals forms a posetal category where the morphisms are functions with additional structure, exponentiation is not right adjoint to multiplication. 
Even worse, as we will prove in \cref{thm:no-go-exp}, any exponentiation operation satisfying the specification \eqref{eq:intro-spec} is inherently non-constructive: such an operation can be shown to exist if and only if the law of excluded middle holds.
However, we will show that we can still define exponentiation \(\alpha^\beta\) whenever \(\alpha \geq 1\).

In this paper, we present
two approaches to constructive ordinal exponentiation:
\begin{enumerate}[label=\Roman*]
\item\label{item:abstract-def-of-exp} An abstract construction based on suprema of ordinals and transfinite recursion, which can be proven to satisfy the specification \eqref{eq:intro-spec} when \(\alpha \geq 1\), i.e., when \(\alpha\) has a least element (\cref{abstract-exp-satisfies-spec}).
\item\label{item:concrete-def-of-exp} A concrete construction via decreasing lists, inspired by Sierpi\'nski's~\cite{sierpinski} classical construction. We show that it satisfies the specification \eqref{eq:intro-spec} when \(\alpha\) has a \emph{trichotomous} least element, i.e., an element \(\bot \in \alpha\) such that either \(x = \bot\) or \(\bot < x\) for every element \(x \in \alpha\) (\cref{concrete-exp-satisfies-spec}).
\end{enumerate}

These two approaches have different advantages.
For example, the abstract construction
allows convenient proofs of algebraic laws such as \(\alpha^{\beta + \gamma} = \alpha^\beta \times \alpha^\gamma\) using the universal property of the supremum as a least upper bound, while the concrete definition using decreasing lists is easily seen to preserve properties such as trichotomy and decidable equality.
Our main theorem is that the two approaches are actually equivalent (\cref{abstrexp-listexp-coincide}).
This allows us to freely switch between the representations, not unlike how it is done in classical ordinal theory, and obtain various further results.
In a classical setting, where the specification \eqref{eq:intro-spec} fully defines exponentiation, our both constructions are necessarily equivalent to the usual definition by case distinction, and everything we do works for this standard definition.
Therefore, our development is a generalization of, rather than an alternative to, the existing theory of classical ordinal exponentiation.

\subsection{Constructive Foundations}\label{sec:foundations}

There is a variety of constructive set theories and constructive type theories, and it is open whether and how we can translate between them in general.
Much of mathematics is sufficiently foundation-agnostic to work in many different settings, and the core ideas presented in this paper are no exception; it should in principle be possible to carry them out in different constructive settings, as long as appropriate versions of certain basic constructions are available, cf.~\cref{subsect:setting}%
\footnote{With respect to foundations, we believe that the status of our constructions is similar to Grayson's work on constructive well-orderings, of which the very first sentence reads: \emph{The constructions and proofs of this paper are to be understood as taking place in some kind of basic set theory or type theory, based on intuitionistic logic}~\cite{Grayson1982}.}.
However, as it often happens, the concrete formulation of the results will depend on the chosen foundation. As an example, if we work in a constructive set theory, it would be absurd to expect arithmetic laws such as $(\alpha + \beta) + \gamma = \alpha + (\beta + \gamma)$ to hold as equalities on the nose; as each ordinal has an underlying set, and the best we can possibly hope for is an order-preserving isomorphism between sets.
Depending on the concrete setting, even this may be too ambitious; for the constructions to go through, sets may have to be replaced by setoids and isomorphism of sets has to be weakened further to an appropriate notion of setoid isomorphism, and so on.

Notwithstanding the foundation-agnosticism claimed above, and in order to formulate precise statements, we choose a concrete foundation for the current work, namely \emph{homotopy type theory (HoTT)}~\cite{HoTTBook}.
The benefit of this foundation is that all standard notions work out of the box. Sets can be taken to be the usual sets (or \emph{h-sets}) of homotopy type theory, and all equalities can be interpreted as the internal identity type, as usual written using the standard equality symbol $=$.
Essentially thanks to univalence, there is no need to replace sets with setoids or equality with setoid isomorphism. While something along these lines will happen automatically if homotopy type theory is interpreted using models in other foundations, this is not a concern for the purposes of the current work.

From this point of view, homotopy type theory offers a convenient framework for our development.
Especially the abstract construction of exponentials (Approach \ref{item:abstract-def-of-exp} above) benefits, as it makes use of set quotients and the univalence axiom; the concrete list-based construction (Approach \ref{item:concrete-def-of-exp}) does not rely on these concepts for its definition, but still requires them to prove its properties.
Then, after proving the equivalence of both formulations,
univalence (representation independence) allows us to transport properties along this equivalence and get the best of both worlds.
This may be seen as an example where univalence helps for work that happens purely on the level of sets.

A further benefit of choosing a Martin-L\"of style type theory as our concrete foundation is that we were able to fully mechanize all results of this paper in the proof assistant Agda~\cite{agda}.
This again would have been possible in many flavors of type theory, but homotopy type theory makes the development very natural and, we hope, readable.
The details of this mechanization are given below in \cref{subsect:agda}.

We believe that the reader will be able to adapt the results in this paper to other constructive foundations of interest such as the Calculus of Inductive Constructions/Rocq~\cite{Rocq} or a version of constructive set theory, as long as the foundation allows the basic constructions of predicative Martin-L\"of style type theory (cf.~\cref{subsect:setting}).
As discussed above, the canonical way to remove the reliance on homotopy type theory principles is to weaken the notion of equality and switch to setoid isomorphisms. This would lead to additional boilerplate, but we do not anticipate any further difficulties.

\subsection{Related Work}

\subsubsection*{Ordinals in homotopy type theory}

In the context of homotopy type theory, the definition of ordinals as transitive, extensional, wellfounded orders was suggested in the Homotopy Type Theory Book~\cite{HoTTBook}.
Escard\'o~\cite{TypeTopologyOrdinals} developed a substantial Agda formalization of ordinal theory with many new results, on which our formalization is based.
We are also building on our own previous work~\cite{KNFX2023},
where we gave implementations of addition and multiplication for the ordinals we consider in this paper, but notably no implementation of exponentiation.

In our previous work~\cite{KNFX2023,krausNordvallforsbergXu:ordinals}, we further compared the notion of ordinals considered here with other notions of constructive ordinals, namely Cantor Normal Forms~\cite{Cantor1897,troelstra-schwichtenberg} and Brouwer trees~\cite{church:1938,kleene:notation-systems}.
In a classical setting, these are simply different representations for ordinals, and one can (as long as it makes sense size-wise) convert between them. Constructively, the different notions split apart as there is a trade-off between decidability and the ability to calculate unrestricted limits or suprema. While Cantor Normal Forms enjoy excellent decidability properties, they only allow the calculation of finite limits or suprema. For Brouwer trees, some properties are decidable, but only very specific infinite limits can be calculated. The ordinals considered in this paper enjoy no decidable properties but allow the formation of arbitrary (small) suprema. They are also the most general ones, in the sense that Cantor Normal Forms and Brouwer trees (as well as other notions of ordinals) can be viewed as a subtype of the type of transitive extensional wellfounded orders~\cite{KNFX2023}.

\subsubsection*{Exponentiation as lists}

In a classical setting, the realization of ordinal exponentiation as finitely supported functions is well known.
Further, in such a setting, the implementation of exponentiation \(\omega^\beta\) with base~\(\omega\) as decreasing lists of ordered pairs \((n, b)\) is also known to proof theorists working on ordinal analysis.
For example, we first came across this idea in Setzer's survey article~\cite{Setzer2004}.
Hancock~\cite{Hancock2008} discusses the cases of \(2^\beta\) and \(\omega^\beta\), but admits that the definition in terms of finite support is ``rather non-constructive, and admittedly rather hard to motivate''.

A construction of ordinal exponentiation as decreasing lists for an arbitrary base \(\alpha\) was suggested by Grayson in his PhD thesis~\cite{Grayson1978}, with the relevant part published as~\cite{Grayson1982}.
Compared to our construction, Grayson's does not include the condition that the base has a trichotomous least element.
Grayson's construction, which comes without any proofs, is thus significantly more general --- but also, unfortunately, incorrect in this generality. The special case that works is equivalent to our suggestion (cf.~\cref{sec:grayson}).
Grayson further claims that his construction satisfies a ``recursion equation'',
which cannot hold in the generality claimed. However, when read as a recursive
\emph{definition}, it yields precisely our abstract construction with the caveat
that Grayson uses setoids which we avoid thanks to univalence.

\subsubsection*{Mechanization of ordinal exponentiation}

As far as we know, this work is the first that comes with a mechanization of constructive ordinal exponentiation in a proof assistant.
However, others have done so in a classical logic, where the definition is considerably more straightforward.
For example, the Lean mathematical library~\cite{mathlib} defines ordinal exponentiation by the case distinction \eqref{eq:intro-spec}, while Blanchette, Popescu and Traytel~\cite{Blanchette2014} use classical logic to encode exponentials as functions with finite support in Isabelle/HOL.
However, we emphasize that we do not give a different, or ``yet another'' definition in this paper --- in a classical setting, it is not hard to show that all these definitions coincide, including ours.
Rather, we give a definition that improves on the existing ones, in the sense that it is well behaved in the absence of classical logic and (depending on the setting) may be better from the point of view of computation.

\subsection{Outline of the Paper}

We start by recalling how ordinals are defined in homotopy type theory in \cref{sec:preliminary}, where we also discuss their basic properties and introduce a precise specification of exponentiation.
In \cref{sec:abstract-approach}, we give a first constructive implementation of exponentiation \(\abstrexp{\alpha}{\beta}\) using an abstract approach via suprema, and prove it well behaved when the base is positive, i.e., when \(\alpha \geq 1\) (\cref{abstract-exp-satisfies-spec}).
In \cref{sec:concrete-approach}, we give a second constructive implementation via decreasing lists, which is more concrete.
We show that it is well behaved when the base has a trichotomous least element (\cref{concrete-exp-satisfies-spec}).
In \cref{sec:equivalence}, we then compare the two approaches:
in \cref{abstrexp-listexp-coincide}, we show that they are in fact equivalent when the base has a trichotomous least element (and thus in particular is positive).
Further, we explain how the concrete implementation in the form of decreasing lists can be seen as a type of normal forms for the abstract implementation (\cref{related-denotations}), and discuss Grayson's~\cite{Grayson1982} similar suggestion of ordinal exponentiation via decreasing lists and the connection with our work.
In \cref{sec:cancellation}, we consider cancellation properties such as $\alpha + \beta \leq \alpha + \gamma$ implying $\beta \leq \gamma$. While these are straightforward to prove using classical logic, we give a more abstract argument which is also valid constructively (\cref{thm:cancellation}).
In \cref{sec:taboos}, we explore what is not possible to achieve constructively for ordinal exponentiation, by giving several properties that hold in the classical theory of ordinals, but are in fact all constructively equivalent to the law of excluded middle.
For example, it is not possible constructively to define operations such as subtraction, division and logarithm.
In \cref{sec:approximation}, however, we construct ``approximate'' versions of such operations (\cref{thm:approximation}).

\subsection{History of this Paper}

This paper is an extended journal version of a paper presented at LICS 2025~\cite{exponentiation-lics}. Compared to the conference version, we have emphasised that our development should be relevant more generally for constructive approaches to ordinals, and is not just limited to homotopy type theory. In \cref{sec:preliminary}, we also discuss specifications for addition and multiplication, and the newly added \cref{sec:cancellation,sec:approximation} discuss cancellation properties and approximate inverse functions such as approximate subtraction, division and logarithm for ordinal arithmetic more generally.

\subsection{Formalization}\label{subsect:agda}

All our results have been formalized in the Agda proof assistant, and type check using Agda 2.8.0.
Our formalization~\cite{formalization} is building on, and part of the TypeTopology development~\cite{TypeTopologyOrdinals} by Escard\'o and contributors.
Our Agda code is available on GitHub at
\url{https://github.com/martinescardo/TypeTopology}.
A~browseable HTML rendering of all results in this paper is available at \url{\baseurl}.
Throughout the paper, the symbol \formalized{} is a clickable link to the
corresponding machine-checked statement in that HTML file.

We found Agda extremely valuable when developing the decreasing list approach to
exponentiation, as its intensional nature makes for rather combinatorial
arguments that we found challenging to rigorously check on paper.

\section{Ordinals in a Constructive Setting}\label{sec:preliminary}

In this section, we introduce ordinals in the setting of homotopy type theory, discuss the well known constructions of addition and multiplication, and what we expect exponentiation to satisfy.

\subsection{Setting, and Notation}\label{subsect:setting} 

As explained in \cref{sec:foundations}, our work can be interpreted in many constructive foundations, but concretely, we work in intensional Martin-L\"of type theory extended with the univalence axiom and set quotients.
We also use function extensionality tacitly, as it follows from the univalence axiom.
In particular, everything we do works in homotopy type theory as introduced
in the ``HoTT book''~\cite{HoTTBook}.
Some of the constructions work in a more minimalistic setting, and our Agda formalization (but not this paper) tracks assumptions explicitly.
The univalence axiom is not only used to conveniently transport properties between equivalent representations, but is also crucial for proving equations of ordinals.

Regarding notation, we mostly follow the aforementioned book.
In particular, the identity type is denoted by \(a = b\), while definitional (a.k.a.\ judgmental) equality is written \(a \equiv b\).
In the paper, we keep universe levels implicit; they are explicitly tracked in the Agda formalization.
Recall that a type \(A\) is called a \emph{proposition} if it has at most one inhabitant, i.e., if \(a = b\) holds for all \(a, b : A\).
A~type \(A\) is called a \emph{set} if all its identity types $a = b$ for \(a, b : A\) are propositions.
If \(P : A \to \Prop\) is a family of propositions, then we write \(\Subtype{x}{A}{P(x)}\) for the \emph{subtype} of elements of \(A\) satisfying \(P\), i.e., for the type whose elements are pairs \((a, p)\) where \(a : A\) and \(p : P(a)\).

We write \(\LEM\) for the law of excluded middle, which claims that \(P\) or not \(P\) holds for every proposition \(P\), i.e., \(\LEM \defeq \forall(P : \Prop).P+\neg P\).
Since we work constructively, we do not assume \(\LEM\), but will explicitly flag its appearance.
Our main use of \(\LEM\) is to show that other assumptions imply it, and thus have no chance of being constructively provable --- they are constructive taboos.

\subsection{Constructive Ordinals}\label{sec:ord-in-hott}

Cantor's original definition of ordinal numbers could, in modern language, be phrased as \emph{sets equipped with a wellfounded trichotomous relation}~\cite{cantor1883ueber}. %
Cantor's formulation of \emph{wellfoundedness} takes the form that \emph{every nonempty subset has a least element}, while \emph{trichotomous} means that we have $\forall x,y. (x < y) \vee (x = y) \vee (x > y)$.
In a constructive setting, these notions are not as well-behaved as one would like, and naturally occurring examples often lack trichotomy (see Taylor~\cite{taylor1996intuitionistic} for a further argument).
A set-theoretic constructive formulation due to Powell~\cite{Powell1975} is that of \emph{hereditarily transitive sets} (that is, sets $\alpha$ such that $x \in y \in z \in \ldots \in \alpha$ implies $z \in \alpha$ and $x \in z$).
Grayson~\cite{Grayson1978,Grayson1979} preferred the equivalent formulation of a set with a binary relation that is transitive, extensional, and wellfounded. %
This formulation works very well, and was more recently developed in \emph{homotopy type theory}~\cite[\S10]{HoTTBook}; see also Escard\'o's work in Agda~\cite{TypeTopologyOrdinals}.
\emph{Transitivity} means, as always, simply $x < y \to y < z \to x < z$, and \emph{extensionality} means $(\forall z. z < x \leftrightarrow z <y) \to x = y$.
As observed by Escard\'o~\cite[{\texttt{Ordinals.Type}}]{TypeTopologyOrdinals}, extensionality implies that the underlying type of an ordinal is a set.
A constructively well-behaved notion of wellfoundedness was given by Aczel~\cite{Aczel1977} as an inductive predicate: an element is \emph{accessible} if all elements below it are accessible, and the order is wellfounded if all elements are accessible. %
Wellfoundedness is equivalent to transfinite induction: given a type family \(P\) over~\(\alpha\),
to prove \(P(x)\) for all \(x : \alpha\), it suffices to prove
\(P(x)\) for all \(x : \alpha\) assuming that \(P(y)\) already holds for \(y < x\), i.e.,
\[\forall(x : \alpha).\big(\forall(y : \alpha).y < x \to P(y)\big) \to P(x)\] implies \(\forall(x : \alpha).P(x)\).

Simple examples of ordinals include the finite types \(\mathbf{n} = \{0 < 1 < \ldots < n - 1\}\), and the infinite ordinal \(\omega\) with underlying type \(\Nat\) and order relation given by the usual order on \(\Nat\).
Any proposition $P$ can be viewed as an ordinal in a trivial way, where the order is chosen to be constantly empty.
Moreover, the type \(\Omega\) of truth values, i.e., the subtype of the universe containing exactly the propositions, is an ordinal, where \(P < Q\) is defined to mean \(\neg P \wedge Q\).

There are many alternative constructive approaches to ordinals, such as ordinal notation systems~\cite{XNF2020}, Brouwer trees~\cite{eremondi2024}, or wellfounded trees with finite or countable branchings~\cite{MartinLof1970,Coquand2023}, to name a few, but the definition of an ordinal as a set with a transitive, extensional, wellfounded order is the one we use in this paper, as it is the most general one.

Given an element \(a\) of an ordinal \(\alpha\), the \emph{initial
segment} \(\alpha \initseg a\) determined by \(a\) is the ordinal
with underlying type
\[
  \alpha \initseg a \defeq \Subtype{x}{\alpha}{x < a}
\]
and with order relation inherited from \(\alpha\)~\cite[after Ex~10.3.18]{HoTTBook}.  This
basic construction will prove absolutely fundamental in this paper, and we will
often work with ordinals by characterizing their initial segments, see~\cref{initial-segment-of-sum,initial-segment-of-product,initial-segment-of-abstrexp,initial-segment-of-sup,initial-segment-of-listexp}.
It is based on the fact that an ordinal (or element of an ordinal) is fully characterized by what its predecessors are, by extensionality.
The predecessors of an ordinal $\alpha$ are, by definition, the initial segments of $\alpha$.

\subsection{The Ordinal of Ordinals}\label{subsec:ord-of-ords}
An \emph{ordinal equivalence} from \(\alpha\) to \(\beta\) is given by
an equivalence $f : \alpha \to \beta$ between the underlying types
that is order preserving, and whose inverse $f^{-1} : \beta \to \alpha$
is also order preserving (a property that follows classically, but not constructively).
We note that any ordinal equivalence is also order reflecting.
Thanks to univalence, the identity type \(\alpha = \beta\) is equivalent to the
type of ordinal equivalences from \(\alpha\) to
\(\beta\)~\cite[{\texttt{Ordinals.Equivalence}}]{TypeTopologyOrdinals}, allowing us to
construct identifications between ordinals via equivalences.

The type \(\Ord\) of small ordinals is itself a (large)
ordinal~\cite[Thm.~10.3.20]{HoTTBook} by setting
\[
  \alpha < \beta \defeq \Subtype{b}{\beta}{\alpha = \beta \initseg b},
\]
i.e., \(\alpha\) is strictly smaller than \(\beta\) if \(\alpha\) is an initial segment of~\(\beta\) determined by some (necessarily unique) element \(b : \beta\); it is worth noting that this generalizes the order on \(\Omega\) discussed above.
We also note that proving extensionality for this order uses univalence.

Moreover, \(\Ord\) forms a poset by defining \(\alpha \leq \beta\)
as any of the following equivalent conditions~\cite{TypeTopologyOrdinals} (see
also~\cite[Prop.~9]{dJKNFX2023}):
\begin{enumerate}[label=(\roman*)]
\item\label{poset-non-strict-order} \(\gamma < \alpha\) implies \(\gamma < \beta\) for all small ordinals \(\gamma\);
\item\label{poset-init-seg} for every \(a : \alpha\), there exists a (necessarily
  unique) \(b : \beta\) with \(\alpha \initseg a = \beta \initseg b\);
\item\label{poset-simulation}
  there is a simulation $f : \alpha \to \beta$,
\end{enumerate}
where a \emph{simulation} is an order preserving function such that for all
\(x : \alpha\) and \(y < f(x)\) we have \(x' < x\) with \(f(x') = y\).
Note that \(\leq\) is proposition-valued (in particular,
\ref{poset-simulation} is a proposition: there is at most one simulation
between any two ordinals). Moreover, the relation is antisymmetric: if
\(\alpha \le \beta\) and \(\beta \le \alpha\), then \(\alpha = \beta\), a fact
that we will often use tacitly.
The equivalence between \ref{poset-init-seg} and \ref{poset-simulation} is
due to the first fact in the following lemma.
\begin{lemma}[{\cite[{\texttt{Ordinals.Maps}}]{TypeTopologyOrdinals}} \flink{Lemma-1}]%
  \label{simulation-facts}\leavevmode
  \begin{enumerate}[label=(\roman*)]
  \item\label{simulations-preserve-initial-segments} Simulations preserve initial
    segments: if \(f : \alpha \to \beta\) is a simulation, then
    \(\beta \initseg f \,a \;=\; \alpha \initseg a\).
  \item Simulations are injective and order reflecting.
  \item Surjective simulations are precisely ordinal equivalences.
  \end{enumerate}
\end{lemma}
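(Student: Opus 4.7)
The three parts are naturally linked, and I would prove them not in the stated order but rather by first establishing injectivity and order-reflection (part (ii)), then deriving the initial segment computation (part (i)) from these, and finally obtaining (iii) as a corollary. The key insight is that the simulation property says $f$ is ``locally surjective below'' each $f(x)$, which, combined with extensionality of the domain, pins down $f$ uniquely.

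For injectivity in (ii), I would proceed by transfinite induction on \(x : \alpha\), proving the statement $\forall y. f(x) = f(y) \to x = y$. Assume the induction hypothesis for all $x' < x$ and fix $y$ with $f(x) = f(y)$. By extensionality of $\alpha$, it suffices to show $z < x \leftrightarrow z < y$ for every $z : \alpha$. For the forward direction, from $z < x$ we get $f(z) < f(x) = f(y)$, so the simulation property yields some $z' < y$ with $f(z') = f(z)$; the induction hypothesis applied at $z$ (which is $< x$) forces $z = z'$, hence $z < y$. The reverse direction is symmetric, using the simulation property at $y$ to produce $z' < x$ and then invoking the induction hypothesis at that $z' < x$ to conclude $z' = z$. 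Order reflection is then a direct consequence: if $f(x) < f(y)$, the simulation property gives $x' < y$ with $f(x') = f(x)$, and injectivity forces $x' = x$.

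For (i), I would exhibit an ordinal equivalence $\tilde{f} : \alpha \initseg a \to \beta \initseg f(a)$ sending $(x, p : x < a)$ to $(f(x), \text{order-preservation of }f\text{ applied to }p)$ and use univalence. This map is well-defined and order-preserving since $f$ is; it is surjective because the simulation property applied to $a$ and any $y < f(a)$ gives a preimage $x < a$; and it is order reflecting by part (ii). Hence $\tilde{f}$ is an ordinal equivalence, and univalence delivers the required equality of ordinals.

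For (iii), the forward direction is immediate: an ordinal equivalence is bijective (hence surjective) and a simulation, since for $y < f(x)$ one can take $x' = f^{-1}(y)$, which satisfies $x' < x$ because $f^{-1}$ is order-preserving. For the converse, a surjective simulation is injective by (ii), hence a bijection; its inverse is order-preserving because $b_1 < b_2$ with $b_i = f(a_i)$ yields, via the simulation property at $a_2$, some $a' < a_2$ with $f(a') = b_1$, and injectivity forces $a' = a_1$. The main subtlety I anticipate is in (ii): one must be careful to formulate the inductive hypothesis strongly enough (quantifying over all $y$ at each stage, not just one fixed $y$) so that it can be invoked symmetrically on both sides of the extensionality argument; a naive formulation would fail precisely because the witnesses $z'$ produced by the simulation property live in $\alpha$ and are not a priori comparable to $x$.
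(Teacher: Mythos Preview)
Your proof is correct and follows the standard approach. Note that the paper itself does not prove this lemma; it is merely cited from Escard\'o's \textsc{TypeTopology} library, so there is no in-paper proof to compare against.

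One small slip in your write-up of (ii): in the backward direction of the extensionality argument you say ``using the simulation property at $y$ to produce $z' < x$''. What you actually need there is the simulation property at $x$: from $z < y$ you obtain $f(z) < f(y) = f(x)$, and then the simulation property applied to $x$ and $f(z) < f(x)$ yields $z' < x$ with $f(z') = f(z)$, after which the induction hypothesis at $z'$ gives $z' = z$. With this correction the argument is sound, and your remark about the shape of the induction hypothesis (quantifying over all $y$) is exactly the point that makes both directions go through.
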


In a classical metatheory, every order preserving function induces a simulation, so that classically \(\alpha \leq \beta\) holds if and only if there exists an order preserving function \(\alpha \to \beta\).
However, as we will see in \cref{sec:taboos}, this is not true constructively.
Compared to mere order preserving maps, simulations between ordinals are rather
well behaved (as witnessed by \cref{simulation-facts}).

An element \(\bot\) of an ordinal \(\alpha\) is \emph{least} if \(\bot \preceq a\) for all \(a\) in \(\alpha\), where \(x \preceq y\) holds if \(u < x\) implies \(u < y\) for all \(u : \alpha\).
Note that for the ordinal of ordinals \(\Ord\), the order \(\alpha \preceq \beta\) coincides with the order \(\alpha \leq \beta\) via characterization \ref{poset-non-strict-order} above.
We remark that \(\bot\) is the least element of \(\alpha\) if and only if there
are no elements \(x\) in \(\alpha\) with \(x < \bot\), equivalently if and only
if the map \(\star \mapsto \bot : \One \to \alpha\) is a simulation.

\subsection{Suprema}
The poset of (small) ordinals has suprema (least upper bounds) of arbitrary
families indexed by small types~\cite[Thm.~5.8]{deJongEscardo2023}.
Given a family of ordinals \(F_{\sbullet} : I \to \Ord\), its supremum
\(\sup F_{\sbullet}\) can be constructed as the total space
\(\Sigma (i:I). \, F_i\), quotiented by the relation that identifies \((i,x)\)
and \((j,y)\) if \(F_i \initseg x = F_j \initseg y\), see
also~\cite[\S2.5]{Grayson1982}.
The order relation is as follows: the equivalence class of \((i,x)\) is strictly
below that of \((j,y)\) if \(F_i \initseg x < F_j \initseg y\).
As shown in~\cite[Thm.~5.8]{deJongEscardo2023}, we have a simulation
\([i,-] : F_i \leq \sup F_{\sbullet}\) for each \(i : I\).
Moreover, these maps are jointly surjective and can be used to characterize
initial segments of the supremum~\cite[Lem.~15]{dJKNFX2023}: for every
\(y : \sup F_{\sbullet}\) there exists some \(i : I\) and \(x : F_i\)
such that
\begin{fequation}{Eq-1}\label{initial-segment-of-sup}
  y = [i,x] \text{ and } (\sup F_{\sbullet}) \initseg y = F_i \initseg x.
\end{fequation}
We stress that the existence of the pair \((i,x)\) is expressed using
\(\exists\) (i.e., the propositional truncation of \(\Sigma\)).
We will only use \cref{initial-segment-of-sup} to prove propositions, so that we
can use the universal property of the truncation to obtain an
actual pair.

We write \(\alpha \vee \beta\) for the binary join of \(\alpha\) and \(\beta\),
i.e., for the supremum $
\sup F_{\sbullet}$ of the two-element family \(F_{\sbullet} : \Two \to \Ord\)
with \(F_0 = \alpha\) and \(F_1 = \beta\). Throughout
the rest of the paper, we will moreover find it convenient to represent
\(\alpha \vee \sup_{i : I} F_{i}\) as \(\sup F'_{\sbullet}\), where
\(F' : {\One + I} \to \Ord\) is given by \(\inl \star \mapsto \alpha\) and
\(\inr i \mapsto F\,i\).

An operation on ordinals \(F : \Ord \to \Ord\) is \emph{continuous} if it preserves suprema, i.e., if \(F(\sup G_\bullet) = \sup F(G_\bullet)\) for every \(G_\bullet : I \to \Ord\). Continuous operations are automatically monotone, since \(\alpha \leq \beta\) if and only if \(\alpha \vee \beta = \beta\). Thus if \(F\) is continuous and \(\alpha \leq \beta\), we have \(F(\alpha) \vee F(\beta) = F(\alpha \vee \beta) = F(\beta)\) and hence \(F(\alpha) \leq F(\beta)\).

\subsection{Addition and Multiplication}
We start by reviewing the construction of addition and multiplication of
ordinals via coproducts and cartesian products, and compare with
their classical definitions via cases.

As is well known (cf.~\cite[{\texttt{Ordinals.Arithmetic}}]{TypeTopologyOrdinals} and
\cite[Thm.~61]{KNFX2023}), addition of two ordinals is given by taking the
coproduct of the underlying types, keeping the original order in each component,
and additionally requiring that all elements in the left component are smaller
than anything in the right component.
Initial segments of a sum of ordinals can be calculated as follows:
\begin{fequation}{Eq-2}\label{initial-segment-of-sum}
  \begin{split}
  &{(\alpha + \beta)} \initseg \inl a \;=\; \alpha \initseg a; \\
  &{(\alpha + \beta)} \initseg \inr b \;=\; \alpha + (\beta \initseg b).
  \end{split}
\end{fequation}

Multiplication of two ordinals is given by equipping the cartesian product of
the underlying types with the reverse lexicographic order, i.e.,
\((a' , b') < (a , b)\) holds if either \(b' < b\), or \(b' = b\) and \(a' < a\).
Initial segments of products of ordinals can be calculated as follows:
\begin{fequation}{Eq-3}\label{initial-segment-of-product}
  {(\alpha \times \beta)} \initseg (a , b) \;=\;
  \alpha \times (\beta \initseg b) + (\alpha \initseg a).
\end{fequation}
If \(b' < b\), the ordinal equivalence witnessing
\eqref{initial-segment-of-product} sends \((a' , b')\) to \(\inl(a' , b')\), and
if \(b' = b\) and \(a' < a\), then \((a' , b')\) is sent to \(\inr a'\).
Note that neither addition nor multiplication are commutative, since e.g., \(\omega + \One \neq \One + \omega\) and \(\omega \times \Two \neq \Two \times \omega\).

\begin{lemma}[\flink{Lemma-2}]\label{+-x-right-monotone}
  Addition and multiplication are monotone in their right arguments, i.e.,
  \(\beta \le \gamma\) implies \({\alpha + \beta} \le {\alpha + \gamma}\) and
  \({\alpha \times \beta} \le {\alpha \times \gamma}\).
\end{lemma}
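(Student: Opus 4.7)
The plan is to invoke the simulation characterization of \(\leq\) from condition \ref{poset-simulation} in \cref{subsec:ord-of-ords}: suppose \(f : \beta \to \gamma\) is a simulation witnessing \(\beta \leq \gamma\), and then construct explicit simulations \(\alpha + \beta \to \alpha + \gamma\) and \(\alpha \times \beta \to \alpha \times \gamma\) from \(f\).

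For addition, the obvious candidate is \(\id + f\), i.e., \(\inl a \mapsto \inl a\) and \(\inr b \mapsto \inr(f\,b)\). Order preservation is a straightforward case split on the three forms an inequality in \(\alpha + \beta\) can take (\(\inl\)-\(\inl\), \(\inl\)-\(\inr\), or \(\inr\)-\(\inr\)), using that \(f\) is order preserving for the last case. For the simulation property, given \(y < (\id + f)(\inl a) = \inl a\) in \(\alpha + \gamma\), by the order on \(\alpha + \gamma\) we must have \(y = \inl a'\) with \(a' < a\), and then \(\inl a' < \inl a\) in \(\alpha + \beta\). Given \(y < (\id + f)(\inr b) = \inr(f\,b)\), either \(y = \inl a\) for some \(a : \alpha\) (and \(\inl a < \inr b\) in \(\alpha + \beta\)), or \(y = \inr c\) with \(c < f\,b\); the latter case is precisely where we use that \(f\) itself is a simulation to obtain \(b' < b\) with \(f\,b' = c\), so that \(\inr b' < \inr b\) in \(\alpha + \beta\) is the required preimage.

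For multiplication, the analogous candidate is \((a, b) \mapsto (a, f\,b)\). Order preservation again follows by unfolding the reverse lexicographic order and using that \(f\) is order preserving. For the simulation property, given \((a', c) < (a, f\,b)\) in \(\alpha \times \gamma\), we split on the defining disjunction: if \(c < f\,b\), then the simulation property of \(f\) yields \(b' < b\) with \(f\,b' = c\), giving the preimage \((a', b') < (a, b)\); if \(c = f\,b\) and \(a' < a\), then \((a', b) < (a, b)\) maps to \((a', f\,b) = (a', c)\).

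Neither step is really an obstacle; the mild subtlety is ensuring one is honest about invoking the simulation property of \(f\) in precisely those cases where the new element lies in the ``\(\gamma\)-part'' of the target (i.e., on the right of the coproduct, or in the second component of the product). Everything else is pure unfolding of the orders described in the excerpt, and the resulting simulations then witness \(\alpha + \beta \leq \alpha + \gamma\) and \(\alpha \times \beta \leq \alpha \times \gamma\) via characterization \ref{poset-simulation}.
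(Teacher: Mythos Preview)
Your proof is correct and follows exactly the same approach as the paper: construct the simulations \(\inl a \mapsto \inl a\), \(\inr b \mapsto \inr(f\,b)\) and \((a,b) \mapsto (a,f\,b)\) from a given simulation \(f : \beta \leq \gamma\). You simply spell out the order-preservation and initial-segment checks that the paper leaves implicit.
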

\begin{proof}
  Given a simulation \(f : \beta \le \gamma\), we get
  \({\alpha + \beta} \le {\alpha + \gamma}\) via
  \(\inl a \mapsto \inl a\) and \(\inr b \mapsto \inr(f\,b)\) as well as
  \({\alpha \times \beta} \le {\alpha \times \gamma}\) by mapping \((a,b)\) to
  \((a,f\,b)\).
\end{proof}

In the rest of this section, we study how addition and multiplication interact with suprema, and revisit the classical definitions of these operations by cases.

\subsubsection{Defining equations for ordinal multiplication}
\label{sec:eqs-for-multiplication}

Let us now reconsider the specification \eqref{eq:addition-multiplication-spec}
of ordinal multiplication from the introduction. It classically defines
multiplication via the equations \(\alpha \times 0 = 0\),
\(\alpha \times (\beta + 1) = (\alpha \times \beta) + \alpha\) and
\({\alpha \times \lambda} = \sup_{\beta < \lambda}{(\alpha \times \beta)}\) for
limit ordinals \(\lambda\).
We first note that ordinal multiplication defined using cartesian products actually satisfies the limit ordinal clause of the specification for arbitrary suprema:

\begin{lemma}[\flink{Lemma-3}]\label{multiplication-is-continuous}
  Ordinal multiplication is continuous in its right argument, we have
  \(\alpha \times \sup F_{\sbullet} = \sup(\alpha \times F_{\sbullet})\).
\end{lemma}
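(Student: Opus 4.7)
The plan is to establish the equality by antisymmetry of $\leq$, proving the two inclusions separately, with the right-to-left direction being essentially free and the left-to-right direction being the interesting one.

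For $\sup(\alpha \times F_{\sbullet}) \leq \alpha \times \sup F_{\sbullet}$, I would use the universal property of the supremum. Since $F_i \leq \sup F_{\sbullet}$ is witnessed by the canonical simulation $[i,-]$ for each $i : I$, \cref{+-x-right-monotone} gives $\alpha \times F_i \leq \alpha \times \sup F_{\sbullet}$, and taking the supremum over $i$ yields the desired inequality.

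For the reverse inequality $\alpha \times \sup F_{\sbullet} \leq \sup(\alpha \times F_{\sbullet})$, I would use the characterization \ref{poset-init-seg} via initial segments: given $(a, y) : \alpha \times \sup F_{\sbullet}$, I must produce some $z : \sup(\alpha \times F_{\sbullet})$ with matching initial segments. Since $\leq$ is proposition-valued, I can extract via \eqref{initial-segment-of-sup} an index $i : I$ and an element $x : F_i$ with $y = [i, x]$ and $(\sup F_{\sbullet}) \initseg y = F_i \initseg x$. Taking $z \defeq [i, (a,x)]$, I compute using \eqref{initial-segment-of-product} that
\[
  (\alpha \times \sup F_{\sbullet}) \initseg (a, y)
  = \alpha \times (F_i \initseg x) + (\alpha \initseg a)
  = (\alpha \times F_i) \initseg (a, x),
\]
and by \cref{simulation-facts}\ref{simulations-preserve-initial-segments} applied to the simulation $[i,-] : \alpha \times F_i \to \sup(\alpha \times F_{\sbullet})$, this equals $\sup(\alpha \times F_{\sbullet}) \initseg z$, as required.

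The main obstacle is the bookkeeping in the second direction, namely correctly combining the initial-segment formulas for products \eqref{initial-segment-of-product} and for suprema \eqref{initial-segment-of-sup}, together with the fact that the canonical maps into the supremum are simulations. The use of propositional truncation in \eqref{initial-segment-of-sup} is harmless here because the goal is a proposition, so the existentially quantified pair $(i,x)$ can be used directly.
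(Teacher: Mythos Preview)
Your proof is correct and matches the paper's approach essentially line for line: antisymmetry, with the right-to-left inequality from monotonicity and the supremum's universal property, and the left-to-right inequality by matching initial segments via \eqref{initial-segment-of-product} and \eqref{initial-segment-of-sup}. The only difference is that you spell out the intermediate step $(\alpha \times F_i) \initseg (a,x)$ and invoke \cref{simulation-facts}\ref{simulations-preserve-initial-segments} explicitly, whereas the paper collapses this into a single equality.
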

\begin{proof}
  By the least upper bound property of \(\sup(\alpha \times F_{\sbullet})\) and
  \cref{+-x-right-monotone}, it suffices to show that
  \({\alpha \times \sup F_{\sbullet}} \le \sup(\alpha \times F_{\sbullet})\).
  Given \(a : \alpha\) and \(y : \sup F_{\sbullet}\), we use
  \cref{initial-segment-of-product,initial-segment-of-sup} to get the existence
  of \(i : I\) and \(x : F_i\) with
  \begin{align*}
    (\alpha \times \sup F_{\sbullet}) \initseg (a,y)
    &= \alpha \times (\sup F_{\sbullet} \initseg y) + (\alpha \initseg a)\\
    &= \alpha \times (F_i \initseg x) + (\alpha \initseg a)\\
    &= \sup (\alpha \times F_{\sbullet}) \initseg [i,(a,x)]
  \end{align*}
  hence \(\alpha \times \sup F_{\sbullet} \le \sup(\alpha\times F_{\sbullet})\).
\end{proof}

The specification \eqref{eq:addition-multiplication-spec} is based on the classical characterization of ordinals as
either successors or limits, where successor ordinals are of the form
\(\beta = \gamma + 1\) and limit ordinals are of the form
\(\lambda = \sup_{\gamma < \lambda}\,\gamma\).
This classification is not available constructively, but a weaker variant of it
is, where both cases are combined: every ordinal is (constructively) the
supremum of the successors of its predecessors:
\begin{lemma}[{\cite[\S2.5]{Grayson1982}, \cite[Ex.~7.19]{deJongEscardo2025}} \flink{Lemma-4}]%
  \label{ordinal-as-sup-of-successors}
  Every ordinal \(\beta\) satisfies the equation
  \(\beta = \sup_{b:\beta}(\beta \initseg b + \One)\).
\end{lemma}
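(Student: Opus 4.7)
The plan is to prove the equation by establishing the two inequalities $\beta \leq \sup_{b:\beta}(\beta \initseg b + \One)$ and $\sup_{b:\beta}(\beta \initseg b + \One) \leq \beta$, then invoking antisymmetry of $\leq$. The only tools required are the characterizations of initial segments of sums \eqref{initial-segment-of-sum} and of suprema \eqref{initial-segment-of-sup}, together with the equivalent formulations of $\leq$ from \cref{subsec:ord-of-ords}.

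For the inequality $\sup_{b:\beta}(\beta \initseg b + \One) \leq \beta$, I would use the universal property of the supremum as a least upper bound, which reduces the task to showing that $\beta \initseg b + \One \leq \beta$ for each $b : \beta$. Invoking characterization \ref{poset-simulation} of $\leq$, I would construct this inequality as an explicit simulation $f : \beta \initseg b + \One \to \beta$ defined by $f(\inl x) \defeq x$ (using the inclusion $\beta \initseg b \hookrightarrow \beta$) and $f(\inr \star) \defeq b$. Order preservation is immediate from the construction of the sum: elements of the left summand are strictly below $\inr \star$, matching the fact that each such $x$ satisfies $x < b$. The simulation condition is established by case analysis on the argument: given $y < f(\inl x) = x$ we have $y < b$ and hence $y : \beta \initseg b$ maps to itself under $\inl$, and given $y < f(\inr \star) = b$ the element $y$ again lies in $\beta \initseg b$, so $\inl y$ is the required preimage.

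For the reverse inequality $\beta \leq \sup_{b:\beta}(\beta \initseg b + \One)$, I would use characterization \ref{poset-init-seg}: it suffices to exhibit, for every $b : \beta$, an element of the supremum whose initial segment equals $\beta \initseg b$. The natural candidate is $[b,\inr \star]$. Writing $F_b \defeq \beta \initseg b + \One$, applying \eqref{initial-segment-of-sup} gives $\sup F_{\sbullet} \initseg [b, \inr \star] = F_b \initseg \inr \star$, and then \eqref{initial-segment-of-sum} yields $F_b \initseg \inr \star = \beta \initseg b + (\One \initseg \star) = \beta \initseg b + \mathbf{0} = \beta \initseg b$, as required.

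There is no serious obstacle; the computation is largely a matter of unfolding definitions and chaining the initial-segment equations. The only mildly delicate point is remembering that the element of the supremum realising $\beta \initseg b$ must have a nonempty component above it (hence the use of $\inr \star$ rather than the left summand), which is precisely why the $+\One$ appears in the statement.
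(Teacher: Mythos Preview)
Your proposal is correct and essentially matches the paper's proof. The first inequality is argued identically (the paper just states ``\(\beta \initseg b + \One \le \beta\)'' without spelling out the simulation). For the reverse direction there is a small difference in framing: the paper shows that \(\beta\) is the \emph{least} upper bound by taking an arbitrary upper bound \(\gamma\) with simulations \(g_b : \beta \initseg b + \One \le \gamma\) and observing that \(b \mapsto g_b(\inr\star)\) is a simulation \(\beta \le \gamma\), whereas you specialize to \(\gamma = \sup F_{\sbullet}\) with \(g_b = [b,-]\) and compute the initial segment at \([b,\inr\star]\) via \eqref{initial-segment-of-sup} and \eqref{initial-segment-of-sum}. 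The underlying observation --- that \(\inr\star\) in \(\beta \initseg b + \One\) represents \(b\) --- is the same in both arguments.
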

\begin{proof}
  For any \(b : \beta\) we have \(\beta \initseg b + \One \le \beta\), so
  \(\beta\) is an upper bound for the family. To see that it is the least,
  suppose that we have simulations
  \(g_b : {\beta \initseg b + \One} \le \gamma\) for all \(b : \beta\).
  Thus, for each \(b : \beta\), we have \(c_b : \gamma\) such that
  \({\beta \initseg b} = {(\beta \initseg b + \One)} \initseg \inr \star = {\gamma
    \initseg c_b}\).
  Hence, the map \(b \mapsto g_b(\inr \star)\) is a simulation
  \(\beta \le \gamma\).
\end{proof}

Thanks to \cref{ordinal-as-sup-of-successors}, and by generalizing the equation
\(\alpha \times \lambda = \sup_{\beta < \lambda}{(\alpha \times \beta)}\) from
limit ordinals to arbitrary suprema, we obtain a specification that uniquely
characterizes ordinal multiplication, also constructively.

\begin{proposition}[\flink{Proposition-5}]\label{multiplication-defining-equations}
  Ordinal multiplication satisfies the following equations
  \[
    \alpha \times \Zero = \Zero, \quad
    \alpha \times (\beta + \One) = (\alpha \times \beta) + \alpha, \quad
    \alpha \times \sup F_{\sbullet} = \sup (\alpha \times F_{\sbullet}).
  \]

  In turn, these equations uniquely characterize multiplication via the recursive equation
  \[
    \alpha \times \beta = \sup_{b : \beta} (\alpha \times (\beta \initseg b) + \alpha)
  \]
  which may act as a definition via transfinite induction.
\end{proposition}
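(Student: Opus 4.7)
The plan is to prove this in three steps: verify the three equations, derive the recursive equation from them, and then argue uniqueness by transfinite induction.

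For the three equations, the first two are justified by exhibiting explicit ordinal equivalences, using the characterization of initial segments in \cref{initial-segment-of-product}. The equation $\alpha \times \Zero = \Zero$ is immediate because the cartesian product with an empty type is empty. For $\alpha \times (\beta + \One) = (\alpha \times \beta) + \alpha$, I would define the map $(a, \inl b) \mapsto \inl(a,b)$ and $(a, \inr \star) \mapsto \inr a$, and check that it is an order-preserving equivalence with order-preserving inverse; this boils down to unpacking the reverse lexicographic order on the product versus the order on a sum. The third equation is exactly \cref{multiplication-is-continuous}, so nothing new is needed.

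For the recursive equation, I combine continuity (the third equation, already established) with \cref{ordinal-as-sup-of-successors}, which states $\beta = \sup_{b : \beta}(\beta \initseg b + \One)$. Applying $\alpha \times (-)$ to both sides and pushing through the supremum, then using the successor equation, yields
\[
  \alpha \times \beta
  = \alpha \times \sup_{b:\beta}(\beta \initseg b + \One)
  = \sup_{b:\beta}\bigl(\alpha \times (\beta \initseg b + \One)\bigr)
  = \sup_{b:\beta}\bigl((\alpha \times (\beta \initseg b)) + \alpha\bigr),
\]
which is the recursive equation.

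For uniqueness, suppose two binary operations $\times$ and $\times'$ both satisfy the three equations. By the argument just given, both satisfy the same recursive equation in their second argument. I then proceed by transfinite induction on $\beta$: assuming $\alpha \times (\beta \initseg b) = \alpha \times' (\beta \initseg b)$ for every $b : \beta$, substituting into the recursive equation gives $\alpha \times \beta = \alpha \times' \beta$. I do not expect any real obstacle here — the only subtle step is that the recursive equation has to be used as a well-founded recursion principle, which is justified because the family $b \mapsto \alpha \times (\beta \initseg b) + \alpha$ depends only on values at strict predecessors of $\beta$ in $\Ord$; this is standard once transfinite induction on the ordinal of ordinals is available, which it is via the wellfoundedness of $<$ on $\Ord$.
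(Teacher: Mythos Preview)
Your proposal is correct and follows essentially the same route as the paper: explicit equivalences for the zero and successor equations (with the very same map $(a,\inl b)\mapsto\inl(a,b)$, $(a,\inr\star)\mapsto\inr a$), an appeal to \cref{multiplication-is-continuous} for the supremum equation, and the derivation of the recursive equation via \cref{ordinal-as-sup-of-successors}. Your explicit uniqueness argument by transfinite induction spells out what the paper leaves implicit in the phrase ``may act as a definition via transfinite induction.''
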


The last expression involves \(\alpha \times ({\beta \initseg b})\), where the
right factor \(\beta \initseg b\) is strictly smaller than \(\beta\), so that
the recursive equation indeed leads to a valid definition by transfinite
induction in \(\Ord\) on \(\beta\).

\begin{proof}
  It is easy to construct an ordinal equivalence between \({\alpha \times \Zero}\) and
  \(\Zero\). For the second equation, an ordinal equivalence is given by the mapping
  \((a,\inl b) \mapsto \inl(a,b)\) and \((a,\inr\star) \mapsto \inr a\).
  The supremum equation is \cref{multiplication-is-continuous}.
  Given the three equations, we calculate that
  \begin{align*}
    \alpha \times \beta &= \alpha \times {\sup_{b : \beta}\big((\beta \initseg b) + \One\big)}
    &&\text{(by \cref{ordinal-as-sup-of-successors})} \\
    &= \sup_{b : \beta} (\alpha \times \big((\beta \initseg b) + \One\big))
    &&\text{(by the equation for suprema)} \\
    &= \sup_{b : \beta} (\alpha \times (\beta \initseg b) + \alpha),
    &&\text{(by the equation for successors)}
  \end{align*}
  as desired.
\end{proof}

Notice that the above proof does not refer to the equation for \(\Zero\). In
fact, the equation \(\alpha \times \Zero = \Zero\) is redundant as it follows
from the equation for suprema by considering the empty family.
We also point out that the recursive equation was already observed by
Grayson~\cite[\S3.1]{Grayson1982} (without proof).

\subsubsection{Defining equations for ordinal addition}
\label{sec:eqs-for-addition}
For ordinal addition, the situation is slightly more involved.
First of all, ordinal addition is not continuous: the equation
\({\alpha + \sup F_{\sbullet}} = \sup {(\alpha + F_{\sbullet})}\) fails for the
empty family whenever \(\alpha \ne \Zero\), since \(\alpha + \Zero = \alpha\).
However, it is still the case that \({\alpha + (-)}\) preserves suprema for any
family whose index type is inhabited, i.e., we have
\({\alpha + \sup_{i : I}F_i} = \sup_{i : I}{(\alpha + F_i)}\) whenever \(I\) is
inhabited.
Here, we recall that a type is \emph{inhabited} if we have an element of its
propositional truncation, which is in general stronger than the classically
equivalent statement that the type is nonempty.

Constructively we cannot, in general, decide whether a type is inhabited or not,
so the above is a little awkward. Fortunately, it is possible to have a uniform
equation that captures suprema of all families, whether their index types are
inhabited or not:
\[
  {\alpha + \sup F_{\sbullet}} = {\alpha \vee \sup {(\alpha + F_{\sbullet})}}.
\]
Taking the empty family we recover \({\alpha + \Zero} = \alpha\), while for a
family \(F\) indexed by an inhabited type \(I\), we recover
\({\alpha + \sup_{i : I}F_i} = \sup_{i : I}{(\alpha + F_i)}\) because, for
inhabited \(I\) we have \(\sup_{i : I}{(\alpha + F_i)} \ge \alpha\) since
\(\alpha = {\alpha + \Zero} \le \alpha + {F_i}\) holds for any \(i : I\) by
monotonicity.

We thus consider a generalized continuity property where an operation
preserves suprema up to a binary join with a fixed ordinal.

\begin{lemma}[\flink{Lemma-6}]\label{+-preserves-suprema}
  For every ordinal \(\alpha\) and family \(F\) of ordinals, we have
  \[
    \alpha + \sup {F_{\sbullet}} = \alpha \vee {\sup (\alpha + F_{\sbullet})}.
  \]
\end{lemma}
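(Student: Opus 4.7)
The plan is to prove the equation by antisymmetry of $\leq$, handling each inequality with the tools developed so far: the universal property of joins and suprema for one direction, and characterization \ref{poset-init-seg} of $\leq$ via matching initial segments for the other.

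For the inequality $\alpha + \sup F_\bullet \geq \alpha \vee \sup(\alpha + F_\bullet)$, I would unfold the join as a least upper bound and show the two constituents separately. First, $\alpha \leq \alpha + \sup F_\bullet$ holds via the simulation $\inl : \alpha \to \alpha + \sup F_\bullet$. Second, to get $\sup(\alpha + F_\bullet) \leq \alpha + \sup F_\bullet$ by the universal property of the supremum, it suffices to show $\alpha + F_i \leq \alpha + \sup F_\bullet$ for each $i$, which is immediate from right-monotonicity of addition (\cref{+-x-right-monotone}) applied to the injection simulation $F_i \leq \sup F_\bullet$.

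For the inequality $\alpha + \sup F_\bullet \leq \alpha \vee \sup(\alpha + F_\bullet)$, I would use characterization \ref{poset-init-seg}: exhibit, for every element of $\alpha + \sup F_\bullet$, an element of $\alpha \vee \sup(\alpha + F_\bullet)$ inducing the same initial segment. Representing the join as $\sup F'_\bullet$ with $F'_\bullet : \One + I \to \Ord$, I split on the coproduct structure of $\alpha + \sup F_\bullet$. For $\inl a$, \eqref{initial-segment-of-sum} gives $(\alpha + \sup F_\bullet) \initseg \inl a = \alpha \initseg a$, which matches the initial segment of $[\inl \star, a]$ in $\sup F'_\bullet$ by \eqref{initial-segment-of-sup}. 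For $\inr y$ with $y : \sup F_\bullet$, apply \eqref{initial-segment-of-sup} to obtain $i : I$ and $x : F_i$ with $y = [i,x]$ and $\sup F_\bullet \initseg y = F_i \initseg x$; then \eqref{initial-segment-of-sum} gives $(\alpha + \sup F_\bullet) \initseg \inr y = \alpha + (F_i \initseg x) = (\alpha + F_i) \initseg \inr x$, which matches the initial segment of $[\inr i, \inr x]$ in $\sup F'_\bullet$, again by \eqref{initial-segment-of-sup}.

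The one subtlety to be careful with is the use of propositional truncation in \eqref{initial-segment-of-sup}: the pair $(i,x)$ is only asserted to exist. However, as remarked after that equation, since we are ultimately proving a proposition (an inequality of ordinals), we can use the universal property of the truncation to strip it and work with a genuine pair. I do not expect any other serious obstacle; the argument is essentially a careful bookkeeping exercise combining the two initial segment computations \eqref{initial-segment-of-sum} and \eqref{initial-segment-of-sup}.
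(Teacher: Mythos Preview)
Your proposal is correct and follows essentially the same approach as the paper: antisymmetry, with the \(\geq\) direction via the universal properties of join and supremum together with right-monotonicity of addition, and the \(\leq\) direction via characterization~\ref{poset-init-seg} by matching initial segments using \eqref{initial-segment-of-sum} and \eqref{initial-segment-of-sup}, including the same case split on \(\inl a\) versus \(\inr y\) and the same remark about propositional truncation.
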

\begin{proof}
  We use antisymmetry. In general, we have \(\alpha \le {\alpha + \beta}\) and
  hence \(\alpha \le {\alpha + \sup F_{\sbullet}}\) holds.
  Moreover, we have \(\alpha + F_i \le {\alpha + \sup {F_{\sbullet}}}\) as
  addition is monotone in its right argument. This establishes
  \(\alpha \vee {\sup (\alpha + F_{\sbullet})} \le \alpha + \sup
  {F_{\sbullet}}\).

  For the reverse inequality we use that
  \(\alpha \vee {\sup (\alpha + F_{\sbullet})}\) is equal to the supremum
  \(\sup_{\One + I}(\inl \star \mapsto \alpha; \inr i \mapsto \alpha + F_i)\).
  Now consider \(z : \alpha + \sup F_{\sbullet}\); we need to find a
  (necessarily unique) \(z' : \sup(\alpha + F_{\sbullet})\) such that the
  initial segment determined by \(z\) is equal to the initial segment determined
  by \(z'\).
  In case \(z = \inl a\), we calculate that
  \begin{align*}
    {(\alpha + \sup F_{\sbullet})} \initseg \inl a
    &= {\alpha \initseg a} &&\text{(by \cref{initial-segment-of-sum})} \\
    &= (\alpha \vee \sup {(\alpha + F_{\sbullet})}) \initseg [\inl \star,a]
      &&\text{(by \cref{initial-segment-of-sup})}.\\
  %
  \intertext{If \(z = \inr y\), there exists \(i : I\) and \(x : F_i\) such that
  \(y = [i,x]\) and we calculate that}
    ({\alpha + \sup F_{\sbullet}}) \initseg \inr {[i,x]}
    &= \alpha + ({\sup F_{\sbullet} \initseg [i,x]})
    &&\text{(by \cref{initial-segment-of-sum})} \\
    &= \alpha + ({F_i \initseg x})
    &&\text{(by \cref{initial-segment-of-sup})} \\
    &= {(\alpha + F_i)} \initseg {\inr x}
    &&\text{(by \cref{initial-segment-of-sum})} \\
    &= {\sup(\alpha + F_{\sbullet})} \initseg {[\inr i,x]}
    &&\text{(by \cref{initial-segment-of-sup})}.
  \end{align*}
  Hence, we have a simulation from \(\alpha + \sup F_{\sbullet}\) to
  \(\sup(\alpha + F_{\sbullet})\).
\end{proof}

With \cref{+-preserves-suprema}, we can prove the analogue of
\cref{multiplication-defining-equations} for addition of ordinals, where again the equation for zero is redundant, and
the recursive equation was already observed (without proof) by
Grayson~\cite[\S3.1]{Grayson1982}.

\begin{proposition}[\flink{Proposition-7}]\label{addition-defining-equations}
  Ordinal addition satisfies the following equations
  \[
    \alpha + \Zero = \alpha, \quad
    \alpha + (\beta + \One) = (\alpha + \beta) + \One, \quad
    \alpha + \sup F_{\sbullet} = \alpha \vee \sup (\alpha + F_{\sbullet}).
  \]

  In turn, these equations uniquely characterize addition via the recursive equation
  \[
    \alpha + \beta = \alpha \vee \sup_{b : \beta} (\alpha + (\beta \initseg b) + \One)
  \]
  which may act as a definition via transfinite induction.
\end{proposition}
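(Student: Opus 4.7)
The plan is to follow the same strategy as the proof of \cref{multiplication-defining-equations}, verifying each of the three defining equations directly and then chaining them with \cref{ordinal-as-sup-of-successors} to obtain the recursive equation.

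For the first equation \(\alpha + \Zero = \alpha\), the underlying coproduct \(\alpha + \Zero\) is equivalent to \(\alpha\) via \(\inl\), and since the right summand is empty, this equivalence is automatically order preserving and reflecting, giving an ordinal equivalence. For the successor equation \(\alpha + (\beta + \One) = (\alpha + \beta) + \One\), I would exhibit the ordinal equivalence explicitly by sending \(\inl a \mapsto \inl(\inl a)\), \(\inr (\inl b) \mapsto \inl(\inr b)\), and \(\inr (\inr \star) \mapsto \inr \star\). Checking order preservation and reflection is routine, using the definition of the ordinal sum (elements in the left component are below those in the right) in both directions. The third equation is precisely \cref{+-preserves-suprema}, so nothing further is needed.

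To derive the recursive equation, I would apply \cref{ordinal-as-sup-of-successors} to rewrite \(\beta\) as the supremum \(\sup_{b:\beta}(\beta \initseg b + \One)\), and then compute
\begin{align*}
  \alpha + \beta
  &= \alpha + \sup_{b:\beta}(\beta \initseg b + \One)
    &&\text{(by \cref{ordinal-as-sup-of-successors})}\\
  &= \alpha \vee \sup_{b:\beta}\bigl(\alpha + (\beta \initseg b + \One)\bigr)
    &&\text{(by the supremum equation)}\\
  &= \alpha \vee \sup_{b:\beta}\bigl((\alpha + (\beta \initseg b)) + \One\bigr)
    &&\text{(by the successor equation).}
\end{align*}
Since \(\beta \initseg b\) is a strict predecessor of \(\beta\) in \(\Ord\), the right-hand side refers to \(\alpha + (-)\) only on ordinals strictly below \(\beta\), so the recursive equation indeed constitutes a valid definition by transfinite induction on \(\beta\), and any operation satisfying it must coincide with ordinal addition.

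I expect no real obstacles here: the first two equations are straightforward bijective rearrangements of coproducts, the third is \cref{+-preserves-suprema}, and the recursion follows from chaining these with \cref{ordinal-as-sup-of-successors}. As in \cref{multiplication-defining-equations}, the equation for \(\Zero\) is redundant since it is the special case of the supremum equation for the empty family (using that the join of \(\alpha\) with \(\Zero\) is \(\alpha\)), so I would remark on this at the end rather than prove it separately in detail.
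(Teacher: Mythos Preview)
Your proposal is correct and follows essentially the same approach as the paper's proof. The only minor difference is that the paper verifies the successor equation by appealing to general associativity \(\alpha + (\beta + \gamma) = (\alpha + \beta) + \gamma\), whereas you write out the explicit equivalence for the special case \(\gamma = \One\); either way the argument is straightforward and the derivation of the recursive equation via \cref{ordinal-as-sup-of-successors} is identical.
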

\begin{proof}
  It is easy to construct an ordinal equivalence between \(\alpha + \Zero\) and
  \(\alpha\), as well as between \(\alpha + (\beta + \gamma)\) and
  \((\alpha + \beta) + \gamma\), which generalizes the second equation.
  The supremum equation is \cref{+-preserves-suprema}.
  Given the three equations, we use \cref{ordinal-as-sup-of-successors} to
  calculate that
  \begin{align*}
    \alpha + \beta &= \alpha + {\sup_{b : \beta}(\beta \initseg b + \One)} \\
    &= \alpha \vee \sup_{b : \beta} (\alpha + (\beta \initseg b + \One)) \\
    &= \alpha \vee \sup_{b : \beta} ((\alpha + (\beta \initseg b)) + \One).&&\qedhere
  \end{align*}
\end{proof}

Note that replacing the equation for suprema by
\({\alpha + \sup_{i : I}F_i} = \sup_{i : I}{(\alpha + F_i)}\) for inhabited
\(I\) would not yield a unique characterization, as for an arbitrary
ordinal~\(\beta\) the index of the family
\(b \mapsto \alpha + (\beta \initseg b) + \One\) need not be inhabited.

\subsection{Expectations on Exponentiation}

We can now state and make precise the specification \eqref{eq:intro-spec} in the language of homotopy type theory.
The following equations classically define ordinal exponentiation.
\begin{fequation}{Eq-double-dagger}\label{exp-full-spec}
  \begin{aligned}
    \alpha^0 & = \One \\
    \alpha^{\beta+\One} & = \alpha^\beta \times \alpha \\
    \alpha^{\sup_{i:I} F_i} & = \sup_{i:I} (\alpha^{F_i})
                        && \text{(if $\alpha \not= 0$ and $I$ inhabited)} \\
    \Zero^\beta & = \Zero && \text{(if \(\beta \neq \Zero\))}
  \end{aligned}
  \tag{\(\ddagger\)}
\end{fequation}

The final clause \(\Zero^\beta = \Zero\) has the side condition \(\beta \neq \Zero\) in order to not clash with \(\alpha^0 = \One\).
Classically, the supremum clause is equivalent to the usual equation involving limit ordinals, but formulating it for arbitrary inhabited suprema has the advantage of ensuring that exponentiation directly preserves these suprema.

\begin{lemma}[\flink{Lemma-8}]\label{basic-facts-from-spec}
  The zero and successor clauses in the specification together imply the equations \(\alpha^\One = \alpha\) and
  \(\alpha^{\Two} = \alpha \times \alpha\).
  Moreover, the supremum clause in the specification implies that
  \(\alpha^{(-)}\) is monotone for \(\alpha\neq 0\), i.e., if \(\beta \leq \gamma\) then \(\alpha^\beta \leq \alpha^\gamma\) for \(\alpha\neq 0\). \qedNoProof
\end{lemma}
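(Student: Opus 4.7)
The plan is to dispatch the three claims in turn by directly unfolding the clauses of \eqref{exp-full-spec}.

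First, for $\alpha^\One = \alpha$, I would rewrite $\One$ as $\Zero + \One$ and apply the successor clause followed by the zero clause to reduce to $\alpha^\One = \alpha^\Zero \times \alpha = \One \times \alpha$. The remaining identity $\One \times \alpha = \alpha$ is immediate from the definition of multiplication via the reverse lexicographic order on the cartesian product, since the forced first coordinate $\star$ contributes nothing to the order. For $\alpha^\Two = \alpha \times \alpha$, I would use that $\Two = \One + \One$ holds as ordinals (via univalence and the obvious order-equivalence), and then combine the successor clause with the identity $\alpha^\One = \alpha$ just established to obtain $\alpha^\Two = \alpha^{\One + \One} = \alpha^\One \times \alpha = \alpha \times \alpha$.

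For the monotonicity statement I would reuse the general argument for continuous operations given in the Suprema subsection: given $\beta \leq \gamma$, the equivalence $\beta \leq \gamma \iff \beta \vee \gamma = \gamma$ reduces the task to showing that $\alpha^\gamma = \alpha^\beta \vee \alpha^\gamma$. Since $\beta \vee \gamma$ is by construction the supremum of the two-element family $F' : \Two \to \Ord$ sending $0 \mapsto \beta$ and $1 \mapsto \gamma$, and the index type $\Two$ is inhabited, the supremum clause of \eqref{exp-full-spec} applies (the hypothesis $\alpha \neq \Zero$ being in force) and yields
\[
\alpha^\gamma \;=\; \alpha^{\beta \vee \gamma} \;=\; \alpha^\beta \vee \alpha^\gamma,
\]
so that $\alpha^\beta \leq \alpha^\gamma$ follows from the universal property of the join.

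There is no real obstacle in this lemma; the only point requiring any care is checking that the inhabitedness side condition on the supremum clause is met, which is trivial here since $\Two$ is inhabited. The whole argument is a few lines of unfolding together with two elementary facts about ordinal multiplication (unit law and $\Two = \One + \One$).
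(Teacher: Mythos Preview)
Your proof is correct and matches the paper's implicit reasoning: the paper leaves this lemma without proof (it is marked with a bare \qedsymbol), but your monotonicity argument is precisely the ``continuous operations are automatically monotone'' observation spelled out in the Suprema subsection, specialized to a binary join, and your derivations of $\alpha^\One = \alpha$ and $\alpha^\Two = \alpha \times \alpha$ are the obvious unfoldings.
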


The bad news is that constructively there cannot be an operation satisfying all
of the equations in~\eqref{exp-full-spec}:

\newcommand{\expop}{\mathrm{exp}}
\begin{proposition}[\flink{Proposition-9}]\label{thm:no-go-exp}
  There is an exponentiation operation \[\expop : \Ord \times \Ord \to \Ord\]
  satisfying the specification~\eqref{exp-full-spec} if and only if\/ \(\LEM\) holds.
\end{proposition}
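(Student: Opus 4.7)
The claim is an if-and-only-if, so I would prove each direction separately, with the forward direction being routine and the converse being the interesting one.

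For the forward direction, I would assume $\LEM$ and define $\exp$ by transfinite recursion on the second argument $\beta$. Under $\LEM$, the trichotomy ``every ordinal is zero, a successor, or a limit'' holds (this is precisely the content of~\cite[Thm.~63]{KNFX2023}, which the paper has already cited as being equivalent to $\LEM$). So given $\alpha, \beta$, I first use $\LEM$ to decide into which of the three cases $\beta$ falls: if $\beta = \Zero$, set $\exp(\alpha, \beta) \defeq \One$; if $\beta = \gamma + \One$, set $\exp(\alpha, \beta) \defeq \exp(\alpha, \gamma) \times \alpha$; if $\beta$ is a limit, further use $\LEM$ to decide $\alpha = \Zero$, setting $\exp(\alpha, \beta) \defeq \Zero$ in that case and $\exp(\alpha, \beta) \defeq \sup_{\gamma < \beta}\exp(\alpha, \gamma)$ otherwise. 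All four clauses of~\eqref{exp-full-spec} hold by construction.

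For the converse, I would assume a function $\exp$ satisfying~\eqref{exp-full-spec} and fix an arbitrary proposition $P$, viewed as an ordinal. The natural first move is to examine $Q \defeq \exp(\Zero, P)$: the clause $\alpha^\Zero = \One$ gives $\neg P \to Q = \One$ (since $P = \Zero$ as an ordinal is equivalent to $\neg P$), and the clause $\Zero^\beta = \Zero$ for $\beta \neq \Zero$ gives $\neg\neg P \to Q = \Zero$. Alternatively one may consider $\exp(\Two, P)$: the zero clause forces this to be $\One$ when $\neg P$, while the supremum clause applied to $P = \sup_{p : P}\One$ (whose index is inhabited exactly when $P$ holds, and with base $\Two \neq \Zero$) forces it to be $\Two$ when $P$. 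In either case the spec pins the output down to two possible values depending on the truth of $P$.

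The main obstacle is promoting these conditional equalities to the full decidability $P \vee \neg P$. A purely contrapositive argument from, say, $\exp(\Zero, P) \in \{\Zero, \One\}$ only delivers the weak $\LEM$ $\neg\neg P \vee \neg P$, which is strictly weaker than $\LEM$. To obtain $\LEM$ in full, I expect one must either (i)~read off $P$ directly from structural content of an output ordinal --- for instance, exhibiting a genuine element of the underlying type whose mere existence witnesses $P$ rather than $\neg\neg P$ --- or (ii)~combine several applications of the spec (e.g.\ the supremum and successor clauses applied to a family indexed by $P$ together with the zero clause) to produce an equation whose direct verification amounts to deciding $P$. Getting this extraction step right is the heart of the argument, and the cleanest route is likely via the supremum clause, since its inhabitedness side condition is literally $P$ (not $\neg\neg P$) when applied to $P = \sup_{p:P}\One$.
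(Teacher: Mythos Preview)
Your forward direction is fine and matches the paper's (which just says ``using \(\LEM\), such an operation can be defined by cases'').

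The converse, however, is genuinely incomplete: you have correctly identified that your two candidate approaches (inspecting \(\exp(\Zero,P)\) or \(\exp(\Two,P)\)) only yield conditional equalities, and you explicitly leave the ``extraction step'' as future work. That step is not a minor detail---it is the entire content of the argument, and neither of your candidates leads there cleanly. The difficulty is that \(\exp(\Zero,P)\) and \(\exp(\Two,P)\) are \emph{a priori} unknown ordinals: the spec constrains them only under hypotheses on \(P\), so you cannot case-split on an element of them to decide \(P\). (Your \(\exp(\Zero,P)\) route is especially problematic since the clause \(\Zero^\beta = \Zero\) fires only for \(\beta \ne \Zero\), i.e.\ \(\lnot\lnot P\), landing you squarely in weak \(\LEM\) territory, as you note.)

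The paper's trick sidesteps this by putting \(P\) in the \emph{base} rather than the exponent. Set \(\alpha \defeq P + \One\); then \(\alpha \ne \Zero\), so monotonicity in the exponent (a consequence of the supremum clause, \cref{basic-facts-from-spec}) gives
\[
\One = \exp(\alpha,\Zero) \le \exp(\alpha,\One) = \alpha = P + \One.
\]
Now the target of the resulting simulation \(f : \One \to P + \One\) is a \emph{concrete} coproduct, so one may case-split on \(f\,\star\): if \(f\,\star = \inl p\) then \(P\); if \(f\,\star = \inr\star\) then \(\lnot P\), because simulations preserve least elements (so a hypothetical \(p : P\) would force \(f\,\star = \inl p\), contradiction). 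The point you were missing is to arrange for the simulation to land in an ordinal whose underlying type you already control, rather than in a value of \(\exp\) whose structure is opaque.
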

\begin{proof}
  Using \(\LEM\), such an operation can be defined by cases.
  Conversely, suppose we had such an operation \(\expop\) and let \(P\) be an arbitrary
  proposition.
  As explained in \cref{sec:ord-in-hott}, any proposition can be viewed as an
  ordinal, and we consider the sum of two such ordinals:
  \(\alpha \defeq {P + \One}\).
  Obviously \(\alpha \neq \Zero\) and \(\Zero \leq \One\), so \cref{basic-facts-from-spec} yields
  \(\One = \expop (\alpha , \Zero) \leq \expop (\alpha , \One) = \alpha\).
  Hence, we get a simulation \(f : \One \to \alpha\).
  Now, either \(f\,\star = \inl p\), in which case \(P\) holds, or
  \(f\,\star = \inr \star\).
  In the latter case, we can prove \(\lnot P\):
  simulations preserve least elements, so assuming \(p : P\) we must have
  \(f\,\star = \inl p\), which contradicts \(f\,\star = \inr \star\).
\end{proof}

The good news is that, if we assume that the base is positive, we \emph{can} define a well behaved ordinal exponentiation
operation \(\alpha^\beta\) that satisfies the specification \eqref{exp-full-spec}.
Assuming \(\alpha \ge \One\), it is convenient, and similar to what we did
in~\cref{sec:eqs-for-addition}, to consider a stronger specification of
exponentiation which combines the zero and supremum cases and which drops the
requirement that \(I\) is inhabited:
\begin{fequation}{Eq-double-dagger'}\label{exp-strong-spec}
  \begin{aligned}
    \alpha^{\beta+\One} & = \alpha^\beta \times \alpha \\
    \alpha^{\sup_{i:I} F_i} & = \One \vee \sup_{i:I} (\alpha^{F_i})
  \end{aligned}
  \tag{\(\ddagger'\)}
\end{fequation}
Note that, assuming \(\alpha \ge \One\), the stronger specification
\eqref{exp-strong-spec} implies the regular specification
\eqref{exp-full-spec}. The equation for \(\Zero\) follows by considering the
empty supremum, while for an inhabited index type \(I\), \eqref{exp-strong-spec}
gives \(\alpha^{\sup_{i : I} F_i} = \sup_{i : I}(\alpha^{F_i})\) because in this
case we get \(\sup_{i : I}(\alpha^{F_i}) \geq \One\) as
\(\One = \alpha^{\Zero} \leq \alpha^{F_i}\) holds for any \(i : I\) by
monotonicity in the exponent, which also follows from \eqref{exp-strong-spec}.
The converse implication \eqref{exp-full-spec} \(\implies\)
\eqref{exp-strong-spec} for \(\alpha \ge \One\) is true classically, but does
not seem provable constructively.

\section{Abstract Algebraic Exponentiation}\label{sec:abstract-approach}

Let us start by presenting a definition of exponentiation that is fully guided by the equations we expect or want.
We know from \cref{thm:no-go-exp} that we cannot hope to define exponentiation \(\alpha^\beta\) for arbitrary \(\alpha\) and \(\beta\), so in order to avoid the case distinction on whether $\alpha = 0$ or not, let us restrict our attention to the case when $\alpha \geq \One$.
This means that it is sufficient to define an exponentiation operation which satisfies the specification \eqref{exp-strong-spec}.
Just like in \cref{sec:eqs-for-multiplication}, these equations
\emph{uniquely determine} \(\alpha^\beta\) for \(\alpha \geq \One\) via
transfinite recursion, thanks to \cref{ordinal-as-sup-of-successors}.
Indeed, we have
\begin{equation*}
  \begin{aligned}
    \alpha^\beta
    &= \alpha^{\sup_{b:\beta}({\beta \initseg b} \,+\, \One)} &&\text{(by \cref{ordinal-as-sup-of-successors})}\\
    &= \One \vee \sup_{b:\beta}(\alpha^{{\beta \initseg b} \,+\, \One}) &&\text{(to satisfy \eqref{exp-strong-spec}, supremum case)} \\
    &= \One \vee \sup_{b:\beta}(\alpha^{\beta \initseg b} \times \alpha) &&\text{(to satisfy \eqref{exp-strong-spec}, successor case).}
  \end{aligned}
\end{equation*}
Since it will be convenient to work with a single supremum instead, we adopt the
following 
definition.

\begin{definition}[Abstract exponentiation, $\abstrexp{\alpha}{\beta}$ \flink{Definition-10}]
  For a given ordinal $\alpha$, we define the operation
  $\abstrexp{\alpha}{(-)} : \Ord \to \Ord$ by transfinite recursion as follows:
  \begin{equation*}
    \alpha^\beta
    \defeq \sup_{\One + \beta}(\inl \star \mapsto \One; \inr b \mapsto \alpha^{\beta \initseg b} \times \alpha).
  \end{equation*}
\end{definition}

The definition of exponentiation \(\abstrexp{\alpha}{\beta}\) does not rely on \(\alpha\) being positive, but most properties of \(\abstrexp{\alpha}{\beta}\) will require it.
Note that exponentiation itself is always positive, i.e.,
$\abstrexp{\alpha}{\beta} \geq \One$ holds by construction, making it possible
to iterate well behaved exponentiation.
We will write \(\bot \defeq [\inl \star,\star]\) for the least element of an
exponential~\(\abstrexp{\alpha}{\beta}\).
Using \cref{initial-segment-of-sup,initial-segment-of-product} we get the following characterization of initial segments.

\begin{proposition}[Initial segments of \(\abstrexp{\alpha}{\beta}\) \flink{Proposition-11}]%
  \label{initial-segment-of-abstrexp}
  For \(a : \alpha\), \(b : \beta\) and
  \(e : \abstrexp{\alpha}{\beta \initseg b}\), we have
  \[
    {\abstrexp{\alpha}{\beta} \initseg [\inr b,(e,a)]}
    \;
    =
    \;
    {\abstrexp{\alpha}{\beta \initseg b} \times (\alpha \initseg a) +
      \abstrexp{\alpha}{\beta \initseg b} \initseg e}.\eqno\qed
  \]
\end{proposition}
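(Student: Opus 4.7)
The plan is a direct two-step unfolding using the two initial-segment characterizations already established.

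First, I would recognize that by the definition of $\abstrexp{\alpha}{\beta}$, the ordinal is a supremum indexed by $\One + \beta$, and the element $[\inr b,(e,a)]$ is precisely the equivalence class of the pair $(\inr b, (e,a))$ in this supremum, where $(e,a)$ lives in the ordinal $F_{\inr b} \equiv \abstrexp{\alpha}{\beta \initseg b} \times \alpha$ associated to the index $\inr b$. Applying \cref{initial-segment-of-sup} for this concrete family immediately yields
\[
  \abstrexp{\alpha}{\beta} \initseg [\inr b,(e,a)] \;=\; \big(\abstrexp{\alpha}{\beta \initseg b} \times \alpha\big) \initseg (e, a).
\]

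Second, I would apply \cref{initial-segment-of-product} with the pair $(e,a)$ as the element cutting the product, giving
\[
  \big(\abstrexp{\alpha}{\beta \initseg b} \times \alpha\big) \initseg (e, a)
  \;=\;
  \abstrexp{\alpha}{\beta \initseg b} \times (\alpha \initseg a) + \abstrexp{\alpha}{\beta \initseg b} \initseg e,
\]
and concatenating these two equalities gives the statement.

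I do not anticipate any real obstacle: the work has been front-loaded into the general lemmas on initial segments of suprema and products, and the definition of $\abstrexp{\alpha}{\beta}$ is precisely arranged so that the appropriate reductions trigger. The only point deserving a moment of care is that \cref{initial-segment-of-sup} is usually phrased existentially (``there merely exist $i,x$ such that $y=[i,x]$ and \ldots''), whereas here we are handed the representative $(\inr b, (e,a))$ explicitly; one uses the equational part of \cref{initial-segment-of-sup} on this specific representative, which is unproblematic since the resulting equation between ordinals is a proposition. No transfinite induction or further structural argument is required.
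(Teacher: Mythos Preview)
Your proposal is correct and matches the paper's own argument exactly: the paper simply states that the result follows by combining \cref{initial-segment-of-sup} and \cref{initial-segment-of-product}, and ends the proposition with a \qedsymbol{} without further elaboration. Your observation about the existential phrasing of \cref{initial-segment-of-sup} is handled most cleanly by noting that each \([i,-]\) is a simulation and invoking \cref{simulation-facts}\ref{simulations-preserve-initial-segments} directly, which gives the equation for the given representative without any truncation-elimination gymnastics.
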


We can put this characterization to work immediately, to prove that abstract exponentiation is monotone in the exponent for both the weak and the strict order of ordinals.

\begin{proposition}[\flink{Proposition-12}]\label{abstrexp-monotone}
  Abstract exponentiation is monotone in the exponent:
  if \(\beta \leq \gamma\) then \(\alpha^\beta \leq \alpha^\gamma\). Furthermore, if \(\alpha > \One\),
  then it moreover preserves the strict order, i.e., if \(\alpha > \One\) and \(\beta < \gamma\), then \(\alpha^\beta < \alpha^\gamma\).
\end{proposition}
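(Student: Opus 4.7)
The plan is to handle the two inequalities separately. For monotonicity, I would use the universal property of the supremum defining $\abstrexp{\alpha}{\beta}$. Suppose $\beta \leq \gamma$. By \cref{simulation-facts}, there is a simulation $f : \beta \to \gamma$ with $\beta \initseg b = \gamma \initseg f(b)$ for every $b : \beta$. To verify $\abstrexp{\alpha}{\beta} \leq \abstrexp{\alpha}{\gamma}$, it suffices to check that each component of the family defining $\abstrexp{\alpha}{\beta}$ is bounded above by $\abstrexp{\alpha}{\gamma}$. The $\inl \star$ component is $\One$, which is $\leq \abstrexp{\alpha}{\gamma}$ because the latter is itself a supremum with $\One$ among its components (the $\inl \star$ one). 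For each $b : \beta$, the component $\abstrexp{\alpha}{\beta \initseg b} \times \alpha$ equals $\abstrexp{\alpha}{\gamma \initseg f(b)} \times \alpha$ because exponentiation depends only on the ordinal, and $\beta \initseg b = \gamma \initseg f(b)$; this is precisely the $\inr f(b)$ component of the family defining $\abstrexp{\alpha}{\gamma}$.

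For the strict case, assume $\alpha > \One$ and $\beta < \gamma$. Unfolding the strict order on $\Ord$, we obtain $c : \gamma$ with $\gamma \initseg c = \beta$, and $a : \alpha$ with $\alpha \initseg a = \One$. The idea is to exhibit a concrete element of $\abstrexp{\alpha}{\gamma}$ whose initial segment equals $\abstrexp{\alpha}{\beta}$. I would try $[\inr c,(\bot,a)]$, where $\bot$ denotes the least element of $\abstrexp{\alpha}{\gamma \initseg c} = \abstrexp{\alpha}{\beta}$. Applying \cref{initial-segment-of-abstrexp} yields
\[
  \abstrexp{\alpha}{\gamma} \initseg [\inr c,(\bot,a)]
  \;=\; \abstrexp{\alpha}{\beta} \times (\alpha \initseg a) + \abstrexp{\alpha}{\beta} \initseg \bot
  \;=\; \abstrexp{\alpha}{\beta} \times \One + \Zero
  \;=\; \abstrexp{\alpha}{\beta},
\]
using that $\alpha \initseg a = \One$, that $\abstrexp{\alpha}{\beta} \initseg \bot = \Zero$ (since $\bot$ is least), and the routine identities $\delta \times \One = \delta$ and $\delta + \Zero = \delta$ (both provable from \cref{multiplication-defining-equations,addition-defining-equations}). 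This witnesses $\abstrexp{\alpha}{\beta} < \abstrexp{\alpha}{\gamma}$.

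Neither case presents a genuine obstacle: monotonicity is essentially bookkeeping with the universal property of the supremum, and strict monotonicity reduces to picking the right witness in $\abstrexp{\alpha}{\gamma}$ and applying \cref{initial-segment-of-abstrexp}. The slightly subtle point worth highlighting is that the hypothesis $\alpha > \One$ is used in precisely one place, namely to extract an element $a : \alpha$ with $\alpha \initseg a = \One$, and this is exactly what makes the factor $\alpha \initseg a$ collapse to $\One$ so that the initial segment calculation lands on $\abstrexp{\alpha}{\beta}$ on the nose; without such an $a$, the natural choice of witness would overshoot or undershoot.
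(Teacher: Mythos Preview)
Your proposal is correct and matches the paper's proof essentially line for line: monotonicity via the universal property of the supremum together with \cref{simulation-facts}\ref{simulations-preserve-initial-segments}, and strict monotonicity by exhibiting the element \([\inr c,(\bot,a)]\) and computing its initial segment via \cref{initial-segment-of-abstrexp}. The only cosmetic difference is that the paper presents the monotonicity step as a chain \(\alpha^\beta = \ldots \le \alpha^\gamma\) rather than checking each summand of the defining family separately, but the content is identical.
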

\begin{proof}
  Let \(f : \beta \to \gamma\) be a simulation. By \cref{simulation-facts} we
  have
  \begin{align*}
    \alpha^\beta &\equiv
    \sup_{\One + \beta}(\inl \star \mapsto \One; \inr b \mapsto \alpha^{\beta \initseg b} \times \alpha) \\
    &= \sup_{\One + \beta}(\inl \star \mapsto \One; \inr b \mapsto \alpha^{\gamma \initseg {f b}} \times \alpha) \\
    &\le \sup_{\One + \gamma}(\inl \star \mapsto \One; \inr c \mapsto \alpha^{\gamma \initseg c} \times \alpha) \equiv \alpha^\gamma,
  \end{align*}
  where the inequality holds because the supremum gives the least upper bound.

  For the second claim, suppose we have \(\alpha > \One\) and \(\beta < \gamma\), i.e., we have \(\One = \alpha \initseg a_1\) and \(\beta = \gamma \initseg c\) for some \(a_1 : \alpha\) and \(c : \gamma\).
  Then, using \cref{initial-segment-of-abstrexp} and the least element
  \(\bot\) of \(\abstrexp{\alpha}{\gamma \initseg c}\), we calculate that
  \begin{align*}
    \abstrexp{\alpha}{\gamma} \initseg [\inr c , (\bot , a_1)]
    &= \abstrexp{\alpha}{\gamma \initseg c} \times (\alpha \initseg a_1)
      + \abstrexp{\alpha}{\gamma \initseg c} \initseg \bot \\
    &= \abstrexp{\alpha}{\gamma \initseg c} \times \One + \Zero
    = \abstrexp{\alpha}{\gamma \initseg c} = \abstrexp{\alpha}{\beta}.
  \end{align*}
  Hence, \(\alpha^\beta < \alpha^\gamma\), as desired.
\end{proof}

Using monotonicity, we can now prove that abstract exponentiation is well behaved whenever the base is positive.

\begin{theorem}[\flink{Theorem-13}]\label{abstract-exp-satisfies-spec}
  Assuming \(\alpha \ge \One\), abstract exponentiation \(\abstrexp{\alpha}{\beta}\) satisfies the specification \eqref{exp-strong-spec} \textup{(}and hence also the specification \eqref{exp-full-spec}\textup{)}.
\end{theorem}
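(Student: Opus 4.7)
The plan is to verify each of the two equations of \eqref{exp-strong-spec} separately, in both cases using antisymmetry of \(\le\) together with the fact that \(\sup\) is the least upper bound and the initial-segment characterizations \eqref{initial-segment-of-sum} and \eqref{initial-segment-of-sup}. Throughout, I will use the monotonicity of abstract exponentiation in the exponent (\cref{abstrexp-monotone}) and the right monotonicity of multiplication (\cref{+-x-right-monotone}), and I will use the assumption \(\alpha \ge \One\) exactly once, in the successor case.

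For the successor equation \(\abstrexp{\alpha}{\beta + \One} = \abstrexp{\alpha}{\beta} \times \alpha\), I would argue by antisymmetry as follows. For the \(\ge\) direction, the family defining \(\abstrexp{\alpha}{\beta + \One}\) contains, at the index \(\inr(\inr \star) : \One + (\beta + \One)\), the value \(\abstrexp{\alpha}{(\beta+\One)\initseg \inr \star} \times \alpha\); since \((\beta+\One)\initseg \inr \star = \beta\) by \eqref{initial-segment-of-sum}, this is exactly \(\abstrexp{\alpha}{\beta} \times \alpha\), so it is bounded above by the supremum. For the \(\le\) direction, I show that \(\abstrexp{\alpha}{\beta}\times\alpha\) is an upper bound of the defining family: the \(\inl \star\) case gives \(\One \le \abstrexp{\alpha}{\beta} \times \alpha\), which holds because \(\abstrexp{\alpha}{\beta} \ge \One\) by construction and \(\alpha \ge \One\) by assumption; the \(\inr(\inl b)\) case reduces, via \((\beta+\One)\initseg \inl b = \beta \initseg b\), to \(\abstrexp{\alpha}{\beta\initseg b}\times \alpha \le \abstrexp{\alpha}{\beta} \times \alpha\), which follows from \cref{abstrexp-monotone} and \cref{+-x-right-monotone}; and the \(\inr(\inr\star)\) case is trivial.

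For the supremum equation \(\abstrexp{\alpha}{\sup F_{\sbullet}} = \One \vee \sup_i \abstrexp{\alpha}{F_i}\), I again use antisymmetry. The \(\ge\) direction is immediate from the least-upper-bound property: \(\One \le \abstrexp{\alpha}{\sup F_{\sbullet}}\) by construction, and \(\abstrexp{\alpha}{F_i} \le \abstrexp{\alpha}{\sup F_{\sbullet}}\) for every \(i\) by \cref{abstrexp-monotone} applied to the inclusion \(F_i \le \sup F_{\sbullet}\). For the \(\le\) direction, I check that \(\One \vee \sup_i \abstrexp{\alpha}{F_i}\) is an upper bound of the defining family of \(\abstrexp{\alpha}{\sup F_{\sbullet}}\). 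The \(\inl\star\) case is clear. For \(\inr b\) with \(b : \sup F_{\sbullet}\), the characterization \eqref{initial-segment-of-sup} gives the mere existence of \(i : I\) and \(x : F_i\) with \(\sup F_{\sbullet} \initseg b = F_i \initseg x\); since \(\le\) is proposition-valued, I can untruncate, and then \(\abstrexp{\alpha}{\sup F_{\sbullet} \initseg b} \times \alpha = \abstrexp{\alpha}{F_i \initseg x} \times \alpha\) appears as the value at index \(\inr x\) of the family defining \(\abstrexp{\alpha}{F_i}\), hence is bounded by \(\abstrexp{\alpha}{F_i} \le \sup_i \abstrexp{\alpha}{F_i} \le \One \vee \sup_i \abstrexp{\alpha}{F_i}\).

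The arguments are essentially bookkeeping on the definition of the supremum; the only step requiring care is the use of \eqref{initial-segment-of-sup} to represent an arbitrary element of \(\sup F_{\sbullet}\) as \([i,x]\), which relies on the fact that our goal is a proposition so that the propositional truncation can be eliminated. The statement that \eqref{exp-strong-spec} implies \eqref{exp-full-spec} under \(\alpha \ge \One\) has already been noted after the specifications are introduced, so no further work is needed for that part.
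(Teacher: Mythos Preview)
Your argument for the supremum clause is essentially identical to the paper's proof. For the successor clause the paper instead derives it from the more general law \(\alpha^{\beta+\gamma}=\alpha^\beta\times\alpha^\gamma\) (\cref{abstrexp-by-+}) together with \(\alpha^\One=\alpha\); your direct antisymmetry argument is a legitimate alternative (and the paper's footnote mentions that a direct proof is also formalized).

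There is, however, a genuine gap in your justification of the \(\inr(\inl b)\) case. You infer \(\abstrexp{\alpha}{\beta\initseg b}\times\alpha \le \abstrexp{\alpha}{\beta}\times\alpha\) from \(\abstrexp{\alpha}{\beta\initseg b}\le\abstrexp{\alpha}{\beta}\) by citing \cref{+-x-right-monotone}, but that lemma gives monotonicity of \(\times\) in the \emph{right} argument, whereas here the varying factor is on the left. Left monotonicity of \(\times\) is in fact not constructively provable: \(\One \le \One + P\) always holds, but \(\One\times\Two \le (\One+P)\times\Two\) is equivalent to deciding \(P\). The inequality you need is still true, just for a different reason: \(\abstrexp{\alpha}{\beta\initseg b}\times\alpha\) is literally the value at index \(\inr b\) of the family defining \(\abstrexp{\alpha}{\beta}\), hence \(\abstrexp{\alpha}{\beta\initseg b}\times\alpha \le \abstrexp{\alpha}{\beta} = \abstrexp{\alpha}{\beta}\times\One \le \abstrexp{\alpha}{\beta}\times\alpha\), the last step now by right monotonicity and \(\alpha\ge\One\). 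With this fix (which uses \(\alpha\ge\One\) a second time, contrary to your remark), your direct proof goes through.
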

\begin{proof}
  For the successor clause, we want to show
  \(\abstrexp{\alpha}{\beta + \One} = \abstrexp{\alpha}{\beta}\times\alpha\). First note that if \(\alpha \geq \One\), then \(\abstrexp{\alpha}{\One} = \alpha\), by the definition of \(\abstrexp{\alpha}{\One}\) as a supremum. The result then follows from the more general statement in~\cref{abstrexp-by-+} below.\footnote{In the formalization we additionally include a direct proof which is more general in terms of universe levels.}
  Next, for the supremum clause of \eqref{exp-strong-spec},
we want to show \(\abstrexp{\alpha}{\sup F_{{\sbullet}}} = \One \vee \sup {\abstrexp{\alpha}{F_{{\sbullet}}}}\) for a given \(F : I \to \Ord\).
  Since \(F_i \le \sup F_{\sbullet}\) for all \(i : I\), we have
  \(\sup {\abstrexp{\alpha}{F_{{\sbullet}}}} \le \abstrexp{\alpha}{\sup F_{{\sbullet}}}\)
  via \cref{abstrexp-monotone}, and hence \(\One \vee \sup {\abstrexp{\alpha}{F_{{\sbullet}}}} \le \abstrexp{\alpha}{\sup F_{{\sbullet}}}\) since \(\One \leq \abstrexp{\alpha}{\beta}\) for any \(\beta\).
  For the reverse inequality, it suffices to prove
  \(\One \le \One \vee \sup \abstrexp{\alpha}{F_{{\sbullet}}}\) and
  \[
    \abstrexp{\alpha}{\sup F_{\sbullet} \initseg y} \times \alpha
    \le \One \vee \sup \abstrexp{\alpha}{F_{{\sbullet}}}
  \]
  for all \(y : \sup F_{\sbullet}\).
  The former is immediate, and for the latter, we note that \cref{initial-segment-of-sup} implies the
  existence of \(i : I\) and \(x : F_i\) such that
  \[
    {\abstrexp{\alpha}{\sup F_{\sbullet} \initseg y} \times \alpha}
    = {\abstrexp{\alpha}{F_i \initseg x} \times \alpha}
    \le {\abstrexp{\alpha}{F_{i}}} \le {\One \vee \sup \abstrexp{\alpha}{F_{{\sbullet}}}}.
  \]
    Finally, since $\alpha \geq \One$ by assumption, \eqref{exp-full-spec} follows from \eqref{exp-strong-spec}.
\end{proof}

The following two propositions establish the expected connections of exponentiation
with addition and multiplication, respectively.
Note that they hold even without the assumption that \(\alpha \geq \One\).

\begin{proposition}[\flink{Proposition-14}]\label{abstrexp-by-+}
  For ordinals $\alpha$, $\beta$ and $\gamma$, we have
  \[
    \alpha^{\beta + \gamma} = \alpha^\beta \times \alpha^\gamma.
  \]
\end{proposition}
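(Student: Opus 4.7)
The plan is to proceed by transfinite induction on $\gamma$, proving both inequalities $\alpha^{\beta+\gamma} \le \alpha^\beta \times \alpha^\gamma$ and $\alpha^\beta \times \alpha^\gamma \le \alpha^{\beta+\gamma}$ separately and combining them by antisymmetry. Each direction is established by using the universal property of the supremum defining the larger side to reduce to a check on generators, and the initial-segment formula \eqref{initial-segment-of-sum} to handle $(\beta+\gamma)\initseg x$.

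For the direction $\alpha^{\beta+\gamma} \le \alpha^\beta \times \alpha^\gamma$, unfold the left-hand side as the supremum indexed by $\One + (\beta + \gamma)$. It suffices to bound each generator. The $\inl \star$ generator is $\One$, which is $\le \alpha^\beta \times \alpha^\gamma$ since both $\alpha^\beta \ge \One$ and $\alpha^\gamma \ge \One$. For a generator indexed by $\inr(\inl b)$ with $b : \beta$, formula \eqref{initial-segment-of-sum} gives $(\beta+\gamma)\initseg \inl b = \beta \initseg b$, so the generator is $\alpha^{\beta \initseg b} \times \alpha$, which is itself a generator of $\alpha^\beta$ and hence below $\alpha^\beta \le \alpha^\beta \times \alpha^\gamma$. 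For $\inr(\inr c)$ with $c : \gamma$, formula \eqref{initial-segment-of-sum} gives $(\beta+\gamma)\initseg \inr c = \beta + (\gamma \initseg c)$; applying the induction hypothesis rewrites the generator $\alpha^{\beta + (\gamma \initseg c)} \times \alpha$ as $\alpha^\beta \times \alpha^{\gamma \initseg c} \times \alpha$, and associativity of multiplication together with the fact that $\alpha^{\gamma \initseg c} \times \alpha$ is a generator of $\alpha^\gamma$ bounds this by $\alpha^\beta \times \alpha^\gamma$.

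For the reverse inequality $\alpha^\beta \times \alpha^\gamma \le \alpha^{\beta + \gamma}$, I use \cref{multiplication-is-continuous} to pull $\alpha^\beta \times (-)$ inside the supremum defining $\alpha^\gamma$, yielding generators $\alpha^\beta \times \One$ and $\alpha^\beta \times (\alpha^{\gamma \initseg c} \times \alpha)$ for $c : \gamma$. The first simplifies to $\alpha^\beta$ and is $\le \alpha^{\beta+\gamma}$ by \cref{abstrexp-monotone} applied to $\beta \le \beta + \gamma$. For the second, associativity of multiplication and the induction hypothesis rewrite it as $\alpha^{\beta + (\gamma \initseg c)} \times \alpha$, which is precisely the generator of $\alpha^{\beta+\gamma}$ indexed by $\inr(\inr c)$ under the identification $(\beta+\gamma)\initseg \inr c = \beta + (\gamma \initseg c)$.

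The main obstacle is bookkeeping: keeping track of the three-way decomposition of generators on the left (coming from $\One + (\beta + \gamma)$) and matching them correctly against the two-way decomposition on the right that arises after applying continuity of multiplication. The only additional algebraic ingredient beyond the earlier results is associativity of ordinal multiplication, which is standard. Note that because we never appeal to the successor case of \eqref{exp-strong-spec}, no positivity assumption $\alpha \ge \One$ is needed.
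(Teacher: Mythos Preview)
Your proof is correct and takes essentially the same approach as the paper: transfinite induction on $\gamma$, using continuity of multiplication (\cref{multiplication-is-continuous}) to unfold $\alpha^\beta \times \alpha^\gamma$ as a supremum, the induction hypothesis on each $\gamma \initseg c$, and \eqref{initial-segment-of-sum} to match generators. The paper packages the argument slightly differently---it first rewrites $\alpha^\beta \times \alpha^\gamma$ as $\alpha^\beta \vee \sup_{c:\gamma}(\alpha^{(\beta+\gamma)\initseg \inr c}\times\alpha)$ in one step and only invokes antisymmetry at the end---but the content of the two inequalities you spell out is exactly what the paper's final ``unfolding and antisymmetry'' step amounts to.
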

\begin{proof}
  We do transfinite induction on $\gamma$.  Our first observation is that
  \[
    {\alpha^\beta \times \alpha^\gamma} =
    {{\alpha^\beta} \vee {\sup_{c : \gamma}(\alpha^\beta \times \alpha^{\gamma \initseg c} \times \alpha)}},
  \]
  which follows from the fact that multiplication is associative as well as
  continuous on the right (\cref{multiplication-is-continuous}), noting that
  $\vee$ is implemented as a supremum.

  Applying the induction hypothesis, we can rewrite
  $\alpha^\beta \times \alpha^{\gamma \initseg c}$ to
  $\alpha^{\beta + \gamma \initseg c}$, which is
  $\alpha^{(\beta + \gamma) \initseg \inr c}$.  The remaining goal thus is
  \[
    \alpha^{\beta + \gamma} =
    \alpha^\beta \vee \sup_{c : \gamma}\pa{\alpha^{(\beta + \gamma) \initseg \inr c} \times \alpha},
  \]
  which one gets by unfolding the definition on the left and applying antisymmetry of~\(\le\).
\end{proof}

\begin{proposition}[\flink{Proposition-15}]\label{abstrexp-by-product}
  For ordinals $\alpha$, $\beta$ and $\gamma$, iterated exponentiation
  can be calculated as follows:
  \[
    \abstrexp{\pa*{\abstrexp{\alpha}{\beta}}}{\gamma} = \abstrexp{\alpha}{\beta \times \gamma}.
  \]
\end{proposition}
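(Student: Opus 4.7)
The plan is to proceed by transfinite induction on $\gamma$, unfolding the defining suprema on both sides and then using \cref{abstrexp-by-+} to convert sums in the exponent to products of exponentials. I will then match up the two suprema using monotonicity and the universal property.

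For the forward inequality $(\alpha^\beta)^\gamma \le \alpha^{\beta \times \gamma}$, I would unfold the left-hand side as the supremum of $\One$ together with the terms $(\alpha^\beta)^{\gamma \initseg c} \times \alpha^\beta$ for $c : \gamma$. The constant $\One$ is bounded by $\alpha^{\beta \times \gamma}$ by construction. For the other terms, the induction hypothesis rewrites $(\alpha^\beta)^{\gamma \initseg c}$ as $\alpha^{\beta \times (\gamma \initseg c)}$, and \cref{abstrexp-by-+} collapses the product into $\alpha^{\beta \times (\gamma \initseg c)\, +\, \beta}$. Since \cref{multiplication-defining-equations} presents $\beta \times \gamma$ as $\sup_{c : \gamma}(\beta \times (\gamma \initseg c) + \beta)$, this exponent is bounded by $\beta \times \gamma$, and monotonicity in the exponent (\cref{abstrexp-monotone}) closes the bound.

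For the reverse inequality $\alpha^{\beta \times \gamma} \le (\alpha^\beta)^\gamma$, I would unfold the left-hand supremum using the characterization \eqref{initial-segment-of-product} of initial segments of a product: a typical nontrivial term is $\alpha^{\beta \times (\gamma \initseg c)\, +\, (\beta \initseg b)} \times \alpha$ for $(b,c) : \beta \times \gamma$. By \cref{abstrexp-by-+} this equals $\alpha^{\beta \times (\gamma \initseg c)} \times \alpha^{\beta \initseg b} \times \alpha$, and the induction hypothesis replaces the first factor by $(\alpha^\beta)^{\gamma \initseg c}$. The remaining factor $\alpha^{\beta \initseg b} \times \alpha$ is, by the very definition of $\alpha^\beta$ as a supremum, bounded by $\alpha^\beta$, so right-monotonicity of multiplication (\cref{+-x-right-monotone}) yields an upper bound $(\alpha^\beta)^{\gamma \initseg c} \times \alpha^\beta$, which in turn sits inside the supremum defining $(\alpha^\beta)^\gamma$. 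The $\One$ summand is again handled by construction.

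The main obstacle is simply keeping the bookkeeping straight between the two unfoldings: the left-hand side of the identity is a supremum over a single index $c : \gamma$, whereas the initial segment $(\beta \times \gamma) \initseg (b,c)$ introduces a \emph{pair} of indices on the right, and absorbing the extra $b$-coordinate relies on precisely the inequality $\alpha^{\beta \initseg b} \times \alpha \le \alpha^\beta$ that is built into the definition. It is worth noting that, as with \cref{abstrexp-by-+}, none of these steps require the assumption $\alpha \ge \One$, so the identity holds for arbitrary $\alpha$.
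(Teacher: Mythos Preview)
Your proposal is correct and follows essentially the same route as the paper's proof: transfinite induction on~$\gamma$, antisymmetry, and the same two unfoldings with \cref{abstrexp-by-+}, the induction hypothesis, and monotonicity. The only cosmetic difference is that for the forward inequality you invoke \cref{multiplication-defining-equations} directly to bound $\beta \times (\gamma \initseg c) + \beta$ by $\beta \times \gamma$, whereas the paper first rewrites this as $\beta \times ((\gamma \initseg c) + \One)$ via distributivity and then uses $(\gamma \initseg c) + \One \le \gamma$ together with right-monotonicity of multiplication; these are interchangeable.
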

\begin{proof}
  We proceed by transfinite induction on \(\gamma\) and use antisymmetry.
  Since exponentials are least upper bounds, and always positive, in one direction it suffices to prove
  \(
    \abstrexp{\pa*{\abstrexp{\alpha}{\beta}}}{\gamma \initseg c} \times \abstrexp{\alpha}{\beta}
      \le \abstrexp{\alpha}{\beta \times \gamma}
  \)
  for all \(c : \gamma\).
  To this end, notice that
  \begin{align*}
    \abstrexp{\pa*{\abstrexp{\alpha}{\beta}}}{\gamma \initseg c} \times \abstrexp{\alpha}{\beta}
    &=\abstrexp{\alpha}{\beta \times {\gamma \initseg c}} \times \abstrexp{\alpha}{\beta}
    &&{\text{(by IH)}} \\
    &=\abstrexp{\alpha}{\beta \times {\gamma \initseg c} \,+\, \beta}
    &&{\text{(by \cref{abstrexp-by-+})}} \\
    &=\abstrexp{\alpha}{\beta \times \pa*{{\gamma \initseg c} + \One}}
    &&\text{(since \(\times\) distributes over \(+\))} \\
    &\le \abstrexp{\alpha}{\beta \times \gamma},
  \end{align*}
  where the final inequality holds because we have \(\gamma \initseg c \;+\; \One \le \gamma\)
  (by \cref{ordinal-as-sup-of-successors}), exponentiation is monotone in the
  exponent (\cref{abstrexp-monotone}), and multiplication is monotone on the
  right (\cref{+-x-right-monotone}).

  For the other inequality, we show
  \(
  \abstrexp{\alpha}{\pa*{\beta \times \gamma} \initseg (b,c)} \times \alpha
  \le \abstrexp{\pa*{\abstrexp{\alpha}{\beta}}}{\gamma}
  \)
  for all \(b : \beta\) and \(c : \gamma\). Indeed we have
  \begin{align*}
    &\abstrexp{\alpha}{\pa*{\beta \times \gamma} \initseg (b,c)} \times \alpha \\
    &= \abstrexp{\alpha}{{\beta \times (\gamma \initseg c) + \beta \initseg b}} \times \alpha
    &&\text{(by \cref{initial-segment-of-product})} \\
    &= \abstrexp{\alpha}{{\beta \times (\gamma \initseg c)}} \times \abstrexp{\alpha}{\beta \initseg b} \times \alpha
    &&\text{(by \cref{abstrexp-by-+})} \\
    &= \abstrexp{\pa*{\abstrexp{\alpha}{\beta}}}{\gamma \initseg c} \times \abstrexp{\alpha}{\beta \initseg b} \times \alpha
    &&\text{(by IH)} \\
    &\le \abstrexp{\pa*{\abstrexp{\alpha}{\beta}}}{\gamma \initseg c} \times \abstrexp{\alpha}{\beta}
    &&\text{(assoc.\ and monotonicity of \({\times}\))} \\
    &\le \abstrexp{\pa*{\abstrexp{\alpha}{\beta}}}{\gamma}.&&\hspace{-1em}\qedhere
  \end{align*}
\end{proof}

While it is quite clear that addition and multiplication of ordinals preserve
decidable equality, it is not obvious at all that exponentiation also preserves
this property --- exponentiation is defined as a supremum, which is defined as a quotient, and it is not the case that quotients preserve decidable equality.
Luckily, the construction introduced in the next section will make this fact
obvious, at least when \(\alpha\) has a trichotomous least element.

\section{Decreasing Lists: a Constructive Formulation of \texorpdfstring{Sierpi\'nski's}{Sierpinski's} Definition}\label{sec:concrete-approach}

As discussed in the introduction, in a classical metatheory, there is a ``non-axiomatic'' construction of exponentials $\alpha^\beta$ for $\alpha \geq \One$, based on functions $\beta \to \alpha$ with finite support~\cite[\S XIV.15]{sierpinski}.
Recall that $\alpha \geq \One$ means that $\alpha$ has a least element $\bot : \alpha$, and that a function $\beta \to \alpha$ has finite support if it is zero almost everywhere, i.e., if it differs from the least element $\bot$ for only finitely many inputs.
Using classical logic, the set of functions $\beta \to \alpha$ with finite support can then be shown to be an ordinal.

Unfortunately, this construction depends on classical principles in several places.
For example, the notion of being finite splits apart into several different constructive notions such as Bishop finiteness, subfiniteness, Kuratowski finiteness, etc.~\cite{Spiwack2010,Frumin2018}, and different notions seem to be needed to show that functions with finite support form an ordinal, to show that this ordinal satisfies the specification \eqref{exp-full-spec}, to show other expected properties of the exponential, and so on.

Classically, a function with finite support is equivalently given by the finite collection of input-output pairs where the function is greater than zero,
and this gives rise to a formulation that we found to be well behaved constructively.
The finite collection of input-output pairs can be represented as a list in which the input components are ordered decreasingly, which ensures that the representation is unique and that each input has at most one output.
In order to re-use results on ordinal multiplication, where the second component is dominant, i.e., $b_1 < b_2$ implies $(a_1,b_1) < (a_2,b_2)$, we swap the positions of inputs and outputs and consider lists of \emph{output-input} pairs.

\begin{definition}[\(\DL{\alpha}{\beta}\) \flink{Definition-16}]\label{DecrList}
  For ordinals $\alpha$ and $\beta$, we write
  \[
    \DL{\alpha}{\beta} \defeq
    \Subtype{l}{\List(\alpha \times \beta)}{\isdecreasing(\map {\pi_2} \, l)}
  \]
  for the type of lists over \(\alpha \times \beta\) decreasing in the \(\beta\)-component.
\end{definition}

\begin{remark}[\flink{Remark-17}]
  Since the type expressing that a list is decreasing in the second component is a
  proposition, it follows that two elements of \(\DL{\alpha}{\beta}\) are equal as
  soon as their underlying lists are equal.
  Accordingly, in denoting elements \((l, p)\) of type \(\DL{\alpha}{\beta}\), we will always omit the second proof component \(p\),
  and simply write \(l : \DL{\alpha}{\beta}\).
\end{remark}

Following Sierpi\'nski's construction, all outputs in the finite collection of
input-output pairs should be greater than the least element.
Therefore, we should be considering the type
\(
  \DL{\posalpha}{\beta}
\)
where, for 
$\alpha$ with least element $\bot$, we write
\[
	\posalpha \defeq \Subtype{a}{\alpha}{a > \bot}
\]
for the set of all elements greater than the least element.
In general, this subtype is not necessarily an ordinal:
\begin{proposition}[\flink{Proposition-18-i}]\label{positive-subset-ordinal-iff-lem}
  \LEM\ holds if and only if, for all ordinals~$\alpha$, the subtype of positive
  elements $\posalpha$ is an ordinal.
\end{proposition}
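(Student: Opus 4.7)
The proposition is a biconditional, and I would prove the two implications separately. The forward direction is essentially a routine check once $\LEM$ is available; the reverse direction is the subtler one, requiring a specific ordinal that encodes the decidability of an arbitrary proposition.

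For $\LEM \Rightarrow \posalpha$ is always an ordinal, fix any $\alpha$ with least element $\bot$. As a subtype of the set $\alpha$, the type $\posalpha$ is itself a set, and the inherited order is proposition-valued, transitive, and wellfounded by inheritance, so the only thing to check is extensionality. Given $x,y \in \posalpha$ with the same $\posalpha$-predecessors, I would appeal to extensionality of $\alpha$, so it suffices to show $z < x \iff z < y$ for every $z : \alpha$. Using $\LEM$, case-split on $\bot < z$. When $\bot < z$, we have $z \in \posalpha$ and the hypothesis gives the equivalence directly. When $\neg(\bot < z)$, a short wellfounded induction on $z$ shows $z = \bot$: if some $w < z$ existed, then $\bot < w$ would give $\bot < z$ by transitivity (a contradiction), so $\neg(\bot < w)$, hence $w = \bot$ by the induction hypothesis, which again forces $\bot < z$, contradicting the assumption; therefore $z$ has no predecessors and equals $\bot$ by extensionality of $\alpha$. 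In that case both $z < x$ and $z < y$ hold since $x, y > \bot$.

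For the converse, assume $\posalpha$ is an ordinal for every $\alpha$ with a least element, and let $P$ be an arbitrary proposition; I would construct a specific $\alpha_P$ so that the extensionality of $\posalpha_P$ forces $P \vee \neg P$. The idea is to reverse-engineer the failure mode of the forward argument: without $\LEM$, the case split on $\bot < z$ is exactly what fails, so we need to manufacture a limbo element $z \in \alpha_P$ whose positivity is not decidable, together with two elements $x, y \in \posalpha_P$ that are provably distinct in $\alpha_P$ but whose $\posalpha_P$-predecessor sets coincide \emph{unconditionally}, with the identity $x = y$ amounting to decidability of $P$. A natural pool of ingredients is the ordinal $\Omega$ of truth values (where $\bot < Q \iff Q$ allows us to encode ``positivity of $Q$ iff $Q$'') together with copies of $\One$ to fix a canonical $\bot$ and to block the propositional-extensional collapse that would otherwise trivialise $\posalpha_P$. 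Once such an $\alpha_P$ is in hand, extensionality of $\posalpha_P$ yields $\neg\neg P \to P$, and since this holds for every proposition $P$, one obtains full double-negation elimination, which is equivalent to $\LEM$.

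The main obstacle is the concrete construction of $\alpha_P$. Straightforward candidates built directly from $\Omega$ (e.g.\ taking $\alpha_P$ to be the sub-ordinal $\{Q : \Omega \mid Q = \bot \vee Q = P \vee Q = \top\}$, or a sum $\One + \Omega$-flavoured variant) tend to be ordinals already, but their positive subtype collapses via propositional extensionality -- every inhabited $Q : \Omega$ equals $\top$, so $\posalpha_P$ becomes a singleton and extensionality is vacuous. The challenge is to design $\alpha_P$ with enough ``non-propositional'' bulk above $\bot$ that $\posalpha_P$ retains two distinct elements corresponding to $\top$ and $P$, yet with predecessor sets that agree regardless of $P$; I expect the right choice to be a two- or three-level sum incorporating $\Omega$ and a suitable copy of $P$ or quotient of a finite ordinal by a $P$-dependent equivalence, verified carefully against the wellfoundedness and extensionality axioms.
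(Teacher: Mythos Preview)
Your forward direction is correct and in fact more detailed than the paper's one-line sketch. The gap is entirely in the reverse direction: you correctly diagnose that you need two elements of some $\posalpha$ with identical $\posalpha$-predecessor sets whose equality encodes a decidability statement, and you correctly observe that small hand-built candidates tend to collapse under propositional extensionality --- but you never produce a working $\alpha_P$, so the argument is incomplete as stated.

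The paper sidesteps the bespoke-construction problem by taking $\alpha$ to be the (large) ordinal $\Ord$ of ordinals itself, so that $\posalpha = \Ord_{>\Zero}$. The two witnesses are then $\Two$ and $\Omega$: both lie in $\Ord_{>\Zero}$, and one checks that each has \emph{exactly one} predecessor there, namely $\One$. For $\Two$ this is clear. For $\Omega$, any $\alpha$ with $\Zero < \alpha < \Omega$ satisfies $\alpha = \Omega \initseg Q$ for some proposition $Q$, and $\Zero < \alpha$ forces $Q$ to hold, whence $\alpha = \Omega \initseg Q = \One$. Extensionality of $\Ord_{>\Zero}$ then gives $\Two = \Omega$, which is $\LEM$. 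This is exactly the kind of ``non-propositional bulk'' you were looking for: in $\Ord$, the elements $\Two$ and $\Omega$ are genuinely distinct (absent $\LEM$) yet not distinguishable by their positive predecessors, and no quotient or ad~hoc sum is needed.
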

\begin{proof}
  It is not hard to check that \LEM\ allows one to prove that \(\posalpha\) is
  an ordinal.
  For the converse, we assume that \(\Ord_{>\Zero}\), the (large) subtype of
  ordinals strictly greater than \(\Zero\), is an ordinal.
  To prove \LEM\ it is enough to prove that the ordinal \(\Two\) of booleans and
  the ordinal \(\Omega\) of truth values (cf.~the example in \cref{sec:ord-in-hott}) are
  equal. So let us show that they have the same predecessors in \(\Ord_{> \Zero}\),
  namely only the one-element ordinal \(\One\).
  For \(\Two\) it is straightforward that its only predecessor in \(\Ord_{> \Zero}\) is \(\One\).
  For \(\Omega\), we note that if \(\Zero < \alpha < \Omega\), then
  \(\alpha = \Omega \initseg Q\) for some proposition~\(Q\) and further we have
  \(\Zero < Q\), so that \(Q\) must hold, and hence
  \(\alpha = \Omega \initseg Q = \One\).
\end{proof}

To ensure that $\posalpha$ is an ordinal, and consequently $\DL{\posalpha}{\beta}$ as
well, it suffices to require the least element $\bot$ to be \emph{trichotomous},
meaning for all $x:\alpha$, either $x = \bot$ or $x > \bot$.
As pointed out to us by Paul Levy, a trichotomous least element is simply the
least element with respect to the ``disjunctive order'' $\leqslant$ defined by
$x \leqslant y \iff (x < y) + (x = y)$.

\begin{lemma}[\flink{Lemma-19-i}]\label{trichotomy-gives-positive-subset-ordinal}
    An ordinal $\alpha$ has a trichotomous least element if and only if $\alpha = \One + \alpha'$ for some (necessarily unique) ordinal~$\alpha'$. If this happens, then \(\alpha' = \posalpha\).
\end{lemma}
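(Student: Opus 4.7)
The plan is to prove the two directions of the equivalence separately and extract $\alpha'$ as $\posalpha$ along the way, so that uniqueness and the final identification come for free. For the backward direction, given $\alpha = \One + \alpha'$, I transport the element $\inl \star$ across this equality to obtain a candidate $\bot : \alpha$. This is a least element because no element of $\One + \alpha'$ is strictly below $\inl \star$, and it is trichotomous because each element of $\One + \alpha'$ is either $\inl \star$ (equal to $\bot$) or $\inr a$ (strictly above $\bot$ by the definition of ordinal addition).

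For the forward direction, fix a trichotomous least $\bot$ and equip $\posalpha \defeq \Subtype{a}{\alpha}{\bot < a}$ with the inherited order. The key step is to verify that $\posalpha$ is an ordinal: transitivity and wellfoundedness are immediate from $\alpha$, but extensionality requires trichotomy. Concretely, if $x, y : \posalpha$ share their strict predecessors in $\posalpha$, then any $z < x$ in $\alpha$ is either $\bot$ (whence $z < y$, since $\bot < y$) or lies in $\posalpha$ (whence $z < y$ by assumption), and symmetrically, so $x = y$ by extensionality of $\alpha$. Once $\posalpha$ is known to be an ordinal, the map $\One + \posalpha \to \alpha$ sending $\inl \star \mapsto \bot$ and $\inr a \mapsto a$ is a surjective simulation --- trichotomy supplies surjectivity, and $\bot$ being least supplies the simulation property for the $\inl$ branch --- and hence an ordinal equivalence by~\cref{simulation-facts}, yielding $\alpha = \One + \posalpha$ via univalence.

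Uniqueness and the identification $\alpha' = \posalpha$ then follow together. Given any witness $g : \One + \alpha'' \to \alpha$ of $\alpha = \One + \alpha''$, the image $g(\inl \star)$ must coincide with the unique least element $\bot$ since simulations preserve least elements, and the restriction of $g$ to the right summand factors through $\posalpha$ because every $g(\inr a)$ is strictly above $g(\inl \star) = \bot$; this produces an ordinal equivalence $\alpha'' \cong \posalpha$, giving both uniqueness and the displayed formula. The main obstacle in the whole argument is the extensionality check for $\posalpha$, which is precisely where trichotomy earns its keep; without it, \cref{positive-subset-ordinal-iff-lem} already tells us that $\posalpha$ need not be an ordinal at all.
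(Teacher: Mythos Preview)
Your proof is correct and follows essentially the same route as the paper's: both prove extensionality of $\posalpha$ via trichotomy and build the equivalence $\alpha = \One + \posalpha$ by case-splitting on the trichotomy witness. The one minor difference is that for uniqueness the paper invokes the general left-cancellability of ordinal addition (citing Escard\'o), whereas you argue it directly by analysing how any ordinal equivalence $\One + \alpha'' \to \alpha$ must send $\inl\star$ to $\bot$ and $\inr$ into $\posalpha$; this is more self-contained but amounts to the same underlying reasoning.
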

\begin{proof}
  Assume \(\alpha\) has a trichotomous least element \(\bot\).
  We first want to show that in this case \(\posalpha\) is an ordinal, and then
  that \(\alpha = \One + \posalpha\).
  By trichotomy of \(\bot\), we can prove that the order on $\posalpha$ inherited from~$\alpha$ is extensional, and thus that $\posalpha$ is an ordinal, since transitivity and wellfoundedness is always retained by the inherited order. Using trichotomy again, we can define an equivalence $\alpha \to \One + \posalpha$ by mapping $x:\alpha$ to the left if $x=\bot$ and to the right if $x > \bot$.

The converse is immediate, and uniqueness, i.e., \(\alpha' = \posalpha\),
follows as addition is left cancellable, as proven by Escard\'o~\cite[{\texttt{Ordinals.AdditionProperties}}]{TypeTopologyOrdinals}.
\end{proof}

If \(\alpha\) has a trichotomous least element, we thus have our candidate for a more concrete implementation of the exponential \(\alpha^\beta\); the following suggestion is similar to Grayson's~\cite{Grayson1982}, to which we come back in \cref{sec:grayson}.

\begin{definition}[Concrete exponentiation, \(\listexp{\alpha}{\beta}\) \flink{Definition-20}]
  For ordinals \(\alpha\) and \(\beta\) with \(\alpha\) having a trichotomous
  least element, we write $\listexp{\alpha}{\beta}$ for $\DL{\posalpha}{\beta}$
  (cf.~\cref{DecrList}) and call it the \emph{concrete exponentiation of $\alpha$
    and $\beta$}.
\end{definition}

Thanks to \cref{trichotomy-gives-positive-subset-ordinal}, we often choose
to work with the more convenient
\(\listexp{\One + \alpha'}{\beta} = \DL{\alpha'}{\beta}\) rather than
\(\listexp{\alpha}{\beta} = \DL{\posalpha}{\beta}\) in the Agda formalization.
We next prove that indeed \(\listexp{\alpha}{\beta}\) can be given a rather natural order which makes it into an ordinal.

\begin{proposition}[\flinkprime{1}{Proposition-21}]%
  \label{listexp-is-ordinal-with-trichotomous-least-element}
  For ordinals \(\alpha\) and \(\beta\) with \(\alpha\) having a trichotomous
  least element, the lexicographic order on lists makes $\listexp{\alpha}{\beta}$ into an
  ordinal that again has a trichotomous least element.
\end{proposition}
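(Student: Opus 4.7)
The plan is to equip $\listexp{\alpha}{\beta}$ with the strict lexicographic order given by $\nill < (a,b)\cons l$ for any $(a,b)$ and $l$, and by $(a_1,b_1)\cons l_1 < (a_2,b_2)\cons l_2$ iff $b_1 < b_2$, or $b_1 = b_2$ and $a_1 < a_2$, or $(a_1,b_1)=(a_2,b_2)$ and $l_1 < l_2$. This is the list-lex order induced by the reverse-lexicographic order on $\posalpha \times \beta$ used in ordinal multiplication (cf.~\cref{initial-segment-of-product}). The underlying type is a set, being a subtype with proposition-valued predicate (\cref{DecrList}) of $\List(\posalpha \times \beta)$, and transitivity follows by a routine case analysis using transitivity of $<$ on $\posalpha$ and $\beta$. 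The empty list $\nill$ will serve as a trichotomous least element: it lies below every non-empty list by construction, and list destruction provides the decidable disjunction $l = \nill$ or $l = (a,b)\cons l'$.

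For extensionality I would suppose $l_1$ and $l_2$ have the same strict predecessors and argue by induction on the list structure of $l_1$. If $l_1 = \nill$ then $l_2 = \nill$, since any non-empty list has $\nill$ as a strict predecessor. If $l_1 = (a_1,b_1)\cons l_1'$, then $l_2$ must also be non-empty, and to identify the heads I would note that the singleton lists $[(a,b)]$ appearing as predecessors of $l_1$ are exactly those with $(a,b) < (a_1,b_1)$ in the reverse-lexicographic order on $\posalpha \times \beta$. Sharing predecessors with $l_2$ gives the same characterization for $(a_2,b_2)$, so the extensionality of the ordinal $\posalpha \times \beta$ (which is an instance of ordinal multiplication) yields $(a_1,b_1) = (a_2,b_2)$. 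The predecessors with that shared head are then in bijection with the predecessors of the tails, so the induction hypothesis delivers $l_1' = l_2'$.

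The main obstacle is wellfoundedness, which requires a triply nested induction. I would prove, by transfinite induction on $b^\ast : \beta$, the statement $Q(b^\ast)$ asserting that every decreasing list with all $\beta$-components strictly below $b^\ast$ is accessible. For a list of the form $(a_0,b_0)\cons l_0$ with $b_0 < b^\ast$, I would nest a transfinite induction on $a_0 : \posalpha$, and inside that an induction on the accessibility of the tail $l_0$ itself (which is accessible by applying $Q(b_0)$, a consequence of the outer hypothesis since $b_0 < b^\ast$). The predecessors of $(a_0,b_0)\cons l_0$ fall into four classes, each handled by exactly one level of the induction: $\nill$ is trivially accessible; lists $(a,b)\cons l$ with $b < b_0$ are accessible by $Q(b_0)$; lists $(a,b_0)\cons l$ with $a < a_0$ by the middle inductive hypothesis; and lists $(a_0,b_0)\cons l_0'$ with $l_0' < l_0$ by the innermost inductive hypothesis. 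The delicate point is to formulate $Q$ and the nested hypotheses with enough uniformity in the trailing data so that every class of predecessor is actually covered by the appropriate hypothesis.
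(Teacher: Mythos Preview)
Your approach matches the paper's (which is much terser: it simply asserts that transitivity and wellfoundedness transfer to the lexicographic order on lists, and that extensionality holds on the decreasing sublists by structural induction). Two points in your sketch need tightening, though.

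For extensionality, your claim that the singleton predecessors of $(a_1,b_1)\cons l_1'$ are exactly the $[(a,b)]$ with $(a,b)<(a_1,b_1)$ is false when $l_1'\neq\nill$: then $[(a_1,b_1)]$ itself is also such a predecessor (head equal, tail $\nill<l_1'$). The repair is standard: if from $[(a,b)]<l_2$ you land in the case $(a,b)=(a_2,b_2)$ with $l_2'\neq\nill$, then $(a_2,b_2)=(a,b)<(a_1,b_1)$, so $l_2<l_1$; by the shared-predecessor hypothesis this gives $l_2<l_2$, contradicting irreflexivity. With this extra step your head-identification via extensionality of $\posalpha\times\beta$ goes through.

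For wellfoundedness, your outer invariant $Q(b^\ast)$ (``every decreasing list with all $\beta$-components strictly below $b^\ast$ is accessible'') does not by itself yield accessibility of a list $(a_0,b_0)\cons l_0$ whose head component $b_0$ is maximal in $\beta$, since there is no $b^\ast>b_0$ to instantiate. The content of your inner double induction, however, proves the stronger statement $R(b_0)$: for all $a_0$ and all tails $l_0$ with components below $b_0$, the list $(a_0,b_0)\cons l_0$ is accessible---and it uses only $Q(b_0)$ as input. So either reformulate the outer transfinite induction to establish $R(b_0)$ directly (this is the clean version), or, having proved $\forall b^\ast.\,Q(b^\ast)$, run the inner argument once more for arbitrary $b_0$ using the now-available $Q(b_0)$. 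Your closing remark about ``enough uniformity in the trailing data'' is exactly on point here.
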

\begin{proof}
  By \cref{trichotomy-gives-positive-subset-ordinal}, $\posalpha$ is an ordinal,
  hence $\posalpha \times \beta$ is also an ordinal. The
  lexicographic order on $\mathsf{List}(\posalpha \times \beta)$ preserves key
  properties of the underlying order, including transitivity and
  wellfoundedness. Using structural induction on lists, we can show that the
  lexicographic order on the subset of lists decreasing in the second component
  is extensional. Consequently, $\listexp{\alpha}{\beta}$ is an ordinal and
  the empty list \(\nill\) is easily seen to be its least trichotomous element.
\end{proof}

We now wish to characterize the initial segments of the ordinal
\(\listexp{\alpha}{\beta}\). Before doing so we must introduce two instrumental
functions (\cref{exp-segment-inclusion,exp-tail}) and a lemma.

\begin{lemma}[\flinkprime{1}{Lemma-22-i}]\label{listexp-monotone}
  Let $\alpha$ be an ordinal with a trichotomous least element. Any order preserving map \(f : \beta \to \gamma\) induces an order preserving
  map \(\overline{f} : \listexp{\alpha}{\beta} \to \listexp{\alpha}{\gamma}\) by
  applying \(f\) to the second component of each pair in the list.

  Moreover, if \(f\) is a simulation, then so is \(\overline f\).
  Consequently, \(\listexp{\alpha}{-}\) is monotone.
\end{lemma}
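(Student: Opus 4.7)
The plan is to construct $\overline{f}$ by applying $f$ pointwise to the second coordinate of each pair, verify order preservation by induction on the lexicographic order on lists, and treat the simulation case by combining the pullback property of $f$ with the order-reflecting property of simulations. Concretely, I define $\overline{f}\big((a_1, b_1) \cons \cdots \cons (a_n, b_n) \cons \nill\big) \defeq (a_1, f b_1) \cons \cdots \cons (a_n, f b_n) \cons \nill$. Because $f$ is order preserving, consecutive images $f b_i > f b_{i+1}$ remain strictly decreasing, so $\overline{f}(l)$ lies in $\listexp{\alpha}{\gamma}$ and the construction is well-defined.

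For order preservation of $\overline{f}$, I proceed by list induction. If two lists differ at their heads under the reverse lexicographic order on $\posalpha \times \beta$ — either $b < b'$, or $b = b'$ and $a < a'$ — then applying $f$ to the second coordinate preserves the corresponding relation on $\posalpha \times \gamma$ by order preservation of $f$; if the heads coincide, the inductive hypothesis applied to the tails closes the goal, and the base case $\nill < l$ for non-empty $l$ is immediate since $\overline{f}$ fixes $\nill$. The substantive work lies in the simulation claim. Fix $l : \listexp{\alpha}{\beta}$ and $l'' < \overline{f}(l)$; I construct $l' < l$ with $\overline{f}(l') = l''$ by induction on $l$. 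The nontrivial case has $l = (a,b) \cons k$ and $l'' = (a'', c) \cons k''$, and splits into three subcases. First, if $c < f b$, the simulation property of $f$ gives $b_1 < b$ with $f b_1 = c$; then each subsequent second component $c_i$ of $k''$ satisfies $c_i < f b_{i-1}$, so iterating the pullback procedure yields $b_i < b_{i-1}$ with $f b_i = c_i$, producing a decreasing preimage $l' = (a'', b_1) \cons (a_2, b_2) \cons \cdots \cons \nill$ with $l' < l$. Second, if $c = f b$ and $a'' < a$, injectivity of $f$ from \cref{simulation-facts} forces the first coordinate of the preimage to be $b$ itself, and the tail is handled as in the first subcase. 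Third, if the heads agree under $\overline{f}$ and $k'' < \overline{f}(k)$, the inductive hypothesis gives $k' < k$ with $\overline{f}(k') = k''$, and we set $l' \defeq (a,b) \cons k'$.

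The main obstacle is ensuring that each pulled-back list is genuinely decreasing in its second component. This rests on the order-reflecting property of simulations (\cref{simulation-facts}): the strict inequalities among image components in $l''$ lift back to strict inequalities among preimage components in $l'$, so the decreasing condition is preserved along the induction. Finally, monotonicity of $\listexp{\alpha}{-}$ is immediate: a witness $f : \beta \leq \gamma$ is by definition a simulation, so $\overline{f} : \listexp{\alpha}{\beta} \to \listexp{\alpha}{\gamma}$ is a simulation by the previous part, witnessing $\listexp{\alpha}{\beta} \leq \listexp{\alpha}{\gamma}$.
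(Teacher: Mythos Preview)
Your proof is correct and follows essentially the same approach as the paper: define \(\overline{f}\) pointwise, check order preservation by case analysis on the lexicographic order, and for the simulation claim pull back the second components using the simulation property of \(f\), with order-reflecting guaranteeing the preimage list is decreasing. The paper organizes the simulation step slightly differently --- it first observes that \(\overline{f}\) inherits order-reflection from \(f\) and thereby reduces the goal to finding \emph{any} preimage of \(l < \overline{f}\,l_1\) (the inequality \(l' < l_1\) then coming for free), and it inducts on the codomain list rather than the domain list --- but this is an organizational variation, not a different idea.
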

\begin{proof}
  Note that order preservation of \(f\) ensures that the outputs of
  \(\overline f\) are again decreasing in the second component, so that we have
  a well defined map which is easily seen to be order preserving.
  Now suppose that \(f\) is moreover a simulation. Since \(f\) is order
  reflecting and injective (\cref{simulation-facts}), it follows that
  \(\overline f\) is order reflecting. Therefore, it suffices to prove that if
  we have \(l < \overline{f}\,l_1\), then there is
  \(l_2 : \listexp{\alpha}{\beta}\) with \(\overline f\,l_2 = l\).
  We do so by induction on \(l\). The case \(l = \nill\) is easy and if \(l\) is
  a singleton, then we need only use that \(f\) is a simulation.
  So let \(l = (a,c) \cons (a',c') \cons l'\) and
  \(l_1 = (a_1,b_1) \cons l_1'\).
  We proceed by case analysis on \(l < \overline f\,l_1\) and work out the
  details in case \(c < f\,b_1\); the other cases are dealt with similarly.
  Since \(f\) is a simulation we have \(b_2 : \beta\) such that
  \(f \, b_2 = c\).
  Since \(c' < c < f\,b_1\) we have \(\left((a',c') \cons l'\right) < \overline f\,l_1\), and
  hence we get \(l_2'\) such that \(\overline f\, l_2' = (a',c') \cons l'\) by
  induction hypothesis.
  Since \(c' < c = f\,b_2\) and \(f\) is order reflecting, the list
  \(l_2 \defeq (a,b_2) \cons l_2'\) is decreasing in the second component and by
  construction we have \(\overline f\,l_2 = l\) as desired.
\end{proof}

In particular, the construction in \cref{listexp-monotone} gives us 
\begin{fequationprime}{2}{Eq-4}\label{exp-segment-inclusion}
  \iota_b : \listexp{\alpha}{\beta \initseg b} \le \listexp{\alpha}{\beta}.
\end{fequationprime}
for every \(b : \beta\).
In a sense, this map has an inverse: given a list
\(l : \listexp{\alpha}{\beta}\) such that each second component is below some
element \(b : \beta\), we can construct
\begin{fequationprime}{2}{Eq-5}\label{exp-tail}
  \tau_{b}\,l : \listexp{\alpha}{\beta \initseg b}
\end{fequationprime}
by inserting the required inequality proofs.
Moreover, the assignment \(l \mapsto \tau_{b}\,l\) is order preserving and inverse to \(\iota_{b}\).

\begin{proposition}%
  [Initial segments of \(\listexp{\alpha}{\beta}\), \flinkprime{2}{Proposition-23-i}]%
  \label{initial-segment-of-listexp}
  For ordinals \(\alpha\) and \(\beta\) with \(\alpha\) having a trichotomous
  least element, we have
  \begin{equation*}
    \begin{split}
      &{\listexp{\alpha}{\beta} \initseg \left((a,b) \cons l\right)} \\
      &= {\listexp{\alpha}{\beta \initseg b} \times (\alpha \initseg a) +
      \listexp{\alpha}{\beta \initseg b} \initseg \tau_b\,l}.
    \end{split}
  \end{equation*}

  Similarly, for \(a : \posalpha\), \(b : \beta\) and
  \(l : \listexp{\alpha}{\beta\initseg b}\), we have
  \begin{equation*}
    \begin{split}
      &{\listexp{\alpha}{\beta} \initseg \left((a,b) \cons {\iota_b\,l}\right)} \\
      &= {\listexp{\alpha}{\beta \initseg b} \times (\alpha \initseg a) +
      \listexp{\alpha}{\beta \initseg b} \initseg l}.
    \end{split}
  \end{equation*}
\end{proposition}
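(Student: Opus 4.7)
The plan is to construct an ordinal equivalence between the two sides of the first equation and then invoke univalence; the second equation then follows immediately by taking $l := \iota_b l_0$ and using $\tau_b(\iota_b l_0) = l_0$. The conceptual point is that trichotomy of the least element $\bot$ of $\alpha$ lets us decide whether an arbitrary $a' \in \alpha \initseg a$ equals $\bot$ or is positive, which is what allows the two families of elements in $\listexp{\alpha}{\beta} \initseg ((a,b) \cons l)$ below $(a,b)$ to be packaged into a single product $\listexp{\alpha}{\beta \initseg b} \times (\alpha \initseg a)$.

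I define $\phi$ from the right-hand side to the initial segment by case analysis. For an element of the left summand, I use trichotomy and set
\[
  \phi(\inl(l'', a')) \defeq
  \begin{cases}
    \iota_b(l'') & \text{if } a' = \bot,\\
    (a', b) \cons \iota_b(l'') & \text{if } a' > \bot,
  \end{cases}
\]
and for the right summand $\phi(\inr l''') \defeq (a, b) \cons \iota_b(l''')$. Each output lies strictly below $(a, b) \cons l$: the first case gives $\nill$ or a list whose leading second component is strictly below $b$; the second case begins with $(a', b)$ where $a' < a$; and the third case uses that $l''' < \tau_b l$ gives $\iota_b(l''') < \iota_b(\tau_b l) = l$ by monotonicity of $\iota_b$ (\cref{listexp-monotone}) and the fact that $\iota_b$ is order reflecting by \cref{simulation-facts}.

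For the inverse $\psi$, I analyze $l' < (a, b) \cons l$. If $l' = \nill$, set $\psi(l') \defeq \inl(\nill, \bot)$. Otherwise write $l' = (a', b') \cons l''$, and split on the lexicographic inequality: if $b' < b$, all second components of $l'$ are below $b$ and I set $\psi(l') \defeq \inl(\tau_b l', \bot)$; if $b' = b$ and $a' < a$, the tail $l''$ has all second components below $b$ and I set $\psi(l') \defeq \inl(\tau_b l'', a')$ (using that $a' \in \posalpha$ and $a' < a$ place $a'$ in $\alpha \initseg a$); if $(a', b') = (a, b)$ and $l'' < l$, again $l''$ lies in the image of $\iota_b$ and I set $\psi(l') \defeq \inr(\tau_b l'')$. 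Checking $\phi \circ \psi = \id$ and $\psi \circ \phi = \id$ is routine from $\iota_b$ and $\tau_b$ being mutually inverse where defined.

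The main obstacle is the combinatorial case analysis ensuring order preservation of $\phi$ \emph{across} the three cases (within each case it is immediate from the lex comparison on lists, the product order, and $\iota_b$ being a simulation). The key observation is that every $\inl$-output either begins with a pair whose second component is strictly below $b$, or has the form $(a', b) \cons \ldots$ with $a' < a$, so that it is strictly below every $\inr$-output $(a, b) \cons \iota_b(l''')$. Once $\phi$ is shown to be an ordinal equivalence, univalence yields the desired equality of ordinals, and the second equation is the instance $l := \iota_b l_0$ together with $\tau_b(\iota_b l_0) = l_0$.
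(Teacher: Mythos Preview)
Your proposal is correct and essentially identical to the paper's proof: your maps $\phi$ and $\psi$ are precisely the paper's $g$ and $f$ (only defined in the opposite order), with the same four-way case split on the lexicographic inequality for $\psi$ and the same trichotomy-based two-way split for $\phi$, and the second equation is likewise derived from the first via $\tau_b \circ \iota_b = \id$. If anything, you spell out the cross-case order-preservation argument for $\phi$ in slightly more detail than the paper, which simply defers to ``direct calculations''.
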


\begin{proof}
  The second claim follows from the first and the fact that \(\tau_b\) and
  \(\iota_b\) are inverses.

  For the first claim, we construct order preserving functions \(f\) and
  \(g\) in each direction, and show that they are inverse to each other.
  We define an order preserving map
  \begin{align*}
    f &:\hspace{1.2mm} {\listexp{\alpha}{\beta} \initseg \left((a,b) \cons l\right)} \\
    &\hspace{-1.1mm}\to {\listexp{\alpha}{\beta \initseg b} \times (\alpha \initseg a) +
    \listexp{\alpha}{\beta \initseg b} \initseg {\tau_b\,l}}
  \end{align*}
  by cases on \(l_0 < \left((a,b) \cons l\right)\) via
  \[
  f\,l_0 \defeq \begin{cases}
     \inl (\nill,\bot)
     &\text{if \(l_0 = \nill\)}; \\
     \inl ((a',b') \cons \tau_b\,l_1 , \bot)
     &\text{if \(l_0 = (a',b') \cons l_1\) and \(b' < b\)}; \\
     \inl (\tau_b\,l_1, a')
     &\text{if \(l_0 = (a',b) \cons l_1\) and \(a' < a\)}; \\
     \inr (\tau_b\,l_1)
     &\text{if \(l_0 = (a,b) \cons l_1\) and \(l_1 < l\)}.
  \end{cases}
  \]

  In the other direction, we define \(g\), using the fact that equality with
  \(\bot\) is decidable, by
  \begin{align*}
    g\,(\inl (l_1,a')) &\defeq
    \begin{cases}
      \iota_b \, l_1 &\text{if \(a' = \bot\)}; \\
      (a',b) \cons \iota_b \, l_1 &\text{if \(a' > \bot\)};
    \end{cases} \\
          g\,(\inr l_1) &\defeq (a,b) \cons \iota_b \,l_1.
  \end{align*}
  Direct calculations then verify that \(g\) is order preserving and that
  \(f \circ g = \id\) and \(g \circ f = \id\).
\end{proof}

The upcoming \cref{concrete-exp-satisfies-spec,listexp-by-+} can be
derived from the corresponding facts about abstract exponentiation and
\cref{abstrexp-listexp-coincide} below, but for comparison we include sketches
of direct proofs as well (for further details, we refer the interested reader to
the formalization).
Alternatively, \cref{abstrexp-listexp-coincide} can be derived from \cref{concrete-exp-satisfies-spec} and the fact that operations satisfying the specification \eqref{exp-strong-spec} are unique --- the proof effort is about the same for both strategies.

\begin{theorem}[\flinkprime{3}{Theorem-24}]\label{concrete-exp-satisfies-spec}
  Concrete exponentiation \(\listexp{\alpha}{\beta}\) satisfies the specification \eqref{exp-strong-spec} \textup{(}and hence the specification \eqref{exp-full-spec}\textup{)} for $\alpha$ with a trichotomous least element.
\end{theorem}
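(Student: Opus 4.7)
The plan is to verify the two equations of \eqref{exp-strong-spec} directly; since a trichotomous least element forces $\alpha \ge \One$, this automatically gives \eqref{exp-full-spec}. By \cref{trichotomy-gives-positive-subset-ordinal} we may write $\alpha = \One + \posalpha$, so $\listexp{\alpha}{\beta}$ is concretely $\DL{\posalpha}{\beta}$.

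For the successor clause, I would construct an explicit ordinal equivalence $\listexp{\alpha}{\beta + \One} = \listexp{\alpha}{\beta} \times \alpha$. The key observation is that in a decreasing list $l : \DL{\posalpha}{\beta + \One}$ the second components are strictly decreasing, so the top value $\inr \star$ can occur in at most one pair, and if it occurs it must be at the head. This yields a clean case split: either $l$ is entirely supported on $\inl$-entries, so corresponds bijectively to a list in $\DL{\posalpha}{\beta}$; or $l = (a,\inr \star) \cons l'$ for some $a : \posalpha$ and $l'$ supported on $\inl$-entries, corresponding to a pair in $\DL{\posalpha}{\beta} \times \posalpha$. Using distributivity of ordinal multiplication over addition and $\alpha = \One + \posalpha$, the right-hand side rewrites as $\listexp{\alpha}{\beta} + \listexp{\alpha}{\beta} \times \posalpha$, and the two cases correspond to the two summands exactly. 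Order preservation and reflection are then essentially automatic, because the lexicographic order on lists makes any pair with second component $\inr \star$ dominate every pair with second component in $\inl \beta$, matching precisely how the right summand sits strictly above the left in an ordinal sum.

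For the supremum clause $\listexp{\alpha}{\sup F_\bullet} = \One \vee \sup \listexp{\alpha}{F_\bullet}$, the direction from right to left is easy: the empty list witnesses $\One \le \listexp{\alpha}{\gamma}$ for any $\gamma$, and \cref{listexp-monotone} yields simulations $\listexp{\alpha}{F_i} \le \listexp{\alpha}{\sup F_\bullet}$ for each $i : I$. For the reverse inequality, I would construct a simulation by induction on lists: the empty list goes to the bottom of the $\One$ summand, while a non-empty list $(a,b) \cons l$ is handled as follows. By \cref{initial-segment-of-sup} there propositionally exist $i : I$ and $x : F_i$ with $b = [i,x]$ and $\sup F_\bullet \initseg b = F_i \initseg x$; since every remaining second component of $l$ lies strictly below $b$, the entire list re-indexes to some $l^\star : \listexp{\alpha}{F_i}$ whose image under the simulation $\listexp{\alpha}{F_i} \to \listexp{\alpha}{\sup F_\bullet}$ induced by $[i,-]$ is $(a,b) \cons l$; we then send $(a,b) \cons l$ to $[i, l^\star]$ in the supremum.

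The main obstacles I anticipate are twofold. First, the pair $(i,x)$ in the supremum case is only available propositionally, so the map on non-empty lists is not literally pointwise defined; we get away with this because we are ultimately proving the proposition $\listexp{\alpha}{\sup F_\bullet} \le \One \vee \sup \listexp{\alpha}{F_\bullet}$, for which at most one simulation can exist, allowing us to invoke the universal property of propositional truncation. Second, verifying that the resulting candidate is genuinely a simulation reduces to checking that initial segments on both sides agree, which should follow by combining \cref{initial-segment-of-listexp,initial-segment-of-sup} with the simulation property of the $[i,-]$-maps and the representation of $\One \vee \sup \listexp{\alpha}{F_\bullet}$ as $\sup_{\One + I}(\inl \star \mapsto \One; \inr i \mapsto \listexp{\alpha}{F_i})$.
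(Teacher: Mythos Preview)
Your proposal is correct and close to the paper's own proof. For the successor clause, the paper actually delegates to the more general \cref{listexp-by-+} (i.e., $\listexp{\alpha}{\beta + \gamma} = \listexp{\alpha}{\beta} \times \listexp{\alpha}{\gamma}$), whose proof is exactly the list-splitting argument you sketch, specialized to $\gamma = \One$; your direct version is fine and slightly more elementary. For the supremum clause, the underlying observation is the same---every non-empty list over $\sup F_\bullet$ can be re-indexed into a single $F_i$ using that $[i,-]$ is a simulation---but the paper packages it differently: rather than building a second simulation left-to-right and wrestling with the truncation when \emph{defining} a function, it shows that the already-constructed simulation $\sigma : \One \vee \sup \listexp{\alpha}{F_\bullet} \to \listexp{\alpha}{\sup F_\bullet}$ is surjective, hence an ordinal equivalence by \cref{simulation-facts}. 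Since surjectivity is a proposition, the propositionally-existing pair $(i,x)$ from \cref{initial-segment-of-sup} can be used without the detour you describe. Your route works, but the surjectivity framing is a cleaner way to discharge exactly the truncation obstacle you anticipated.
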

\begin{proof}[Proof sketch]
  Again, \eqref{exp-full-spec} follows from \eqref{exp-strong-spec} since $\alpha$ is assumed to be positive.
  The successor case follows from the more general \cref{listexp-by-+} below.
  For the supremum clause of \eqref{exp-strong-spec}, i.e., to prove
  \(\listexp{\alpha}{\sup F_{\sbullet}} = \One \vee \sup\listexp{\alpha}{F_{\sbullet}}\), we appeal to \cref{listexp-monotone} to obtain
  simulations
  \[
    \sigma_i : \listexp{\alpha}{F_i} \le \listexp{\alpha}{\sup F_{\sbullet}},
  \]
  yielding a simulation
  \(\sigma : \One \vee \sup\listexp{\alpha}{F_{\sbullet}} \le \listexp{\alpha}{\sup F_{\sbullet}}\).
  We now show that \(\sigma\) is additionally a surjection, and hence an ordinal
  equivalence.
  The key observation is the following lemma, which follows
  by induction on lists and \cref{initial-segment-of-sup}:
  \begin{claim}
    Given \((a , [i , x]) \cons l\) of type
    \(\listexp{\alpha}{\sup F_{\sbullet}}\), there exists
    \(l' : \listexp{\alpha}{F_i}\) such that
    \(\sigma_i \pa*{(a , x) \cons l'} = (a,[i,x]) \cons l\).
  \end{claim}
  Now, for the surjectivity of \(\sigma\), let
  \(\ell : \listexp{\alpha}{\sup F_{\sbullet}}\) be arbitrary.
  If \(\ell = \nill\), it is the image of the unique element $\ast : \One$.
  If \(\ell = (a,y) \cons l\), then by \cref{initial-segment-of-sup},
  \(y = [i,x]\) for some \(i\) and \(x\), and we apply the claim to get \(l'\)
  with \(\sigma[i,(a,x) \cons l'] = \sigma_i((a,x)\cons l') = \ell\).
\end{proof}

\begin{proposition}[\flinkprime{3}{Proposition-25}]\label{listexp-by-+}
  For ordinals \(\alpha\), \(\beta\) and \(\gamma\) with \(\alpha\) having a
  trichotomous least element, we have
  $\listexp{\alpha}{\beta + \gamma} = \listexp{\alpha}{\beta} \times
  \listexp{\alpha}{\gamma}$.
\end{proposition}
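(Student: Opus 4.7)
The plan is to exhibit an explicit ordinal equivalence between $\listexp{\alpha}{\beta+\gamma}$ and $\listexp{\alpha}{\beta}\times\listexp{\alpha}{\gamma}$ by concatenating appropriately transported lists. Since the coproduct injections $\inl : \beta \to \beta + \gamma$ and $\inr : \gamma \to \beta + \gamma$ are order preserving, \cref{listexp-monotone} lifts them to order-preserving maps $\overline{\inl} : \listexp{\alpha}{\beta} \to \listexp{\alpha}{\beta+\gamma}$ and $\overline{\inr} : \listexp{\alpha}{\gamma} \to \listexp{\alpha}{\beta+\gamma}$. I would define
\[
  \phi(l,m) \defeq \overline{\inr}(m) \mathbin{+\!+} \overline{\inl}(l),
\]
with $\mathbin{+\!+}$ denoting list concatenation. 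This output is decreasing in its second component, because each of the two pieces is individually decreasing and every $\inr$-entry strictly dominates every $\inl$-entry in $\beta + \gamma$, so no violation arises at the join.

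For the inverse, I would define $\psi : \listexp{\alpha}{\beta+\gamma} \to \listexp{\alpha}{\beta}\times\listexp{\alpha}{\gamma}$ by induction on the input list. Since $\inr$-entries dominate $\inl$-entries and the input is strictly decreasing in the second component, no $\inr$-entry can appear after an $\inl$-entry. Using the decidable discrimination between $\inl$ and $\inr$, I would split the list into its $\inr$-prefix and its $\inl$-suffix and strip off the injections on the second components to obtain the two desired lists. A routine calculation shows that $\phi$ and $\psi$ are mutually inverse.

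It then remains to verify that both $\phi$ and $\psi$ are order preserving, which yields an ordinal equivalence. For $\phi$, I would unfold the reverse lexicographic order on the product: $(l_1,m_1) < (l_2,m_2)$ iff $m_1 < m_2$, or $m_1 = m_2$ and $l_1 < l_2$. The first case uses order-preservation of $\overline{\inr}$ together with $\inr$-dominance to show that the $\overline{\inr}$-part already decides the comparison of the concatenations; the second is immediate after canceling the common $\overline{\inr}(m_1)$-prefix. The main obstacle I anticipate is the edge case where $\overline{\inr}(m_1)$ is a proper prefix of $\overline{\inr}(m_2)$: the first differing position then carries an $\inl$-entry (or nothing) versus an $\inr$-entry, and the comparison must appeal directly to the fact that $\inl$-elements are strictly below $\inr$-elements in $\beta+\gamma$. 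An alternative shortcut, flagged by the authors, is to invoke \cref{abstrexp-listexp-coincide} together with \cref{abstrexp-by-+}, which would sidestep the combinatorial bookkeeping entirely.
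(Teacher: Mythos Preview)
Your proposal is correct and is essentially the same approach as the paper's: the paper also constructs mutually inverse, order-preserving maps in both directions, where your \(\phi\) coincides with the paper's \(g\) (defined recursively but computing exactly the concatenation \(\overline{\inr}(m)\mathbin{+\!+}\overline{\inl}(l)\)), and your \(\psi\) is the paper's \(f\), defined by recursion on the list and relying on the same structural observation that a decreasing list over \(\beta+\gamma\) has all \(\inr\)-entries before all \(\inl\)-entries. The paper's presentation is more explicitly recursive while yours is phrased via concatenation and splitting, but the maps, the key lemma, and the case analyses for order preservation are the same.
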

\begin{proof}
  We construct order preserving maps in both directions, and prove that they are inverse to each other. The harder direction is from left to right. It is straightforward to define a function on underlying lists \(f : \listexp{\alpha}{\beta + \gamma} \to \listexp{\alpha}{\beta} \times \listexp{\alpha}{\gamma}\) as follows:
  \begin{align*}
    f : \listexp{\alpha}{\beta + \gamma}
        &\to \listexp{\alpha}{\beta} \times \listexp{\alpha}{\gamma} \\
    \nill &\mapsto (\nill , \nill); \\
    (a , \inl b) \cons l &\mapsto \pa*{(a , b) \cons \pi_1(f\, l),
    \pi_2(f\,l)}; \\
    (a , \inr c) \cons l &\mapsto \pa*{\pi_1(f\, l),
    (a , c) \cons \pi_2(f\,l)}.
  \end{align*}
  The fact that this is well defined, in particular that \(\pi_2(f\,l)\) yields
  a list that is decreasing in the second component, follows from the
  observation that an element of \(\DL{\alpha}{\beta+\gamma}\) starting with
  \((a , \inl b)\) cannot have any \((a' , \inr c)\) entries. Put differently,
  \(\pi_2(f ((a , \inl b) \cons l)) = \nill\) for any \(l\).
  A proof by case analysis shows that \(f\) is order preserving.

  In the other direction, we define
  \begin{align*}
    g : \listexp{\alpha}{\beta} \times \listexp{\alpha}{\gamma}
      &\to \listexp{\alpha}{\beta + \gamma} \\
    (\nill , \nill) &\mapsto \nill ;\\
    ((a , b) \cons l_1 , \nill) &\mapsto (a,\inl b)\cons g(l_1,\nill);\\
   (l_1,(a,c)\cons l_2) &\mapsto (a , \inr c) \cons g (l_1,l_2).
  \end{align*}
  We again prove by a case analysis that \(g\) is order preserving, and that \(g\)  is the inverse of \(f\).
\end{proof}

A feature of concrete exponentiation is that it preserves decidability properties.
\begin{proposition}[\flinkprime{3}{Proposition-26}]\label{listexp-discrete}
  Assume $\alpha$ has a trichotomous least element. If $\alpha$ and $\beta$ have
  decidable equality, then so does $\listexp{\alpha}{\beta}$.
\end{proposition}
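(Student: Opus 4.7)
The plan is to reduce the question to the standard fact that lists over a type with decidable equality themselves have decidable equality. By the remark following \cref{DecrList}, two elements of \(\DL{\posalpha}{\beta}\) are equal if and only if their underlying lists are equal, since the ``decreasing in the second component'' predicate is proposition-valued. Hence it suffices to establish decidable equality on \(\List(\posalpha \times \beta)\), for which it is enough in turn to establish decidable equality on the product \(\posalpha \times \beta\).

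The component \(\beta\) is decidable by hypothesis, so the first step is to verify that \(\posalpha\) inherits decidable equality from \(\alpha\). This is immediate: \(\posalpha\) is the subtype \(\Subtype{a}{\alpha}{a > \bot}\), and since \(a > \bot\) is a proposition, two elements \((a,p), (a',p') : \posalpha\) are equal iff \(a = a'\) in \(\alpha\); decidability of \(a = a'\) thus transfers directly. Once decidable equality on \(\posalpha\) is in hand, decidable equality on \(\posalpha \times \beta\) follows componentwise in the standard way.

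Finally, I would invoke the routine inductive proof that \(\List(X)\) has decidable equality whenever \(X\) does: one argues by simultaneous case analysis on the two lists, using decidability of equality on \(X\) to compare heads and the induction hypothesis to compare tails. Applying this with \(X \defeq \posalpha \times \beta\) yields decidable equality on \(\List(\posalpha \times \beta)\), which by the first paragraph descends to \(\listexp{\alpha}{\beta}\).

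No step here is really an obstacle; the whole argument is a chain of standard closure properties of decidable equality (under propositional subtypes, binary products, and list formation), together with the observation that equality in \(\DL{\posalpha}{\beta}\) is determined at the level of the underlying list. The only point that deserves emphasis is that the trichotomous-least-element hypothesis is used implicitly only in so far as it makes \(\posalpha\) (and hence \(\listexp{\alpha}{\beta}\)) well defined in the first place, via \cref{trichotomy-gives-positive-subset-ordinal}.
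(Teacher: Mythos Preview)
Your proposal is correct and matches the paper's approach: both arguments reduce to the standard closure of decidable equality under subtypes, binary products, and list formation. The paper states this in a single sentence (noting that \(\listexp{\alpha}{\beta}\) is a subtype of \(\List(\alpha \times \beta)\)), whereas you factor through \(\List(\posalpha \times \beta)\) and spell out each closure step, but the content is the same.
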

\begin{proof}
  All of $\times$, \(\List\), and taking subtypes preserve decidable equality, and $\listexp{\alpha}{\beta}$ is a subtype of \(\List\,(\alpha \times \beta)\).
\end{proof}

We recall that an ordinal \(\alpha\) is said to be \emph{trichotomous} if, for every \(x,y : \alpha\), we
have \((x < y) + (x = y) + (y < x)\).
\begin{proposition}[\flink{Proposition-27}]\label{listexp-trichotomous}
  If $\alpha$ and $\beta$ are trichotomous, then so is $\listexp{\alpha}{\beta}$.
\end{proposition}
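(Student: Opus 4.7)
The plan is to build up trichotomy from the components of $\listexp{\alpha}{\beta}$ to the final lexicographically ordered list type, in three short steps.

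First, since $\alpha$ is trichotomous and the strict order and equality on $\posalpha$ are inherited from $\alpha$, the subtype $\posalpha$ is trichotomous; combined with trichotomy of $\beta$ and the reverse lexicographic order of \cref{initial-segment-of-product}, I get trichotomy of $\posalpha \times \beta$. Concretely, given pairs $(a_1,b_1)$ and $(a_2,b_2)$, I apply trichotomy of $\beta$ to $b_1$ and $b_2$ first; in the two strict cases the comparison is immediate, and in the equality case I break the tie using trichotomy of $\posalpha$ on $a_1$ and $a_2$.

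Second, I show by induction on one list (with case analysis on the other) that $\List(\posalpha \times \beta)$ under the lexicographic order is trichotomous. If both lists are $\nill$, they are equal; if exactly one is $\nill$, then $\nill$ being the least element gives the strict comparison; if both are of the form $x_i \cons l_i'$, I compare the heads $x_1, x_2$ using step one, and in the equality case invoke the induction hypothesis on the tails to compare $l_1'$ and $l_2'$.

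Finally, $\listexp{\alpha}{\beta} = \DL{\posalpha}{\beta}$ is carved out of $\List(\posalpha \times \beta)$ by a proposition (cf.\ \cref{DecrList}) and carries the order inherited from the ambient list type, so trichotomy of the supertype restricts directly to the subtype: equality of two decreasing lists coincides with equality of the underlying lists, and the strict order is literally the same. The only real work is the routine combinatorial bookkeeping in step two; no nonconstructive principles are needed, since every disjunction is decided by data already supplied by the trichotomy hypotheses on $\alpha$ and $\beta$.
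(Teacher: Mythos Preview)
Your proposal is correct and matches the paper's approach: the paper's proof is the single sentence ``Proved by a straightforward induction on lists,'' and you have simply unpacked that induction into the natural three layers (trichotomy of \(\posalpha\times\beta\), then of lists under the lexicographic order, then restriction to the decreasing subtype). One tiny quibble: the reverse lexicographic order on \(\posalpha\times\beta\) is introduced in the text just before \cref{initial-segment-of-product}, not by that equation itself, so the cross-reference is slightly off, but this is cosmetic rather than mathematical.
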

\begin{proof}
  Proved by a straightforward induction on lists.
\end{proof}

\section{Abstract and Concrete Exponentiation}\label{sec:equivalence}

Since both the abstract and concrete constructions of ordinal exponentiation
satisfy the specification~\eqref{exp-strong-spec}, they in fact coincide
whenever the base has a trichotomous least element.
We give an alternative proof based on initial segments
(\cref{abstrexp-listexp-coincide}), and explain its computational content
by showing how it relates to a surjective denotation function, which represents
elements of the abstract exponential as lists of the concrete exponential.

\subsection{Abstract and Concrete Exponentiation Coincide}

If \(\alpha\) has a trichotomous least element, then \(\alpha\) in particular has a least element, i.e., \(\One \leq \alpha\). Hence both the abstract exponentiation \(\alpha^\beta\) and the concrete exponentiation \(\listexp{\alpha}{\beta}\) are well defined and well behaved in this case.

\begin{theorem}[\flink{Theorem-28}]\label{abstrexp-listexp-coincide}
  For all ordinals \(\alpha\) and \(\beta\) such that \(\alpha\) has a
  trichotomous least element, we have
  \[
    {\abstrexp{\alpha}{\beta}} = {\listexp{\alpha}{\beta}}.
  \]
\end{theorem}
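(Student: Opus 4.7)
The plan is to prove the equation by transfinite induction on $\beta$ and antisymmetry, matching up the initial segment characterizations given by \cref{initial-segment-of-abstrexp,initial-segment-of-listexp}. Throughout, fix $\alpha$ with trichotomous least element $\bot$, so that \cref{listexp-is-ordinal-with-trichotomous-least-element} guarantees $\listexp{\alpha}{\beta}$ is an ordinal. The induction hypothesis reads $\abstrexp{\alpha}{\beta \initseg b} = \listexp{\alpha}{\beta \initseg b}$ for every $b : \beta$, and I construct simulations in each direction by, for each element on one side, exhibiting an element on the other side whose initial segment agrees.

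For $\abstrexp{\alpha}{\beta} \leq \listexp{\alpha}{\beta}$, the least element $[\inl \star, \star]$ of $\abstrexp{\alpha}{\beta}$ is matched by $\nill : \listexp{\alpha}{\beta}$, both having $\Zero$ as initial segment. For a generic element $[\inr b, (e, a)]$ with $b : \beta$, $e : \abstrexp{\alpha}{\beta \initseg b}$ and $a : \alpha$, the essential step is to use trichotomy of $\bot$ to case split on whether $a = \bot$ or $a > \bot$. In the first case, \cref{initial-segment-of-abstrexp} collapses the segment to $\abstrexp{\alpha}{\beta \initseg b} \initseg e$; by the induction hypothesis this equals $\listexp{\alpha}{\beta \initseg b} \initseg l$ for the list $l$ corresponding to $e$, and the matching element on the list side is $\iota_b\,l : \listexp{\alpha}{\beta}$, since $\iota_b$ from \eqref{exp-segment-inclusion} is a simulation and therefore preserves initial segments by \cref{simulation-facts}. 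In the second case, the induction hypothesis transports $e$ to a list $l : \listexp{\alpha}{\beta \initseg b}$, and the element $(a, b) \cons \iota_b\,l : \listexp{\alpha}{\beta}$ (well-defined since every second component of $\iota_b\,l$ lies below $b$) cuts out the same segment $\listexp{\alpha}{\beta \initseg b} \times (\alpha \initseg a) + \listexp{\alpha}{\beta \initseg b} \initseg l$ by \cref{initial-segment-of-listexp}.

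The reverse inclusion $\listexp{\alpha}{\beta} \leq \abstrexp{\alpha}{\beta}$ is symmetric: the empty list is matched by $[\inl \star, \star]$, while a non-empty list $(a, b) \cons l$ (with $a > \bot$ automatically by definition of $\listexp{\alpha}{\beta}$) is matched by $[\inr b, (e, a)]$, where $e : \abstrexp{\alpha}{\beta \initseg b}$ is the image under the induction hypothesis of $\tau_b\,l$ from \eqref{exp-tail}. The step I expect to require the most care is verifying that the initial segments genuinely agree rather than merely being isomorphic: the induction hypothesis is an equality of ordinals relying on univalence, and its content must be threaded through the arithmetic equations so that the two segments line up on the nose. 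Trichotomy of $\bot$ is essential, both for the above case split and, more fundamentally, for $\listexp{\alpha}{\beta}$ to even be an ordinal. An alternative route, sidestepping this book-keeping, uses \cref{abstract-exp-satisfies-spec,concrete-exp-satisfies-spec} together with uniqueness of solutions to the recursive equation induced by \eqref{exp-strong-spec}, as flagged by the authors immediately before the theorem.
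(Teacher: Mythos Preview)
Your proposal is correct and follows essentially the same approach as the paper: transfinite induction on \(\beta\), the trichotomy case split on \(a\), and matching initial segments via \cref{initial-segment-of-abstrexp,initial-segment-of-listexp} together with the maps \(\iota_b\) and \(\tau_b\). The paper phrases the matching of initial segments as an appeal to extensionality of \(\Ord\) rather than antisymmetry via two simulations, but the calculational content is identical.
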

\begin{proof}
  Let \(\bot\) be the trichotomous least element of \(\alpha\). We prove the equation by transfinite induction on \(\beta\). Our induction hypothesis reads:
  \begin{equation}\tag{IH}\label{coincide-IH}
    \forall (b : \beta) .\, \abstrexp{\alpha}{\beta \initseg b} =
    \listexp{\alpha}{\beta \initseg b}.
  \end{equation}
  These equalities induce simulations and simulations preserve initial segments
  (by \cref{simulation-facts}), so for all \(b : \beta\), the
  simulation provides for every \(e : \abstrexp{\alpha}{\beta\initseg b}\) a
  unique \(l : \listexp{\alpha}{\beta\initseg b}\) with
  \(\abstrexp{\alpha}{\beta\initseg b} \initseg e = \listexp{\alpha}{\beta
    \initseg b} \initseg l\), and similarly if we start with an element of
  \(\listexp{\alpha}{\beta \initseg b}\) instead.

  By extensionality of the ordinal of ordinals, it suffices to show that
  each initial segment of \(\abstrexp{\alpha}{\beta}\) is equal to an
  initial segment of \(\listexp{\alpha}{\beta}\) and vice versa.

  Suppose first that we have \(e_0 : \abstrexp{\alpha}{\beta}\). By
  \cref{initial-segment-of-sup} we have \(e_0 = [\inl\star,\star] \equiv \bot\) or
  \(e_0 = [\inr b,(e,a)]\) with \(a : \alpha\) and
  \(e : \abstrexp{\alpha}{\beta \initseg b}\).
  In the first case, we have
  \({\abstrexp{\alpha}{\beta} \initseg e_0} = \Zero = \listexp{\alpha}{\beta} \initseg
  \nill\). The second case has two subcases:
  \(a\) is either equal to the trichotomous least
  element \(\bot\), or greater than it.
  If \(a = \bot\), then we calculate
  \begin{align*}
    &\abstrexp{\alpha}{\beta} \initseg e_0 \\
    &= \abstrexp{\alpha}{\beta \initseg b} \initseg e
    &&\text{(by \cref{initial-segment-of-abstrexp})} \\
    &= \listexp{\alpha}{\beta \initseg b} \initseg l
    &&\text{(for a unique \(l\) by \ref{coincide-IH})} \\
    &= \listexp{\alpha}{\beta} \initseg {\iota_b\,l}
    &&\text{(using \cref{simulation-facts}
       and \(\iota_b\) from \cref{exp-segment-inclusion})}
  \end{align*}
  completing the proof for this case.
  If \(a > \bot\), then we calculate
  \begin{align*}
    &\abstrexp{\alpha}{\beta} \initseg e_0 \\
    &= \abstrexp{\alpha}{\beta \initseg b}
      \times (\alpha \initseg a)
      + \abstrexp{\alpha}{\beta \initseg b} \initseg e
    &&\text{(by \cref{initial-segment-of-abstrexp})} \\
    &= \listexp{\alpha}{\beta \initseg b}
      \times (\alpha \initseg a)
    + \listexp{\alpha}{\beta \initseg b} \initseg l
    &&\text{(for a unique \(l\) by \ref{coincide-IH})} \\
    &= \listexp{\alpha}{\beta} \initseg \left((a, b) \cons \iota_b\,l \right)
    &&\text{(by \cref{initial-segment-of-listexp})}
  \end{align*}
  completing the proof for this case.

  Now let \(l_0 : \listexp{\alpha}{\beta}\). Then either \(l_0 = \nill\) in
  which case we are done, because
  \(\listexp{\alpha}{\beta} \initseg l_0 = \Zero = \abstrexp{\alpha}{\beta}
  \initseg \bot\), or \(l_0 = (a,b) \cons l\). In this second case, we calculate
  \begin{align*}
    &\listexp{\alpha}{\beta} \initseg {\left((a,b) \cons l\right)} \\
    &= \listexp{\alpha}{\beta \initseg b} \times (\alpha \initseg a)
	+
      \listexp{\alpha}{\beta \initseg b} \initseg {\tau_b\,l}
    &\text{(by \cref{initial-segment-of-listexp})} \\
    &= {\abstrexp{\alpha}{\beta \initseg b} \times (\alpha \initseg a) +
      \abstrexp{\alpha}{\beta \initseg b} \initseg e}
    &\text{(for a unique \(e\) by \ref{coincide-IH})} \\
    &= \abstrexp{\alpha}{\beta} \initseg {[\inr b,(e,a)]}
    &\text{(by \cref{initial-segment-of-abstrexp})}
  \end{align*}
  finishing the proof.
\end{proof}

The following decidability properties follow directly from
\cref{abstrexp-listexp-coincide} and
\cref{listexp-discrete,listexp-trichotomous}.
\begin{corollary}[\flink{Corollary-29-i}]%
  \label{abstrexp-decidability-properties}\leavevmode
  \begin{enumerate}[label=(\roman*)]
  \item Suppose that \(\alpha\) has a trichotomous least element.  If \(\alpha\)
    and \(\beta\) have decidable equality, then so does
    \(\abstrexp{\alpha}{\beta}\).
  \item Suppose \(\alpha\) has a least element.
    If \(\alpha\) and \(\beta\) are trichotomous, then so is the exponential
    \(\abstrexp{\alpha}{\beta}\).~\qedNoProof
  \end{enumerate}
\end{corollary}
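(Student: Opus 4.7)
The plan is to derive both statements as essentially immediate corollaries of \cref{abstrexp-listexp-coincide}, using it to transport the already-established properties of the concrete exponential \(\listexp{\alpha}{\beta}\) (\cref{listexp-discrete,listexp-trichotomous}) to the abstract one \(\abstrexp{\alpha}{\beta}\). Since both sides are literally equal as ordinals under the stated hypotheses, their underlying types are in bijection (and the order is preserved), so any property that is invariant under ordinal equivalence transfers between them. Both decidable equality and trichotomy are such invariants.

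For part (i), the hypothesis directly matches that of \cref{abstrexp-listexp-coincide}: \(\alpha\) has a trichotomous least element. So I would invoke that theorem to get \(\abstrexp{\alpha}{\beta} = \listexp{\alpha}{\beta}\), and then apply \cref{listexp-discrete} to conclude that the right-hand side has decidable equality, given that \(\alpha\) and \(\beta\) do. Transporting along the equality then finishes the argument.

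For part (ii), the only subtlety is a mild mismatch in hypotheses: we are only assuming that \(\alpha\) has a least element \(\bot\), not that \(\bot\) is trichotomous. However, since \(\alpha\) is assumed to be trichotomous as an ordinal, this upgrade is automatic. Given any \(x : \alpha\), trichotomy yields \(x < \bot\), \(x = \bot\), or \(\bot < x\), and the first disjunct is impossible because \(\bot\) is the least element. Hence \(\bot\) is trichotomous and \cref{abstrexp-listexp-coincide} applies, giving \(\abstrexp{\alpha}{\beta} = \listexp{\alpha}{\beta}\). An application of \cref{listexp-trichotomous} then yields trichotomy of the concrete exponential, which transports to the abstract one.

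There is no real obstacle here; the only thing worth flagging carefully in the write-up is the observation used in (ii), namely that a least element of a trichotomous ordinal is automatically a trichotomous least element, which is what allows us to apply \cref{abstrexp-listexp-coincide} under the slightly weaker-looking hypothesis.
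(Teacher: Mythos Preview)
Your proposal is correct and matches the paper's approach exactly: the paper simply states that the corollary follows directly from \cref{abstrexp-listexp-coincide} together with \cref{listexp-discrete,listexp-trichotomous}, and gives no further proof. Your additional observation for part~(ii), that a least element of a trichotomous ordinal is automatically trichotomous, is the small point needed to reconcile the hypotheses and is entirely appropriate.
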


Before we knew that \cref{abstrexp-listexp-coincide} was true, we started working on a direct proof that repeated concrete exponentiation is exponentiation by the product, i.e., \(\listexp{\listexp{\alpha}{\beta}}{\gamma} = \listexp{\alpha}{\beta \times \gamma}\). However, dealing with all of the side conditions stating that lists are decreasing proved to be too tedious for us to finish the construction. Fortunately, this result follows for free via \cref{abstrexp-by-product,abstrexp-listexp-coincide}.

\begin{corollary}[\flinkprime{4}{Corollary-30}]\label{listexp-by-product}
  For ordinals \(\alpha\), \(\beta\) and \(\gamma\) with \(\alpha\) having a
  trichotomous least element, we have
  \(
  \listexp{\alpha}{\beta \times \gamma} =
  \listexp{\listexp{\alpha}{\beta}}{\gamma}
  \). \qedNoProof
\end{corollary}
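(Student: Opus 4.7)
The plan is to derive this corollary by chaining together \cref{abstrexp-by-product} (iterated abstract exponentiation equals exponentiation by the product) and \cref{abstrexp-listexp-coincide} (abstract and concrete exponentiation agree when the base has a trichotomous least element), thereby avoiding any direct manipulation of decreasing lists, whose bookkeeping was the tedious part in the list-based approach.

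First I would check that the right-hand side \(\listexp{\listexp{\alpha}{\beta}}{\gamma}\) is even well defined, i.e., that the base \(\listexp{\alpha}{\beta}\) itself has a trichotomous least element. This is supplied by \cref{listexp-is-ordinal-with-trichotomous-least-element}, which asserts precisely that \(\listexp{\alpha}{\beta}\) has a trichotomous least element (the empty list \(\nill\)) whenever \(\alpha\) does. Via \cref{abstrexp-listexp-coincide}, the same holds for \(\abstrexp{\alpha}{\beta}\), which is what I need in order to apply \cref{abstrexp-listexp-coincide} with \(\abstrexp{\alpha}{\beta}\) as base.

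Then the proof is the chain of equalities
\begin{align*}
  \listexp{\listexp{\alpha}{\beta}}{\gamma}
  &= \listexp{\abstrexp{\alpha}{\beta}}{\gamma} &&\text{(by \cref{abstrexp-listexp-coincide})} \\
  &= \abstrexp{\pa*{\abstrexp{\alpha}{\beta}}}{\gamma} &&\text{(by \cref{abstrexp-listexp-coincide} again)} \\
  &= \abstrexp{\alpha}{\beta \times \gamma} &&\text{(by \cref{abstrexp-by-product})} \\
  &= \listexp{\alpha}{\beta \times \gamma} &&\text{(by \cref{abstrexp-listexp-coincide}).}
\end{align*}
The first step uses that \(\listexp{-}{\gamma}\) respects identifications of ordinals, which is automatic thanks to univalence.

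I do not anticipate a serious obstacle: the whole point is that once the equivalence between the abstract and concrete formulations is established, representation independence lets us transport \cref{abstrexp-by-product} across it for free. The only small point requiring care is ensuring that \(\abstrexp{\alpha}{\beta}\) really does inherit a trichotomous least element from \(\listexp{\alpha}{\beta}\), so that \cref{abstrexp-listexp-coincide} is applicable in the middle step of the chain, but this is immediate from \cref{abstrexp-listexp-coincide} together with \cref{listexp-is-ordinal-with-trichotomous-least-element}.
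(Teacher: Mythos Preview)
Your proposal is correct and matches the paper's intended argument exactly: the paper states that the result ``follows for free via \cref{abstrexp-by-product,abstrexp-listexp-coincide}'' and, immediately after the corollary, notes the same point you raise about \(\listexp{\alpha}{\beta}\) having a trichotomous least element via \cref{listexp-is-ordinal-with-trichotomous-least-element}.
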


Note that in the above corollary we implicitly used that \(\exp(\alpha,\beta)\)
has a trichotomous least
element~(\cref{listexp-is-ordinal-with-trichotomous-least-element}).

\subsection{Lists as Representations}\label{subsec:lists-as-representations}

As remarked above, we used the assumption that the least element of $\alpha$ is trichotomous to show that the set of lists $\DL{\alpha}{\beta}$ is an ordinal.
But even without this assumption, we can view such a list as a \emph{representation} of something in the abstract exponential~$\abstrexp{\alpha}{\beta}$.
This is made precise by the following denotation function.

\begin{definition}[Denotation function, \(\denotes{-}{\beta}\) \flinkprime{5}{Definition-31}]
  We define
  \[
   \denotes - \beta : \DL{\alpha}{\beta} \to \abstrexp{\alpha}{\beta}
  \]
  for any $\alpha$ by transfinite induction on $\beta$:
  \[
    \begin{array}{ccc}
      \denotes {\;\nill\;} \beta  & \defeq & \bot;\\
      \denotes {(a,b) \cons l} \beta & \defeq & [\inr b, (\denotes {\tau_b\,l} {\beta \initseg b}  , a)];
    \end{array}
  \]
  with \(\tau_b\) as in \cref{exp-tail}.
\end{definition}

\begin{remark}
  Note that the above definition is not by recursion on the
  list, as the recursive call in the non-empty list case is on \(\tau_b\,l\), which is not directly structurally smaller than \((a,b) \cons l\). With more work, \(\denotes{l}{\beta}\) could be defined by induction on the \emph{length} of the list \(l\), but a construction by transfinite induction on \(\beta\) with a non-recursive case-split on \(l\) is more straightforward.
\end{remark}

Every element of the abstract exponential $\abstrexp{\alpha}{\beta}$ merely has
a representation as a list, in the following sense:
\begin{proposition}[\flinkprime{5}{Proposition-33}]\label{denotes-surjective}
  For all ordinals \(\alpha\) and \(\beta\), the denotation function
  $\denotes - \beta$ is surjective.
\end{proposition}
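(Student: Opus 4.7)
The plan is to prove surjectivity by transfinite induction on $\beta$. The statement of surjectivity ($\forall e.\, \|\Sigma l.\, \denotes{l}{\beta} = e\|$) is proposition-valued, which is what makes the argument go through despite the fact that every decomposition of an element of a supremum is only available under propositional truncation.

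So fix $\beta$ and assume, for every $b : \beta$, that $\denotes{-}{\beta \initseg b} : \DL{\alpha}{\beta\initseg b} \to \abstrexp{\alpha}{\beta\initseg b}$ is surjective. Take an arbitrary $e : \abstrexp{\alpha}{\beta}$. By definition, $\abstrexp{\alpha}{\beta}$ is the supremum of the family indexed by $\One + \beta$, so \eqref{initial-segment-of-sup} yields, merely, one of two cases: either $e = [\inl\star,\star] \equiv \bot$, or $e = [\inr b, (e',a)]$ for some $b : \beta$, $a : \alpha$ and $e' : \abstrexp{\alpha}{\beta \initseg b}$. Since we are proving a proposition about $e$, we may eliminate this truncation. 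In the first case we pick $l \defeq \nill$ and the equation $\denotes{\nill}{\beta} = \bot$ holds by definition.

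In the second case, the induction hypothesis at $\beta \initseg b$ gives, merely, an $l' : \DL{\alpha}{\beta \initseg b}$ with $\denotes{l'}{\beta \initseg b} = e'$; again, since our goal is a proposition, we obtain an actual such $l'$. Using the inclusion $\iota_b : \listexp{\alpha}{\beta \initseg b} \to \listexp{\alpha}{\beta}$ from \eqref{exp-segment-inclusion} (which only depends on $\alpha$ being a type, and not on trichotomy, since it merely injects each second component via $\pi_1 : (\beta \initseg b) \to \beta$), we form
\[
  l \defeq (a,b) \cons \iota_b\,l'.
\]
This list is decreasing in the second component because every second component of $\iota_b\,l'$ is strictly below $b$. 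Then, unfolding the definition of the denotation and using that $\tau_b$ is left inverse to $\iota_b$, we compute
\[
  \denotes{(a,b)\cons \iota_b\,l'}{\beta} = [\inr b,(\denotes{\tau_b(\iota_b\,l')}{\beta\initseg b},a)] = [\inr b, (\denotes{l'}{\beta\initseg b}, a)] = [\inr b,(e',a)] = e,
\]
as required.

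The main obstacle, such as it is, is bookkeeping: one must keep track that $\iota_b$ and $\tau_b$ make sense in this greater generality (no trichotomy on $\alpha$ is needed for their definition, only for making $\listexp{\alpha}{\beta}$ into an ordinal), and one must be disciplined about eliminating the two layers of propositional truncation — one from the characterization of elements of the supremum and one from the inductive hypothesis — which is justified only because surjectivity of $\denotes{-}{\beta}$ at a particular element is itself a proposition.
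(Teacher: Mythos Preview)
Your proof is correct and follows essentially the same route as the paper: transfinite induction on $\beta$, case analysis on the element of the supremum (licensed because the goal is a proposition), the empty list for the $\bot$ case, and $(a,b)\cons\iota_b\,l'$ obtained from the inductive hypothesis for the other case. Your added bookkeeping about $\tau_b\circ\iota_b=\id$ and the remark that $\iota_b$, $\tau_b$ make sense on $\DL{\alpha}{-}$ without any trichotomy assumption is exactly the detail the paper leaves implicit.
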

\begin{proof}
	We do transfinite induction on $\beta$.
	For every $x : \abstrexp{\alpha}{\beta}$ we need to show that there exists a list $l$ with $\denotes{l}{\beta} = x$.
	As the goal is a proposition, we can do case distinction on $x$; for $x = [\inl \star,\star]$, the list is $\nill$, which leaves us with the case that $x$ is given by $b : \beta$ together with $e : \abstrexp{\alpha}{\beta \initseg b}$ and $a : \alpha$.
	By the induction hypothesis, there exists a list $l' : \DL{\alpha}{\beta\initseg b}$ whose denotation is $e$, ensuring that $(a,b) \cons \iota_b\,l'$ represents $x$.
\end{proof}

Note that \cref{denotes-surjective} does not assume that \(\alpha\) has a least element, and definitely not that it has a trichotomous least element.
However, when $\alpha$ does have a trichotomous least element, the equality established in
\cref{abstrexp-listexp-coincide}
induces a map from concrete to abstract exponentials,
\[
\contoabs : \listexp{\alpha}{\beta} \to \abstrexp{\alpha}{\beta},
\]
and we could hope to relate this function to the denotation map \({\denotes{-}{\beta} : \DL{\alpha}{\beta} \to \abstrexp{\alpha}{\beta}}\).

Thanks to the trichotomous least element, we can normalize a list by removing those pairs which have the least element in the $\alpha$-component, yielding a map
\[
\normalize : \DL{\alpha}{\beta} \to \DL{\posalpha}{\beta}
\]
with
\[
  \normalize\,{((a,b)\cons l)} \defeq
  \begin{cases}
    \normalize\,l\ & \text{if \(a = \bot\)}; \\
    (a, b) \cons \normalize\,l & \text{if \(a > \bot\)}.
  \end{cases}
\]
Note that the codomain of \(\normalize\) is exactly \(\listexp{\alpha}{\beta}\).

The normalization function allows us to compare the induced map with the denotation function:
\begin{theorem}[\flinkprime{6}{Theorem-34}]\label{related-denotations}
  In case $\alpha$ has a trichotomous least element, the denotation function
  coincides with the equality between abstract and concrete exponentiation in the
  following sense:
  \[
    \denotes{-}{\beta} = \contoabs \circ \normalize.
  \]
\end{theorem}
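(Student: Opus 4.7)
The plan is to proceed by transfinite induction on $\beta$ and then case analysis on the list $l : \DL{\alpha}{\beta}$, applying function extensionality to reduce to a pointwise equality $\denotes{l}{\beta} = \contoabs(\normalize(l))$. Since both sides lie in the extensional ordinal $\abstrexp{\alpha}{\beta}$, I would compare them via their initial segments. Throughout, I would rely on two structural facts: both $\contoabs$ and the inclusion $\iota_b$ from \cref{exp-segment-inclusion} are simulations (so they preserve initial segments by \cref{simulation-facts}), and $\normalize$ commutes with $\tau_b$ since both operations act only on the underlying list.

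The case $l = \nill$ is immediate: both sides reduce to $\bot$, using that simulations preserve least elements. For a cons cell $l = (a,b) \cons l'$, I would split on whether $a = \bot$ or $a > \bot$ using trichotomy of the least element. When $a > \bot$, the characterization of $\contoabs$ on non-empty lists extracted from the proof of \cref{abstrexp-listexp-coincide} --- sending $(a,b) \cons k$ to $[\inr b, (\contoabs_{\beta \initseg b}(\tau_b\,k), a)]$ --- combined with the commutation of $\tau_b$ and $\normalize$ and the induction hypothesis at $\beta \initseg b$ applied to $\tau_b\,l'$, directly matches $\denotes{(a,b) \cons l'}{\beta} = [\inr b, (\denotes{\tau_b\,l'}{\beta \initseg b}, a)]$.

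The main obstacle will be the subcase $a = \bot$, where normalization silently drops the pair, so $\normalize((\bot, b) \cons l') = \normalize(l')$, and yet the denotation function still produces $[\inr b, (e, \bot)]$ with $e = \denotes{\tau_b\,l'}{\beta \initseg b}$. Here I would use extensionality of $\abstrexp{\alpha}{\beta}$. By \cref{initial-segment-of-abstrexp}, the left-hand side has initial segment $\abstrexp{\alpha}{\beta \initseg b} \initseg e$, because the factor $\alpha \initseg \bot = \Zero$ kills the product contribution. For the right-hand side, every entry of $l'$ has second component strictly below $b$ by the decreasing condition, so $\normalize(l') = \iota_b(\normalize(\tau_b\,l'))$; applying that $\contoabs$ and $\iota_b$ preserve initial segments and then invoking the induction hypothesis at $\beta \initseg b$ for $\tau_b\,l'$ reduces the right-hand side's initial segment to the same $\abstrexp{\alpha}{\beta \initseg b} \initseg e$. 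Extensionality then concludes. The crux is thus not a new construction but a careful bookkeeping argument showing that the trivialisation $\alpha \initseg \bot = \Zero$ on the abstract side precisely compensates for the deletion performed by $\normalize$ on the concrete side.
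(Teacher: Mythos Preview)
Your approach is correct and the core mechanics---transfinite induction on $\beta$, case split on the list, comparison via initial segments, and the commutation of $\normalize$ with $\tau_b$ and $\iota_b$---match the paper's. The organizational difference is that the paper factors the argument through an auxiliary restricted denotation $\denotesprime{-}{\beta} : \DL{\posalpha}{\beta} \to \abstrexp{\alpha}{\beta}$, defined by the same recursive clauses as $\denotes{-}{\beta}$ but on positive-component lists. It then proves two lemmas: first $\contoabs = \denotesprime{-}{\beta}$ (this is precisely the ``characterization of $\contoabs$ on non-empty lists'' you describe as \emph{extracted} from \cref{abstrexp-listexp-coincide}), and second $\denotes{-}{\beta} = \denotesprime{-}{\beta} \circ \normalize$. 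Your single-pass argument inlines the first lemma into the $a > \bot$ case and otherwise mirrors the second lemma; the paper's factoring isolates the computational behaviour of $\contoabs$ as a standalone statement, while your version is more compact. One small caution: what you call ``extracted from the proof of \cref{abstrexp-listexp-coincide}'' is not quite free---it amounts to knowing that the equivalence $\contoabs$ at level $\beta$ is compatible with $\contoabs_{\beta \initseg b}$ at lower levels, which is exactly what the paper's first lemma establishes via an initial-segment computation. You can of course recover this by the same initial-segment technique you use in the $a = \bot$ case, so there is no gap.
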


The two following lemmas directly prove the theorem.

\begin{lemma}[\flinkprime{6}{Lemma-35}]\label{contoabs-is-denotation-function}
  The induced map \(\contoabs\) coincides with a denotation function
  \[
    \denotesprime - \beta : \DL{\posalpha}{\beta} \to \abstrexp{\alpha}{\beta}
  \]
  that is defined by transfinite induction just like \(\denotes - \beta\) but with a
  restricted domain instead.
\end{lemma}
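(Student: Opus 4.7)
The plan is to prove the equality of the two functions pointwise by transfinite induction on \(\beta\), making essential use of the way \(\contoabs\) was constructed in the proof of \cref{abstrexp-listexp-coincide}. Since simulations between ordinals are unique and \(\contoabs\) arises as the simulation induced by the identification \(\abstrexp{\alpha}{\beta} = \listexp{\alpha}{\beta}\), it suffices to verify that \(\denotesprime{-}{\beta}\) satisfies the very same recursive description of \(\contoabs\) that is implicit in that proof.

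Concretely, I would proceed as follows. Fix \(\alpha\) with trichotomous least element \(\bot\), and perform transfinite induction on \(\beta\), with induction hypothesis stating that \(\contoabs = \denotesprime{-}{\beta\initseg b}\) on \(\DL{\posalpha}{\beta\initseg b}\) for all \(b : \beta\). To prove the step at \(\beta\), take an arbitrary \(l : \DL{\posalpha}{\beta}\) and case split: if \(l = \nill\), both sides yield \(\bot\) since \(\contoabs\) is a simulation and therefore preserves the least element. If \(l = (a,b) \cons l'\) with \(a : \posalpha\) and \(l' : \DL{\posalpha}{\beta}\), I unfold the construction of \(\contoabs\) from the proof of \cref{abstrexp-listexp-coincide}: the simulation sends \((a,b) \cons l'\) to the unique \([\inr b,(e,a)] : \abstrexp{\alpha}{\beta}\) whose initial segment equals \(\listexp{\alpha}{\beta} \initseg ((a,b) \cons l')\), and the matching computation in Theorem~28 (in the subcase \(a > \bot\), which holds automatically since \(a : \posalpha\)) forces \(e\) to be precisely the image of \(\tau_b\,l'\) under the restriction of \(\contoabs\) to \(\DL{\posalpha}{\beta\initseg b}\). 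By the induction hypothesis, this image equals \(\denotesprime{\tau_b\,l'}{\beta\initseg b}\). Therefore \(\contoabs((a,b) \cons l') = [\inr b,(\denotesprime{\tau_b\,l'}{\beta\initseg b}, a)]\), which is exactly \(\denotesprime{(a,b) \cons l'}{\beta}\) by definition.

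The main obstacle is the middle step of unfolding \(\contoabs\). The function is not defined by an explicit recursive clause but extracted from an equality built via initial-segment extensionality in the ordinal of ordinals. I expect the cleanest way to handle this is to use \cref{simulation-facts}\ref{simulations-preserve-initial-segments}: \(\contoabs\) is the unique simulation \(\listexp{\alpha}{\beta} \le \abstrexp{\alpha}{\beta}\), and its value on \((a,b) \cons l'\) is uniquely determined by the equation \(\abstrexp{\alpha}{\beta} \initseg \contoabs((a,b)\cons l') = \listexp{\alpha}{\beta}\initseg ((a,b) \cons l')\). Comparing with the initial segment computations from \cref{initial-segment-of-abstrexp} and \cref{initial-segment-of-listexp} then pins down \(\contoabs((a,b)\cons l')\) as \([\inr b, (\contoabs(\tau_b\,l'),a)]\), which combined with the IH completes the induction step. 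Function extensionality then gives the desired equality of functions.
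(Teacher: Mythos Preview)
Your proposal is correct and follows essentially the same route as the paper: transfinite induction on \(\beta\), a case split on the list, the empty case handled via preservation of least elements, and the nonempty case handled by computing initial segments using \cref{simulation-facts}\ref{simulations-preserve-initial-segments}, \cref{initial-segment-of-listexp}, \cref{initial-segment-of-abstrexp} and the induction hypothesis. The paper presents this as a single chain of equalities between initial segments ending at \(\abstrexp{\alpha}{\beta} \initseg \denotesprime{(a,b)\cons l}{\beta}\), whereas you phrase it as first pinning down the recursive equation \(\contoabs((a,b)\cons l') = [\inr b,(\contoabs(\tau_b\,l'),a)]\) and then applying the IH, but these are the same computation arranged slightly differently.
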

\begin{proof}
  We prove \(\contoabs\,l = \denotesprime{l}{\beta}\) for all \(l : \DL{\posalpha}{\beta}\) by induction on \(\beta\) and a case distinction on \(l\), and use that elements of ordinals are equal if
  and only if they determine the same initial segments.
  The case of the empty list is easy, as it is the least element of
  \(\DL{\posalpha}{\beta}\), and since \(\contoabs\) is a simulation it must map
  it to the least element of \(\abstrexp{\alpha}{\beta}\).
  For nonempty lists, we have
  \begin{align*}
    &\abstrexp{\alpha}{\beta} \initseg {\contoabs\,((a,b) \cons l)} \\
    &= \listexp{\alpha}{\beta} \initseg {\left((a,b) \cons l\right)}
    &&\text{(by \ref{simulations-preserve-initial-segments} in \cref{simulation-facts})} \\
    &= \listexp{\alpha}{\beta \initseg b} \times (\alpha \initseg a)
    + \listexp{\alpha}{\beta \initseg b} \initseg {\tau_b\,l}
    &&\text{(by \cref{initial-segment-of-listexp})} \\
    &= \abstrexp{\alpha}{\beta \initseg b} \times (\alpha \initseg a)
    + \listexp{\alpha}{\beta \initseg b} \initseg {\tau_b\,l}
    &&\text{(by \cref{abstrexp-listexp-coincide})} \\
    &=  \abstrexp{\alpha}{\beta \initseg b} \times (\alpha \initseg a)
    + \abstrexp{\alpha}{\beta \initseg b} \initseg {\contoabs\,(\tau_b\,l)}
    &&\text{(by \ref{simulations-preserve-initial-segments} in \cref{simulation-facts})} \\
    &=  \abstrexp{\alpha}{\beta \initseg b} \times (\alpha \initseg a)
    + \abstrexp{\alpha}{\beta \initseg b} \initseg {\denotesprime {\tau_b\,l} {\beta \initseg b}}
    &&\text{(by IH)} \\
    &= \abstrexp{\alpha}{\beta} \initseg [\inr b, (\denotesprime{\tau_b\,l}{\beta\initseg b},a)]
    &&\text{(by \cref{initial-segment-of-abstrexp})} \\
    &\equiv \abstrexp{\alpha}{\beta} \initseg \denotesprime{(a,b) \cons l}{\beta}.&&\hspace{-1em}\qedhere
  \end{align*}
\end{proof}

\begin{lemma}[\flinkprime{6}{Lemma-36}]
  The denotations are related via normalization:
  \[
    \denotes{-}{\beta} = \denotesprime{-}{\beta} \circ \normalize.
  \]
\end{lemma}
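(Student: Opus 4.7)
The plan is to prove the function equality pointwise by transfinite induction on $\beta$, with a case split on the list $l : \DL{\alpha}{\beta}$. The induction hypothesis reads: for every $b : \beta$ and every $l_0 : \DL{\alpha}{\beta \initseg b}$, $\denotes{l_0}{\beta \initseg b} = \denotesprime{\normalize(l_0)}{\beta \initseg b}$.

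For $l = \nill$, both sides reduce to $\bot$, so the base case is trivial. For $l = (a,b) \cons l'$, I would invoke the trichotomous least element of $\alpha$ and split on whether $a = \bot$ or $a > \bot$. In the easy subcase $a > \bot$, $\normalize$ preserves the leading pair, so both sides unfold to expressions of the form $[\inr b, (-, a)]$; the inner components coincide by applying the IH at $\beta \initseg b$ to $\tau_b\,l'$ together with the straightforward commutativity of $\normalize$ and $\tau_b$ (they act on orthogonal data, namely which pairs survive versus the attached $b' < b$ proofs).

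The subcase $a = \bot$ is the main obstacle: $\normalize$ drops the leading pair, so the goal becomes
\[
  [\inr b,\,(\denotes{\tau_b\,l'}{\beta \initseg b},\,\bot)] = \denotesprime{\normalize(l')}{\beta}.
\]
I would establish this by extensionality of $\abstrexp{\alpha}{\beta}$, comparing initial segments. By \cref{initial-segment-of-abstrexp}, the left-hand initial segment collapses --- the factor $\alpha \initseg \bot = \Zero$ annihilates the product summand --- to $\abstrexp{\alpha}{\beta \initseg b} \initseg \denotes{\tau_b\,l'}{\beta \initseg b}$, which the IH rewrites as $\abstrexp{\alpha}{\beta \initseg b} \initseg \denotesprime{\tau_b\,\normalize(l')}{\beta \initseg b}$. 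Since $l$ is decreasing, every second component of $\normalize(l')$ lies strictly below $b$, so $\normalize(l') = \iota_b(\tau_b\,\normalize(l'))$, and it suffices to establish the auxiliary naturality fact
\[
  \abstrexp{\alpha}{\beta} \initseg \denotesprime{\iota_b\,m}{\beta} = \abstrexp{\alpha}{\beta \initseg b} \initseg \denotesprime{m}{\beta \initseg b}
\]
for $m : \DL{\posalpha}{\beta \initseg b}$, which in turn follows from \cref{simulations-preserve-initial-segments} applied to the monotonicity simulation $\abstrexp{\alpha}{\beta \initseg b} \le \abstrexp{\alpha}{\beta}$ of \cref{abstrexp-monotone} (alternatively, by a short structural induction on $m$ unfolding $\denotesprime$ on both sides and applying \cref{initial-segment-of-abstrexp}). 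The conceptual content of the argument is thus concentrated in a single observation: a $(\bot, b)$ pair at the head of a list is absorbed, because the product contribution $\abstrexp{\alpha}{\beta \initseg b} \times (\alpha \initseg \bot)$ vanishes.
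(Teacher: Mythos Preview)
Your outline matches the paper's proof closely: transfinite induction on $\beta$, case split on the list, trichotomy on $a$, and in the $\bot$ subcase an initial-segment comparison using \cref{initial-segment-of-abstrexp} together with the vanishing factor $\alpha \initseg \bot = \Zero$. The one place where your justification is incomplete is the auxiliary naturality fact. Applying item~\ref{simulations-preserve-initial-segments} of \cref{simulation-facts} to the monotonicity simulation $\sigma$ of \cref{abstrexp-monotone} only yields $\abstrexp{\alpha}{\beta} \initseg \sigma(x) = \abstrexp{\alpha}{\beta\initseg b} \initseg x$; you still need the square $\sigma \circ \denotesprime{-}{\beta\initseg b} = \denotesprime{-}{\beta} \circ \iota_b$ to commute, which is not immediate from the existence of $\sigma$ alone. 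The paper closes this by invoking \cref{contoabs-is-denotation-function}: since $\denotesprime{-}{\beta\initseg b}$ coincides with $\contoabs$ and is therefore a simulation, one can pass through $\listexp{\alpha}{\beta\initseg b} \initseg \normalize(\tau_b\,l)$ and then apply item~\ref{simulations-preserve-initial-segments} of \cref{simulation-facts} again to the composite simulation $\denotesprime{-}{\beta} \circ \iota_b$. (Equivalently, the missing square follows from uniqueness of simulations once \cref{contoabs-is-denotation-function} is available.) Your fallback of a direct induction on $m$ is viable but not as short as advertised, since $\denotesprime{-}{}$ is defined by transfinite induction on the exponent rather than by structural recursion on the list.
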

\begin{proof}
  This is proved by transfinite induction on \(\beta\). The case of the empty
  list is easy.
  So consider an element of \(\DL{\alpha}{\beta}\) of the form \((a,b) \cons l\).
  Suppose first that \(a\) is not the trichotomous least element \(\bot\), then
  \begin{align*}
    \denotes {(a,b) \cons l} {\beta}
    &\equiv [\inr b , (\denotes {\tau_b\,l} {\beta\initseg b},a)] \\
    &= [\inr b , (\denotesprime {\normalize(\tau_b\,l)} {\beta\initseg b},a)] &\text{(by IH)} \\
    &= [\inr b , (\denotesprime {\tau_b\,(\normalize\,l)} {\beta\initseg b},a)] \\
    &\equiv \denotesprime {\normalize {((a,b)\cons l)}} {\beta},
  \end{align*}
  where the penultimate equality uses that \(\normalize\) and \(\tau_b\) commute.

  For the trichotomous least element \(\bot\), we have
  \begin{align*}
    \denotes {(\bot,b) \cons l} {\beta}
    &\equiv [\inr b , (\denotes {\tau_b\,l} {\beta\initseg b},\bot)] \\
    &= \denotesprime {\iota_b(\normalize (\tau_b\,l))} {\beta}
      &&(\ast) \\
    &= \denotesprime {\normalize (\iota_b(\tau_b\,l))} {\beta}
      &&\text{(since \(\normalize\) and \(\iota_b\) commute)} \\
    &= \denotesprime {\normalize\,l} {\beta}
      &&\text{(since \(\tau_b\) is a section of \(\iota_b\))} \\
    &\equiv \denotesprime {\normalize((\bot,b) \cons l)} {\beta},
  \end{align*}
  where \((\ast)\) holds because these elements determine the same initial segments:
  \begin{align*}
    \abstrexp{\alpha}{\beta} \initseg
    [\inr b , (\denotes {\tau_b\,l} {\beta\initseg b},\bot)]
    &= \abstrexp{\alpha}{\beta \initseg b} \times (\alpha \initseg \bot) +
      \abstrexp{\alpha}{\beta \initseg b} \initseg
      {\denotes {\tau_b\,l} {\beta\initseg b}}
    &&\text{(by \cref{initial-segment-of-abstrexp})} \\
    &= \abstrexp{\alpha}{\beta \initseg b} \initseg
      {\denotes {\tau_b\,l} {\beta\initseg b}}
    &&\text{(as \(\alpha \initseg \bot = \Zero\))} \\
    &= \abstrexp{\alpha}{\beta \initseg b} \initseg
      {\denotesprime {\normalize(\tau_b\,l)} {\beta\initseg b}}
    &&\text{(by IH)} \\
    &= \listexp{\alpha}{\beta \initseg b} \initseg {\normalize(\tau_b\,l)}
    &&\text{(by \ref{simulations-preserve-initial-segments} in \cref{simulation-facts})} \\
    &= \abstrexp{\alpha}{\beta \initseg b} \initseg
      {\denotesprime {\iota_b(\normalize(\tau_b\,l))} {\beta}},
    &&\text{(by \ref{simulations-preserve-initial-segments} in \cref{simulation-facts})}
  \end{align*}
  where the second-to-last equality employs the map
  \(\denotesprime{-}{\beta \initseg b}\), which is a simulation by
  \cref{contoabs-is-denotation-function} as \(\contoabs\) is induced by an
  equality (and hence is a simulation). For the final equality we consider
  the composition of simulations \(\denotesprime{-}{\beta} \circ \iota_b\).
\end{proof}

\begin{remark}
  We do not know of a direct proof that the denotation function \(\contoabs\) is
  an ordinal equivalence, or even a simulation.
  Instead, our proof of \cref{abstrexp-listexp-coincide} makes use of the
  equalities given by the induction hypothesis to prove the inductive step.
\end{remark}

\subsection{On Grayson's Decreasing Lists}\label{sec:grayson}

A slight variation of our decreasing list construction \(\listexp{\alpha}{\beta}\) was suggested in Grayson's PhD thesis~\cite[\S\@IX.3]{Grayson1978}, the relevant part of which has appeared as Grayson~\cite[\S 3.2]{Grayson1982}.
His setting is an unspecified version of constructive set (or type) theory and uses setoids (i.e., sets with an equivalence relation) for the construction.
Grayson's suggestion does not require the base ordinal \(\alpha\) to have a least element.
Unfortunately, his construction (which is presented without proofs) does not work in the claimed generality, as assuming that it always yields an ordinal is equivalent to the law of excluded middle.
We present our argument for this
in the setting of the current paper, but the argument carries over to a foundation based on setoids.

\begin{definition}[Grayson~{\cite[\S\@IX.3]{Grayson1978}} \flink{Definition-38}]
Given an ordinal \(\alpha\), we say that \(x:\alpha\) is \emph{positively non-minimal} if $\exists(a : \alpha). x > a$.
Then, given a second ordinal \(\beta\), the type of \emph{Grayson lists}, written \(\grayson \alpha \beta\), is the type of lists over \(\alpha \times \beta\) that are decreasing in the \(\beta\)-component and where all \(\alpha\)-components are positively non-minimal.
\end{definition}

If \(\alpha\) has a trichotomous least element \(\bot\), it is easy to see (and formalized in~\cite[{\texttt{Grayson}}]{formalization}) that \(x\) is positively non-minimal if and only if \(x > \bot\), i.e.\ the two notions of positivity coincide.
Thus, under this assumption, \(\grayson \alpha \beta\) becomes equivalent to \(\listexp{\alpha}{\beta}\).
However, the assumption cannot be removed:

\begin{proposition}[\flink{Proposition-39-i}]
	\(\LEM\) holds if and only if \(\grayson \alpha \beta\) is an ordinal for all (possibly large) ordinals \(\alpha\) and \(\beta\).
	This remains true 
        with the additional condition that \(\alpha\) has a least element.
\end{proposition}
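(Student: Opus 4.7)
The plan is to handle the two directions separately.

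For $(\Rightarrow)$, assuming $\LEM$ makes the classical construction go through. With $\LEM$, whether any given $x : \alpha$ has a predecessor is decidable, so positively non-minimal becomes equivalent to ``not minimal'', and the lexicographic order on decreasing lists becomes extensional by a classical case analysis. Transitivity and wellfoundedness of the lex order are inherited from $\alpha$ and $\beta$ as in the proof of \cref{listexp-is-ordinal-with-trichotomous-least-element}.

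For $(\Leftarrow)$, I would mimic the strategy of \cref{positive-subset-ordinal-iff-lem}. The idea is to apply the assumption to the large ordinal $\alpha = \Ord$, which has least element $\Zero$, together with $\beta = \One$, and to exploit extensionality of the resulting ordinal $\grayson{\Ord}{\One}$ to identify $\Two$ and $\Omega$ as elements of $\Ord$. Both $[(\Two,\star)]$ and $[(\Omega,\star)]$ belong to $\grayson{\Ord}{\One}$, since $\Two$ and $\Omega$ are inhabited and hence positively non-minimal in $\Ord$.

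The key computation is that these two elements have the same predecessors. In $\grayson{\Ord}{\One}$ the predecessors of $[(\gamma,\star)]$ are $\nill$ together with $[(\delta,\star)]$ where $\delta$ ranges over inhabited ordinals with $\delta < \gamma$ in $\Ord$. For $\gamma = \Two$ the two initial segments are $\Zero$ (not inhabited) and $\One$, so only $[(\One,\star)]$ contributes. For $\gamma = \Omega$ every initial segment is of the form $\Omega \initseg P = P$ (regarded as an ordinal) for some $P : \Omega$; such $\Omega \initseg P$ is inhabited iff $P$ holds, and then propositional extensionality forces $P = \top$, i.e.\ $P = \One$. So in both cases the predecessor set reduces to $\{\nill, [(\One,\star)]\}$, and extensionality delivers $[(\Two,\star)] = [(\Omega,\star)]$, hence $\Two = \Omega$ in $\Ord$. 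The argument from \cref{positive-subset-ordinal-iff-lem} then converts this equality to $\LEM$.

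The main obstacle is the propositional-extensionality collapse showing that every inhabited predecessor of $\Omega$ is $\One$; this is what makes the predecessor sets of $\Two$ and $\Omega$ literally coincide in $\grayson{\Ord}{\One}$, which is exactly what is needed to trigger extensionality. The rest of the reverse direction is then a mechanical unfolding of initial segments in the list ordinal.
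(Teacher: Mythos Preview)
Your proposal is correct and follows essentially the same route as the paper: take \(\beta = \One\) and \(\alpha = \Ord\), and use extensionality of the resulting Grayson structure to collapse \(\Two\) and \(\Omega\) by checking they share the same (inhabited) predecessors, yielding \(\LEM\). The paper packages this slightly differently, first observing \(\grayson{\alpha}{\One} \simeq \One + \alpha^+\) and then invoking the proof method of \cref{positive-subset-ordinal-iff-lem}, whereas you inline that argument directly inside \(\grayson{\Ord}{\One}\); but the substance is the same.
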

\begin{proof}
	If \(\LEM\) is assumed, then every \(\alpha\) is empty or has a trichotomous least element, and the construction works in either case. For the other direction, let us fix \(\beta\) to be~\(\One\); then, \(\grayson \alpha \One\) is equivalent to \(\One + \alpha^+\), where the latter is the type of positively non-minimal elements of \(\alpha\).
	It is easy to see that \(\grayson \alpha \One\) is an ordinal if and only if \(\alpha^+\) is.
	But the assumption that \(\Ord^+\) is an ordinal implies \(\LEM\) (cf.~the proof of~
        \cref{positive-subset-ordinal-iff-lem}).
\end{proof}

Finally, Grayson~\cite[\S\@IX.3]{Grayson1978} claims the recursive equation
\begin{equation*}
	\grayson \alpha \beta = \One \vee \sup_{b : \beta}(\grayson{\alpha}{\beta \downarrow b} \times \alpha)
\end{equation*}
in full generality. It follows from the above discussion and \cref{abstrexp-listexp-coincide} that this equation holds if \(\alpha\) has a detachable least element. The latter condition is indispensable, as the left-hand side always has a least detachable element (the empty list), while, for \(\beta = \One\) and \(\alpha \geq \One\), the right-hand side does so exactly if \(\alpha\) does.
For more details, we refer the interested reader to the \texttt{Grayson} file of
our formalization~\cite{formalization}.

\section{Abstract Cancellation Arithmetic}\label{sec:cancellation}

Results on \emph{cancellation} properties are useful tools; they tell us that a function can be ``undone'' or ``reversed'' in certain situations.
The more precise question is whether a given function reflects a certain property that we are interested in.
Here, we discuss whether the arithmetic operations $(\alpha + \_)$, $(\alpha \times \_)$ and $(\alpha^-)$, for fixed $\alpha$, reflect $\leq$; if they do, they automatically reflect equality.

Concretely, for addition, there are two questions:
\begin{enumerate}
	\item Does $\beta \leq \gamma$ imply $\alpha + \beta \leq \alpha + \gamma$?
\end{enumerate}
This holds; it is monotonicity of addition on the right.
\begin{enumerate}[resume]
	\item Does $\alpha + \beta \leq \alpha + \gamma$ imply $\beta \leq \gamma$?
\end{enumerate}
This also holds and is not so hard to prove. However, the analogous questions for multiplication and exponentiation are non-trivial.
In this section, we study the question for a general abstract framework that has addition, multiplication, and exponentiation as instances.
While this section establishes results on left cancellation, it is worth noting that the symmetric question about right cancellation, i.e.\ cancellation of $(\_ + \alpha)$, $(\_ \times \alpha)$ and $(\_^\alpha)$, can only have a negative answer.
An easy way to see this is to observe that \emph{all} the following expressions are equal to $\omega$:
\begin{fequation}{Eq-6}
  \begin{array}{ccc}
    \Zero + \omega = \omega \qquad& \One \times \omega = \omega \qquad& \Two^\omega = \omega \\
    \One + \omega = \omega \qquad& \Two \times \omega = \omega \qquad& \Three^\omega = \omega
  \end{array}
\end{fequation}
Right cancellation for addition, multiplication or exponentiation would thus imply that \(\Zero = \One\), \(\One = \Two\) and \(\Two = \Three\) respectively, which would be absurd.

\subsection{Order-Preserving Functions, and Classical Comparisons}

Let us recall the following result:
\begin{lemma}[{Escard\'o~\cite[{\texttt{Ordinals.OrdinalOfOrdinals}}]{TypeTopologyOrdinals} \flink{Lemma-40}}]\label{lem:escardo-op-map}
	If $\beta < \alpha$, then there is no order preserving map $f : \alpha \to \beta$. \qed
\end{lemma}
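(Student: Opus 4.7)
The plan is to derive a contradiction from a hypothetical order preserving map $f : \alpha \to \beta$ by exploiting wellfoundedness of $\alpha$. First, I would unfold the assumption $\beta < \alpha$ in $\Ord$ to obtain some $a : \alpha$ together with an identification $\beta = \alpha \initseg a$. Through this identification, $\beta$ embeds into $\alpha$ via an order preserving map (indeed, a simulation) whose image lies strictly below $a$. Composing with $f$, the problem reduces to the following: there is no order preserving map $g : \alpha \to \alpha$ satisfying $g(x) < a$ for every $x : \alpha$.

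The heart of the argument is a transfinite induction on $x : \alpha$ proving that $x \preceq g(x)$, where $\preceq$ is the non-strict order on elements of an ordinal introduced in \cref{subsec:ord-of-ords}. For the inductive step, assume $y \preceq g(y)$ for all $y < x$; then, for any such $y$, combining the induction hypothesis with order preservation of $g$ gives $y \preceq g(y) < g(x)$, and hence $y < g(x)$. Since $y < x$ was arbitrary, this is precisely $x \preceq g(x)$.

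Specializing this to $x = a$ yields $a \preceq g(a)$, while by construction $g(a) < a$. Unfolding the definition of $\preceq$ with $u \defeq g(a)$ then yields $g(a) < g(a)$, which contradicts irreflexivity of $<$. Irreflexivity itself follows directly from wellfoundedness: by transfinite induction one proves $\forall x.\, x < x \to \bot$, since assuming $y < y \to \bot$ for all $y < x$ and then taking $y \defeq x$ shows that $x < x$ is already absurd.

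The main subtlety will be choosing the correct invariant for the induction: the naive guesses $g(x) < x$ or $g(x) \preceq x$ are false in general, and it is the seemingly dual inequality $x \preceq g(x)$, the constructive analogue of the classical fact that every order preserving self-map of a wellorder is expanding, that makes the induction go through. Once this invariant is in place, the remainder of the argument relies only on transitivity, extensionality, and wellfoundedness, so no classical reasoning is required.
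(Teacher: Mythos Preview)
Your reduction to a bounded order-preserving self-map $g : \alpha \to \alpha$ with $g(x) < a$ for all $x$ is fine and matches the paper's route via \cref{lem:escardo-generalised}. The gap is in the inductive step of your transfinite induction. You argue: from the hypothesis $y \preceq g(y)$ and from $g(y) < g(x)$ conclude $y < g(x)$. This is an instance of the implication
\[
  y \preceq z \ \wedge\  z < w \ \Longrightarrow\ y < w,
\]
which is \emph{not} constructively valid. The paper says so explicitly just after \cref{lem:escardo-generalised}: on $\Ord$ the relation $\preceq$ coincides with $\leq$, and ``we do \emph{not} in general have the property that $(\alpha \leq \beta) \to (\beta < \gamma) \to (\alpha < \gamma)$''. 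A concrete witness in the ordinal $\Ord$ is $y \defeq P$ (a proposition viewed as an ordinal), $z \defeq \One$, $w \defeq \Two$: one checks $P \preceq \One$ and $\One < \Two$, yet $P < \Two$ holds precisely when $P$ is decidable. So the chain $y \preceq g(y) < g(x)$ does not yield $y < g(x)$, and your induction does not close.

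The repair is small: prove instead the weaker invariant $\forall x.\,\neg\bigl(g(x) < x\bigr)$ by transfinite induction on $x$. If $g(x) < x$, then the induction hypothesis at $y \defeq g(x)$ gives $\neg\bigl(g(g(x)) < g(x)\bigr)$, while order preservation of $g$ applied to $g(x) < x$ gives $g(g(x)) < g(x)$; contradiction. Specialising to $x = a$ then contradicts $g(a) < a$. This is exactly the descending-chain argument the paper uses in the proof of \cref{lem:escardo-generalised}\ref{item:cl-irreflexive}, from which \cref{lem:escardo-op-map} is deduced. Your final paragraph identifies the classical ``expanding map'' intuition correctly, but constructively the right invariant is $\neg(g(x) < x)$, not $x \preceq g(x)$.
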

In the current subsection (which the reader may wish to skip during the first read), we extend this result for later usage.

If we assume $\LEM$, then any order-preserving function $f$ gives rise to a simulation; thus, classically, an order-preserving function $f$ is no less useful than a simulation $\alpha \leq \beta$.
This motivates writing $\alpha \incr \beta$ for the type of order-preserving functions.
Similarly as to how one can weaken $\leq$ to $\incr$, we can weaken $<$ on $\Ord$ to the classically equivalent notion $\incrst$,
where an element of $\alpha \incrst \beta$ is a quadruple of a function $f : \alpha \to \beta$, a proof that $f$ is order-preserving, an element $b_0 : \beta$, and a proof that $f$ ``stays under $b_0$'', i.e., $f$ factors through $\beta \downarrow b_0 \to \beta$.
We refer to $\incr$ and $\incrst$ as \emph{classical comparisons}.
Note that $\alpha \leq \beta$ and $\alpha < \beta$ are always propositional, while $\alpha \incr \beta$ and $\alpha \incrst \beta$ may not be.

We have the following easy but useful observations:

\begin{lemma}[\flink{Lemma-41}]\label{lem:escardo-generalised}
	Simulations induce order-preserving functions,
	\begin{enumerate}[label*=(\roman*)]
		\item $(\alpha \leq \beta) \to (\alpha \incr \beta)$
		\item \label{item:cl-implies-normal} $(\alpha < \beta) \to (\alpha \incrst \beta)$.
	\end{enumerate}
	A bounded order-preserving function is order-preserving,
	\begin{enumerate}[resume,label*=(\roman*)]
		\item $\alpha \incrst \beta \to \alpha \incr \beta$.
	\end{enumerate}
	Classical comparisons have the following transitivity properties:
	\begin{enumerate}[resume,label*=(\roman*)]
		\item $(\alpha \incrst \beta) \to (\beta \incr \gamma) \to (\alpha \incrst \gamma)$
		\item \label{item:cl-important} $(\alpha \incr \beta) \to (\beta \incrst \gamma) \to (\alpha \incrst \gamma)$
	\end{enumerate}
	Finally, $\incrst$ is irreflexive,
	\begin{enumerate}[resume,label*=(\roman*)]
		\item \label{item:cl-irreflexive} $\neg (\alpha \incrst \alpha)$.
	\end{enumerate}
\end{lemma}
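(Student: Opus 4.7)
The plan is to dispatch items (i)–(v) as essentially bookkeeping about definitions and composition, while the real content sits in (vi), which follows from~\cref{lem:escardo-op-map} once one spots the right factorization.

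For (i), a simulation is by definition an order-preserving function equipped with an extra property (the back-and-forth condition), so forgetting that property yields an element of $\alpha \incr \beta$ directly. For (ii), I unfold $\alpha < \beta$ to obtain $b_0 : \beta$ with $\alpha = \beta \initseg b_0$, transport this equality to an ordinal equivalence $\alpha \to \beta \initseg b_0$, and compose with the canonical inclusion $\beta \initseg b_0 \hookrightarrow \beta$. The composite is order-preserving and factors through $\beta \initseg b_0$ by construction, so the quadruple consisting of the composite, its order-preservation, the element $b_0$, and the factorization witness is an inhabitant of $\alpha \incrst \beta$. For (iii), simply forget the bound.

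For (iv), given $(f, b_0, \text{fact}) : \alpha \incrst \beta$ and $g : \beta \incr \gamma$, the composite $g \circ f : \alpha \to \gamma$ is order-preserving and bounded by $g(b_0)$: for every $a : \alpha$ we have $f(a) < b_0$ by the factorization of $f$, and then order-preservation of $g$ gives $g(f(a)) < g(b_0)$. For (v), given $f : \alpha \incr \beta$ and $(g, c_0, \text{fact}) : \beta \incrst \gamma$, the composite $g \circ f$ is order-preserving and already bounded by $c_0$, since $g$ itself factors through $\gamma \initseg c_0$.

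Finally, for (vi), suppose we had $(f, a_0, \text{fact}) : \alpha \incrst \alpha$. The factorization witness turns $f$ into an order-preserving map $\widetilde f : \alpha \to \alpha \initseg a_0$. But $\alpha \initseg a_0 < \alpha$ holds by the very definition of the strict order on $\Ord$ (with witness $a_0$), so~\cref{lem:escardo-op-map} rules out the existence of such a $\widetilde f$; contradiction. The main obstacle, if there is one, is pinpointing that (vi) reduces cleanly to~\cref{lem:escardo-op-map} via the factorization; the other parts are essentially diagram-chasing with the definitions of $\incr$ and $\incrst$.
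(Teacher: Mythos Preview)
Your proof is correct. Items (i)--(v) match the paper's treatment (the paper simply calls them ``trivial''). For (vi), however, the paper takes a different route: instead of reducing to \cref{lem:escardo-op-map}, it argues directly that the data $(f,a_0)$ of an element of $\alpha \incrst \alpha$ yield an infinite descending chain $a_0 > f(a_0) > f(f(a_0)) > \cdots$, contradicting wellfoundedness of $<$ on $\alpha$. Your argument --- factorizing $f$ through $\alpha \initseg a_0$ and then invoking \cref{lem:escardo-op-map} with $\alpha \initseg a_0 < \alpha$ --- is equally valid; in fact the paper remarks immediately after the proof that \cref{lem:escardo-op-map} implies \cref{lem:escardo-generalised} ``rather directly'' and that the Agda formalization follows exactly this route. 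The direct descending-chain argument is self-contained and avoids a forward reference, while your reduction makes the logical equivalence between the two lemmas explicit; neither is deeper than the other.
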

\begin{proof}
	The first five observations are trivial.
	For irreflexivity, assume there is an order preserving function $f : \alpha \to \alpha$ and $x_0 : \alpha$ such that $f\, x < x_0$ for all $x : \alpha$.
	This gives rise to an infinitely descending sequence $x_0 > f \, x_0 > f(f \, x_0) > \ldots$, which is impossible since $<$ is wellfounded.
\end{proof}

Regarding \ref{item:cl-important}, note that we do \emph{not} in general have the property that $(\alpha \leq \beta) \to (\beta < \gamma) \to (\alpha < \gamma)$, a limitation that motivated Taylor's \emph{plump ordinals}~\cite{taylor1996intuitionistic}.
Further, note that \cref{lem:escardo-op-map} is an immediate consequence of \cref{lem:escardo-generalised}: The assumption $\beta < \alpha$ implies $\beta \incrst \alpha$ by \ref{item:cl-implies-normal}, and the combination with the assumption $\alpha \incr \beta$ implies $\alpha \incrst \alpha$ by \ref{item:cl-important}, a contradiction by \ref{item:cl-irreflexive}.
Vice versa, \cref{lem:escardo-op-map} also implies \cref{lem:escardo-generalised} rather directly; our Agda formalization makes use of this.

\subsection{Functions Specified by Cases}\label{sec:satisfying-cases}

A joint generalization of the arithmetic operations (with one argument fixed) is the following:

\begin{definition}[Specification by Cases \flink{Definition-42}]\label{def:satisfying-case-spec}
	We say that a function $F : \Ord \to \Ord$ is \emph{specified by cases} if there is an ordinal $Z : \Ord$ and an operation $S : \Ord \to \Ord$ such that $F$ satisfies, for all ordinals $\beta$ and all families $L : I \to \Ord$ indexed over a small set $I$:
	\begin{align}
		F(\beta + \One) &= S(F \, \beta) \label{eq:S-condition} \\
		F(\sup_{i : I} L_i) &= Z \vee \sup_{i:I} F(L_i)   \label{eq:sup-spec-case}
	\end{align}
\end{definition}

The definition ensures $F \, \Zero = Z$, so $Z$ is the ``zero case'', while we think of $S$ as the ``successor case''. The ``limit case'' corresponds to the expectation that the function $F$ is \emph{almost} continuous (preserves suprema of inhabited families).
Addition $\alpha + \_$, multiplication $\alpha \times \_$, and exponentiation $\alpha^-$ are all of this form:
\begin{itemize}
	\item For addition, we set $Z \defeq \alpha$ and $S \beta \defeq \beta + \One$.
	\item For multiplication, we set $Z \defeq \Zero$ and $S \beta \defeq \beta + \alpha$.
	\item For exponentiation, provided that \(\One \le \alpha\), we $Z \defeq \One$ and $S \beta \defeq \beta \times \alpha$.
\end{itemize}

\begin{remark}[\flink{Remark-43}]\label{rem:f-unique}
  In \cref{def:satisfying-case-spec}, note that $Z$ and $S$ determine $F$ uniquely, because any $F$ which satisfies the definition will be equal to the function $G : \Ord \to \Ord$ defined by transfinite recursion as follows:
  \begin{equation}\label{eq:def-of-F}
    G \, \beta \defeq Z \vee \sup_{b : \beta}\left(S(G(\beta \downarrow b))\right).
  \end{equation}
  That is, if $F$ is specified by cases, then $F = G$. But unless we already know that such a $F$ exists, $G$ itself is not guaranteed to be specified by cases.
\end{remark}

There are three further possible assumptions of interest on $F$ specified by the cases $Z$ and $S$:
\begin{enumerate}[label=(\arabic* \flinkprime{7}{Assumption-\arabic*}), ref=\arabic*]
	\item \label{assumption:one}
	$S$ is a sum of the identity function and another function; i.e., we have a function $H$ such that, for all $\beta$, we have $S \beta = \beta + H \beta$.
	Simplifying the notation, we write $S = \Id + H$.
	\item \label{assumption:two}
	In addition to the above condition $S = \Id + H$, we also have that, for all $\beta$, $H(F \beta) > \Zero$.
	\item \label{assumption:three}
	Independently of the other assumptions, we can ask that $S$ is ``classically monotone'': If there is an order preserving function from $\beta$ to $\gamma$, then there is an order preserving function from $S\beta$ to $S\gamma$, as in
	\begin{equation*}
		(\beta \incr \gamma) \to (S\beta \incr S\gamma).
	\end{equation*}
\end{enumerate}
Note that \cref{assumption:three} is neither weaker nor stronger than usual monotonicity, but \cref{assumption:three} is exactly monotonicity of $S$ if $\LEM$ is assumed.

Let us analyze whether the assumptions are satisfied for the arithmetic operations.
For addition, this is trivially the case.
For multiplication, the first and the third assumption are always satisfied and the second is equivalent to $\alpha \geq \One$.
For exponentiation, the third assumption is always fulfilled. If $\alpha$ has a \emph{trichotomous} least element ($\alpha = \One + \alpha_{> \bot}$), the first assumption holds because $\beta \times \alpha = \beta + \beta \times (\alpha_{> \bot})$.
If we additionally assume $\alpha \geq 2$ (and thus $\alpha_{> \bot} \geq \One$), the second condition also holds.

\subsection{Properties of Functions Specified by Cases}

In this subsection, we assume that $F$ is specified by cases with $Z$ and $S$ given as above.
The other three assumptions (\cref{assumption:one,assumption:two,assumption:three}) will be stated explicitly.

\begin{lemma}[\flinkprime{7}{Lemma-44}]\label{lem:F-weakly-mono}
	$F$ is weakly monotone, i.e., $\beta \leq \gamma$ implies $F\beta \leq F\gamma$.
\end{lemma}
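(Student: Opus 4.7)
The plan is to leverage the uniqueness observation from \cref{rem:f-unique}: since $F$ is specified by cases by $Z$ and $S$, it must satisfy the recursive equation
\[ F\delta \;=\; Z \vee \sup_{d:\delta}\bigl(S(F(\delta \downarrow d))\bigr) \]
for every ordinal $\delta$. This equation is derived by combining the successor clause \eqref{eq:S-condition}, the supremum clause \eqref{eq:sup-spec-case}, and the representation of any ordinal as the supremum of the successors of its predecessors (\cref{ordinal-as-sup-of-successors}). No induction on $\beta$ or $\gamma$ is actually required for the lemma itself once this recursive characterization is in hand.

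Given $\beta \leq \gamma$, I would invoke characterization~\ref{poset-init-seg} of the order on $\Ord$ to obtain, for each $b:\beta$, a unique element $c_b:\gamma$ with $\beta \downarrow b = \gamma \downarrow c_b$. Applying $F$ to both sides (as a function on ordinals) yields $F(\beta \downarrow b) = F(\gamma \downarrow c_b)$, and hence $S(F(\beta \downarrow b)) = S(F(\gamma \downarrow c_b))$. Each such term is therefore bounded above by $\sup_{c:\gamma}\bigl(S(F(\gamma \downarrow c))\bigr) \leq F\gamma$. Since also $Z \leq F\gamma$ and $F\beta$ is the least upper bound of $Z$ together with the family $\{S(F(\beta \downarrow b))\}_{b:\beta}$, we conclude $F\beta \leq F\gamma$.

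I do not expect any serious obstacle here: the argument is purely a transport along the simulation $\beta \leq \gamma$ combined with the universal property of suprema, and it uses neither \cref{assumption:one}, \cref{assumption:two}, \cref{assumption:three}, nor any classical monotonicity of $S$. The only subtle point is to remember that one must appeal to \cref{rem:f-unique} (rather than trying to unfold the two supremum clauses for $\beta$ and $\gamma$ directly), so that the indexing families for $F\beta$ and $F\gamma$ are aligned via the substitution $\beta \downarrow b = \gamma \downarrow c_b$. If one preferred to avoid \cref{rem:f-unique}, an alternative route is transfinite induction on $\beta$, unfolding via \cref{ordinal-as-sup-of-successors} and then applying \eqref{eq:sup-spec-case} and \eqref{eq:S-condition}; but this simply re-derives the same recursive characterization inline.
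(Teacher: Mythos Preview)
Your proof is correct, but it takes a different route from the paper. The paper's argument is the slicker ``continuous implies monotone'' trick applied directly to the supremum clause~\eqref{eq:sup-spec-case}: given $\beta \le \gamma$, one forms the two-element family $L : \Two \to \Ord$ with $L_0 \defeq \beta$, $L_1 \defeq \gamma$, notes that $\sup L = \gamma$, and then computes
\[
  F\gamma = F(\sup L) = Z \vee \sup(F\beta, F\gamma) \ge Z \vee F\beta \ge F\beta.
\]
This uses neither \cref{rem:f-unique} nor the successor clause~\eqref{eq:S-condition}; only the supremum clause is needed. By contrast, your argument unfolds $F$ via the recursive characterization of \cref{rem:f-unique} and then transports initial segments along the simulation $\beta \le \gamma$. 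Both are fine; the paper's proof is shorter and makes clear that monotonicity is a formal consequence of the (generalized) continuity property alone, while your approach makes the simulation $F\beta \le F\gamma$ more explicit at the level of the indexing families.
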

\begin{proof}
	It is standard that continuous functions are monotone, and the clause $Z \vee {}$ in \eqref{eq:sup-spec-case} does not alter the argument.
	We consider the family \(L : \Two \to \Ord\) given by
	\(L_0 \defeq \beta\) and \(L_1 \defeq \gamma\).
	If $\beta \leq \gamma$ then
	\(F\,\gamma = F(\sup_{b : \Two}L_b) = Z \vee \sup_{b : \Two}(F\,L_b) \ge Z
	\vee {F\,\beta} \ge F\,\beta\).
\end{proof}

\begin{lemma}[\flinkprime{7}{Lemma-45}]\label{lem:F-strictly-mono}
	Under \cref{assumption:two},
	$F$ is strictly monotone, i.e., $\beta < \gamma$ implies $F \, \beta < F \, \gamma$.
\end{lemma}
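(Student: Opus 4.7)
My plan is to use the successor equation \eqref{eq:S-condition} together with the weak monotonicity established in \cref{lem:F-weakly-mono}. Given $\beta < \gamma$, I would first deduce $\beta + \One \le \gamma$. By definition of $<$ on $\Ord$, there exists $c : \gamma$ with $\beta = \gamma \initseg c$, and then the map $\beta + \One \to \gamma$ sending $\inl x$ to the image of $x$ under the inclusion $\gamma \initseg c \hookrightarrow \gamma$ and $\inr \star$ to $c$ is readily checked to be a simulation. Applying \cref{lem:F-weakly-mono} then gives $F(\beta + \One) \le F\gamma$.

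It therefore suffices to establish the strict inequality $F\beta < F(\beta + \One)$, after which the composite $F\beta < F(\beta + \One) \le F\gamma$ yields the claim (using the standard fact on $\Ord$ that $\alpha < \beta$ is equivalent to $\alpha + \One \le \beta$, and hence $<$ composed with $\le$ on the right is again $<$). For this strict step, the successor equation \eqref{eq:S-condition} together with the decomposition $S = \Id + H$ from \cref{assumption:one} yields $F(\beta + \One) = S(F\beta) = F\beta + H(F\beta)$, and \cref{assumption:two} supplies an element $h : H(F\beta)$ with $H(F\beta) \initseg h = \Zero$. Using \cref{initial-segment-of-sum},
\begin{equation*}
  (F\beta + H(F\beta)) \initseg \inr h \;=\; F\beta + (H(F\beta) \initseg h) \;=\; F\beta + \Zero \;=\; F\beta,
\end{equation*}
which exhibits $F\beta$ as an initial segment of $F(\beta + \One)$, i.e., $F\beta < F(\beta + \One)$ as required.

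I do not anticipate any genuine obstacle: \cref{assumption:three} is not needed here, each of \cref{assumption:one,assumption:two} is used exactly once, and the argument is a faithful transcription of the classical ``$S$ strictly increases at each successor, hence so does $F$'' reasoning. The only mildly delicate ingredient is the passage from $\beta < \gamma$ to $\beta + \One \le \gamma$, but this is a routine manipulation of initial segments already implicit in \cref{ordinal-as-sup-of-successors}.
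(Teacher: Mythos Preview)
Your argument is correct. Both you and the paper rely on the same core computation: using \cref{assumption:one,assumption:two} to exhibit $F\beta$ as the initial segment of $F\beta + H(F\beta)$ at $\inr h$, where $h$ is the least element of $H(F\beta)$. The difference is in packaging. You factor through the intermediate ordinal $F(\beta+\One)$: first $F\beta < F(\beta+\One)$ via the successor equation \eqref{eq:S-condition}, then $F(\beta+\One) \le F\gamma$ via \cref{lem:F-weakly-mono} applied to $\beta+\One \le \gamma$, and finally compose $<$ with $\le$. The paper instead unfolds $F\gamma$ directly via the recursive description \eqref{eq:def-of-F}, strips the $Z$ summand (which also implicitly appeals to \cref{lem:F-weakly-mono} to obtain $Z = F\Zero \le F(\gamma \initseg c_0)$), and locates the element $[c_0,\inr h_0]$ inside the supremum itself. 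Your route is slightly more modular and avoids manipulating the supremum formula; the paper's route avoids the small lemma $\beta < \gamma \Leftrightarrow \beta + \One \le \gamma$. Either way the content is the same, and you are right that \cref{assumption:three} plays no role here.
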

\begin{proof}
	Assume $\beta < \gamma$, i.e., we have $\beta = \gamma \downarrow c_0$ for some $c_0$.
	Write $h_0$ for the minimal element of $H(F(\gamma \downarrow c_0))$, given by \cref{assumption:two}.
	Starting with \eqref{eq:def-of-F} from \cref{rem:f-unique}, we then calculate:
	\begin{alignat}{9}
		F \, \gamma &\;& = &\; && Z \vee \sup_{c : \gamma} (S(F (\gamma \downarrow c))) \notag \\
		&& = &&& Z \vee (\sup_{c : \gamma} (F (\gamma \downarrow c) + H(F (\gamma \downarrow c)))) \label{eq:step-with-Z} \\
		&& = &&& \sup_{c : \gamma} (F (\gamma \downarrow c) + H(F (\gamma \downarrow c))) \label{eq:step-wo-Z} \\
		&& > &&& \sup_{c : \gamma} (F (\gamma \downarrow c) + HF (\gamma \downarrow c)) \downarrow [c_0, \inr\,h_0] \notag \\
                && = &&& F(\gamma \downarrow c_0) + (HF (\gamma \downarrow c_0) \downarrow h_0) \notag \\
		&& = &&& F(\gamma \downarrow c_0). \notag
	\end{alignat}
	The step from
	\eqref{eq:step-with-Z} to \eqref{eq:step-wo-Z}
	uses that \[Z = F\,\Zero \leq F(\gamma \downarrow c_0) \leq F(\gamma \downarrow c_0) + H(F(\gamma \downarrow c_0)) \leq \eqref{eq:step-wo-Z},\] which allows us to remove $Z$.
\end{proof}

If an ordinal $\beta$ is bounded by ordinals in the image of $F$, then \cref{assumption:one} allows us to significantly tighten the bound to locate $\beta$ in the interval between $F \gamma'$ and $F (\gamma' + \One)$ for some $\gamma'$ smaller than the preimage of the original bound:

\begin{lemma}[\flinkprime{7}{Lemma-46}] \label{lem:tightening}
	Let $\beta$ and $\gamma$ be ordinals such that $F \, \Zero \leq \beta < F \, \gamma$.
	Under \cref{assumption:one}, there exists $\gamma' < \gamma$ such that $F \, \gamma' \leq \beta < F (\gamma' + \One)$.
\end{lemma}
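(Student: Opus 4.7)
The plan is to proceed by transfinite induction on $\gamma$, using that the existential conclusion is a proposition so that we may freely extract witnesses from propositional truncations. Using \cref{rem:f-unique} together with \cref{assumption:one}, we first rewrite
\[
  F\gamma \;=\; Z \vee \sup_{c : \gamma}\bigl(F(\gamma \initseg c) + H(F(\gamma \initseg c))\bigr),
\]
which exhibits $F\gamma$ as a supremum indexed by $\One + \gamma$. The hypothesis $\beta < F\gamma$ together with \eqref{initial-segment-of-sup} then provides an index $i : \One + \gamma$ and an element $x$ in the $i$-th family member such that $\beta$ equals the initial segment determined by $x$.

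The case $i = \inl\star$ would give $\beta = Z \initseg x$, hence $\beta < Z = F\,\Zero$, contradicting the hypothesis $F\,\Zero \leq \beta$. Thus $i = \inr c_0$ for some $c_0 : \gamma$, and $\beta$ is an initial segment of the sum $F(\gamma \initseg c_0) + H(F(\gamma \initseg c_0))$. By \eqref{initial-segment-of-sum}, this initial segment is either of the form $F(\gamma \initseg c_0) \initseg e$, in which case $\beta < F(\gamma \initseg c_0)$, or of the form $F(\gamma \initseg c_0) + (H(F(\gamma \initseg c_0)) \initseg h)$ for some $h$. In the first subcase, $F\,\Zero \leq \beta < F(\gamma \initseg c_0)$ and $\gamma \initseg c_0 < \gamma$, so the induction hypothesis delivers the desired $\gamma'$ with $\gamma' < \gamma \initseg c_0 < \gamma$ and $F\gamma' \leq \beta < F(\gamma' + \One)$. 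In the second subcase, the plan is to take $\gamma' \defeq \gamma \initseg c_0$: the lower bound $F\gamma' \leq \beta$ is immediate since $\beta$ is a sum with left summand $F\gamma'$, and the upper bound follows because \eqref{eq:S-condition} combined with \cref{assumption:one} gives $F(\gamma' + \One) = F\gamma' + H(F\gamma')$, while $\beta = (F\gamma' + H(F\gamma')) \initseg \inr h$ is a proper initial segment of this ordinal by \eqref{initial-segment-of-sum}.

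The main subtlety is purely logical: \eqref{initial-segment-of-sup} supplies the pair $(i,x)$ only up to propositional truncation. This is unproblematic since the goal is a proposition, so the truncation eliminator is available; all the subsequent inequalities then reduce to routine monotonicity of addition and the identification of initial segments with strict predecessors in the ordinal of ordinals.
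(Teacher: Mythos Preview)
Your proof is correct and follows essentially the same approach as the paper: transfinite induction on $\gamma$, unfolding $F\gamma$ via \cref{rem:f-unique} and \cref{assumption:one}, ruling out the $\inl\star$ case by contradiction with $F\,\Zero \le \beta$, and then splitting on whether the witness in the sum $F(\gamma\initseg c) + H(F(\gamma\initseg c))$ lies in the left or right summand, invoking the induction hypothesis in the former and taking $\gamma' \defeq \gamma\initseg c$ in the latter. Your explicit remark about propositional truncation is a helpful addition that the paper leaves implicit.
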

\begin{proof}
  We do transfinite induction on $\gamma$.
  Using \eqref{eq:def-of-F} and the assumption $\beta < F\,\gamma$, there exists an $x$ such that
	\begin{equation*}
		\beta = (Z \vee \sup_{c : \gamma}(S(F(\gamma \downarrow c)))) \downarrow x.
	\end{equation*}
	If $x$ was of the form $\inl\,\star$, we would have $\beta = Z \downarrow x < Z \leq F \, \Zero \leq \beta$, which would be a contradiction.
	Therefore, we can assume that $x$ is of the form $\inr\,[c, y]$, with $c : \gamma$ and $y : S(F(\gamma \downarrow c))$, so that we have
	\begin{equation*}
		\beta = S(F(\gamma \downarrow c)) \downarrow y = \left(F(\gamma \downarrow c) + H(F(\gamma \downarrow c))\right) \downarrow y'
	\end{equation*}
	for some $y'$. If $y'$ is of the form $y' = \inr \, h$, we have $F(\gamma \downarrow c) \leq \beta < S(F(\gamma \downarrow c)) = F((\gamma \downarrow c) + \One)$ and are done.
	If $y'$ is of the form $y' = \inl\,y''$, we apply the induction hypothesis to $(\gamma \downarrow c) < \gamma$, since then $\beta = F(\gamma \downarrow c) \downarrow y'' < F(\gamma \downarrow c)$.
\end{proof}

For our main result, we will need to use transfinite induction on pairs of ordinals, with a a stronger induction hypothesis than usual. This is justified by the following construction of an ``unordered'' variant of a wellfounded order, where we can either compare with $(x, y)$ or with $(y, x)$. The following result shows that this order is still wellfounded, so that it can be used for transfinite induction.

\newcommand{\uo}{\textup{\textsf{uo}}}
\begin{lemma}[\flinkprime{7}{Lemma-47}]\label{thm:uo-wellfounded}
	Let $(A, <)$ be a wellfounded order.
	Then, the \emph{unordered} order $<_\uo$ on $A \times A$ defined by
	\begin{equation*}
		(a_1, a_2) <_\uo (a'_1, a'_2) \defeq \left(a_1 < a'_1 \wedge a_2 < a'_2\right) + \left(a_1 < a'_2 \wedge a_2 < a'_1\right)
	\end{equation*}
	is also wellfounded.
\end{lemma}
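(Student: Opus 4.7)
The plan is to prove wellfoundedness by transfinite induction on the first coordinate, after first establishing a symmetry lemma that swaps the two arguments.

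The key observation is that the relation $<_\uo$ is invariant under simultaneous swapping of both coordinates: unwinding the definition, $(d_1,d_2) <_\uo (y,x)$ holds iff $(d_2,d_1) <_\uo (x,y)$ holds, since the two disjuncts in the definition just exchange places. A routine induction on the accessibility derivation then shows that for all $x,y \in A$, $\operatorname{Acc}_{<_\uo}(x,y) \to \operatorname{Acc}_{<_\uo}(y,x)$: given an accessibility proof of $(x,y)$ with induction hypothesis that every $(c_1,c_2) <_\uo (x,y)$ has $\operatorname{Acc}(c_2,c_1)$, and given $(d_1,d_2) <_\uo (y,x)$, the swap gives $(d_2,d_1) <_\uo (x,y)$, so the IH yields $\operatorname{Acc}(d_1,d_2)$, as required.

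Next I would prove $\forall a \in A.\,\forall b \in A.\,\operatorname{Acc}_{<_\uo}(a,b)$ by transfinite induction on $a$ in the wellfounded order $(A,<)$, with the inner universal quantifier over $b$ bundled into the motive (so the induction hypothesis reads: for every $a' < a$ and every $b' \in A$, $(a',b')$ is accessible in $<_\uo$). Fix $b$ and take an arbitrary $(c_1,c_2) <_\uo (a,b)$; by definition there are two cases. If $c_1 < a$ and $c_2 < b$, the outer induction hypothesis applied at $a' := c_1$ and $b' := c_2$ directly gives $\operatorname{Acc}_{<_\uo}(c_1,c_2)$. If instead $c_1 < b$ and $c_2 < a$, the outer induction hypothesis applied at $a' := c_2$ and $b' := c_1$ gives $\operatorname{Acc}_{<_\uo}(c_2,c_1)$, and the symmetry lemma from the previous step converts this to $\operatorname{Acc}_{<_\uo}(c_1,c_2)$.

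Notice that no inner induction on $b$ is needed, because in both disjuncts of $<_\uo$ one of the coordinates is already strictly below $a$, so a single transfinite induction on $a$ suffices as long as we can freely swap coordinates. The only conceptual step, and the one I expect to require the most care, is the symmetry lemma: without it, the second disjunct produces accessibility of $(c_2,c_1)$ rather than of the desired $(c_1,c_2)$. Everything else is a mechanical unfolding of the accessibility predicate.
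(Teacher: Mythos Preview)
Your proof is correct. The decomposition differs from the paper's, though the underlying idea is the same. The paper strengthens the \emph{motive}: it proves, by nested $<$-induction on both $a$ and $b$, the statement ``both $(a,b)$ and $(b,a)$ are $<_\uo$-accessible''. With this stronger motive the second disjunct is handled directly by the induction hypothesis (which already speaks about the swapped pair), so no separate symmetry lemma is needed. You instead isolate the symmetry as a standalone lemma (proved by induction on accessibility rather than on $<$) and then get away with a single transfinite induction on the first coordinate. Your observation that no inner induction on $b$ is required is a genuine simplification: even in the paper's double induction, only the outer hypothesis is actually used. The trade-off is that your route has two separate inductions (one on $\mathrm{Acc}$, one on $<$), while the paper has one induction with a slightly heavier motive; both are perfectly fine, and your version arguably makes the role of symmetry more transparent.
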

\begin{proof}
  We show that for all $a, b : A$, both $(a,b)$ and $(b,a)$ are $<_\uo$-accessible by $<$-induction on both arguments.
  Given $a$ and $b$, we show that $(a, b)$ is accessible; similarly $(b, a)$ is accessible in a symmetric way.
  We need to show that all predecessors of $(a, b)$ are accessible.
  Such a predecessor is of the form $(a',b')$ or $(b',a')$ with $a' < a$ and $b' < b$, and thus accessible by the induction hypothesis.
\end{proof}

We now aim for the main result of this section, namely that $F\,\beta \leq F\, \gamma$ implies $\beta \leq \gamma$. To get there, we generalise the statement in order for the induction hypothesis to go through: rather than assuming $F\,\beta \leq F\, \gamma$, we assume $F\,\beta \leq F\, \gamma + \delta$ for an ordinal $\delta$ which is not too large, in the sense that $F\, \gamma + \delta < F(\gamma + \One)$. In the end, we can instantiate $\delta = \Zero$ to get the result we want.

\begin{lemma}[\flinkprime{7}{Lemma-48}]\label{thm:cancellation-generalised}
	Under \cref{assumption:one,assumption:two,assumption:three},
	if $F \beta \leq F \, \gamma + \delta < F(\gamma + \One)$, then $\beta \leq \gamma$.
\end{lemma}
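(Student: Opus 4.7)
The plan is to prove the statement by transfinite induction on the pair $(\beta,\gamma)$ using the wellfounded unordered order $<_\uo$ from~\cref{thm:uo-wellfounded}, which provides an induction hypothesis treating the two ordinals symmetrically. To show $\beta \leq \gamma$, I would use characterization~(i) of $\leq$ and prove that $\beta' := \beta \downarrow b$ satisfies $\beta' < \gamma$ for every $b : \beta$. Strict monotonicity~(\cref{lem:F-strictly-mono}) gives $F\beta' < F\beta \leq F\gamma + \delta < F(\gamma+\One)$, so applying tightening~(\cref{lem:tightening}) to the chain $F\beta' < F(\gamma+\One)$ produces $\gamma'' < \gamma+\One$ with $F\gamma'' \leq F\beta' < F(\gamma''+\One)$; using~\cref{assumption:one} together with left cancellation of ordinal addition, one writes $F\beta' = F\gamma'' + \delta''$ with $\delta'' < H(F\gamma'')$, so the triple $(\beta',\gamma'',\delta'')$ already satisfies the hypothesis of the lemma.

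Since $\gamma'' < \gamma+\One$ there are two subcases. If $\gamma'' < \gamma$, both components of $(\beta', \gamma'')$ strictly decrease relative to $(\beta, \gamma)$, so $(\beta', \gamma'') <_\uo (\beta, \gamma)$, and the induction hypothesis applied to $(\beta', \gamma'', \delta'')$ yields $\beta' \leq \gamma''$; since $\gamma''$ is realised as a specific initial segment $\gamma \downarrow c$, composing the resulting simulation with the inclusion $\gamma'' \hookrightarrow \gamma$ witnesses $\beta' < \gamma$. If instead $\gamma'' = \gamma$, then $F\gamma \leq F\beta'$, and one aims for a contradiction: invoking the induction hypothesis in the swapped orientation on the triple $(\gamma, \beta', \Zero)$, whose hypothesis holds because $H(F\beta') > \Zero$ by~\cref{assumption:two}, yields $\gamma \leq \beta'$; combined with $\beta' < \beta$ this forces $\gamma < \beta$, hence $\gamma + \One \leq \beta$, and so $F(\gamma+\One) \leq F\beta \leq F\gamma + \delta < F(\gamma+\One)$, a contradiction. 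Instantiating $\delta := \Zero$ in the conclusion then recovers the cancellation $F\beta \leq F\gamma \Rightarrow \beta \leq \gamma$ that motivated the generalisation.

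The principal difficulty is justifying the swapped application of the induction hypothesis in the second subcase, where the pair $(\gamma, \beta')$ must be $<_\uo$-smaller than $(\beta, \gamma)$; this is precisely the scenario that motivated the symmetric treatment of the pair components in~\cref{thm:uo-wellfounded}, and careful attention is needed to verify that the appropriate matching witnesses the strict decrease on both components simultaneously. A secondary but still delicate point in the first subcase is the upgrade from the weak relation $\beta' \leq \gamma''$ supplied by the induction hypothesis to the strict $\beta' < \gamma$ demanded by characterization~(i), which works only because $\gamma''$ arrives as a concrete initial segment $\gamma \downarrow c$ rather than as an abstractly smaller ordinal, so that the simulation can be postcomposed with the canonical inclusion to produce an actual predecessor witness.
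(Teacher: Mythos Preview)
Your overall structure is right, but both subcases contain genuine gaps, and in fact you have placed the swapped use of the induction hypothesis in the wrong subcase.

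In your subcase $\gamma'' = \gamma$, you propose to apply the induction hypothesis to the swapped pair $(\gamma,\beta')$. But $(\gamma,\beta') <_\uo (\beta,\gamma)$ requires either ($\gamma < \beta$ and $\beta' < \gamma$) or ($\gamma < \gamma$ and $\beta' < \beta$); the second disjunct is impossible, and the first would need $\gamma < \beta$, which is precisely what you are trying to derive. So the induction hypothesis is simply unavailable here. The paper handles this case without the induction hypothesis at all: from $F\gamma \leq F(\beta \initseg b)$ and \cref{assumption:three} one gets $S(F\gamma) \incr S(F(\beta \initseg b)) = F((\beta \initseg b)+\One) \leq F\beta \leq F\gamma + \delta < F(\gamma+\One) = S(F\gamma)$, yielding $S(F\gamma) \incrst S(F\gamma)$ and hence a contradiction via \cref{lem:escardo-generalised}. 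This is where \cref{assumption:three} is actually used; your sketch never invokes it.

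In your subcase $\gamma'' < \gamma$, you obtain $\beta' \leq \gamma'' = \gamma \initseg c$ from the induction hypothesis and then claim that composing with the inclusion $\gamma \initseg c \hookrightarrow \gamma$ ``witnesses $\beta' < \gamma$''. It does not: the composite is a simulation $\beta' \to \gamma$, i.e.\ $\beta' \leq \gamma$, and as noted after \cref{lem:escardo-generalised} the implication $(\alpha \leq \beta) \to (\beta < \gamma) \to (\alpha < \gamma)$ fails constructively. The paper repairs this by applying the \emph{swapped} induction hypothesis \emph{here}, to the pair $(\gamma \initseg c,\beta')$ with $\delta = \Zero$: this is legitimate because $(\gamma \initseg c,\beta') <_\uo (\beta,\gamma)$ via the second disjunct ($\gamma \initseg c < \gamma$ and $\beta' < \beta$), and it yields $\gamma \initseg c \leq \beta'$. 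Combined with the direct application giving $\beta' \leq \gamma \initseg c$, one obtains the \emph{equality} $\beta' = \gamma \initseg c$, which is exactly the witness for $\beta' < \gamma$.
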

\begin{proof}
  We do wellfounded induction on $(\beta, \gamma)$ using the order $<_\uo$ on $\Ord \times \Ord$ defined in \cref{thm:uo-wellfounded}.
  Fix $b : \beta$.
  By \cref{lem:F-strictly-mono}, we have $F(\beta \downarrow b) < F \, \beta$.
  Combined with the assumed inequalities, this implies $F \, \Zero \leq F(\beta \downarrow b) < F(\gamma+\One)$.
  Thus, there is $x : \gamma + \One$ such that
  \begin{equation}\label{eq:c-cases}
    F((\gamma + \One) \downarrow x) \leq F(\beta \downarrow b) < F((\gamma +\One) \downarrow x + \One)
  \end{equation}
  by \cref{lem:tightening}.

  Let us first assume that $x$ is of the form $c = \inr\,\star$.
  Then, the first part of \eqref{eq:c-cases} simplifies to $F \, \gamma \leq F(\beta \downarrow b)$, which we claim is impossible, since we then have the following inequality:
  \begin{align*}
    S(F \, \gamma)
    &\incr S(F(\beta \downarrow b)) && \text{(by \cref{assumption:three})} \\
    &= F((\beta \downarrow b) + \One) &&\text{(by the specification)}\\
    &\leq F \beta && \text{(by $(\beta \downarrow b) + \One \leq \beta$ and \cref{lem:F-weakly-mono})} \\
    &\leq F \gamma + \delta && \text{(by assumption)} \\
    &< F(\gamma + \One) &&  \text{(by assumption)} \\
    &= S(F \gamma) && \text{(by the specification)}
  \end{align*}
  Applying \cref{lem:escardo-generalised} \ref{item:cl-important}, this implies $S(F \, \gamma) \incrst S(F \, \gamma)$ and thus a contradiction by \cref{lem:escardo-generalised} \ref{item:cl-irreflexive}.

  Hence, we have $x = \inl \, c_b$, and can write \eqref{eq:c-cases} as:
	\begin{equation}\label{eq:aux}
		F(\gamma \downarrow c_b) \leq F(\beta \downarrow b) < F(\gamma \downarrow c_b + \One).
	\end{equation}
	In particular, we have $F(\gamma \downarrow c_b) \leq F(\beta \downarrow b)$, and thus, using strict monotonicity of~$F$,
	\begin{equation}\label{eq:F-inequality}
		F(\gamma \downarrow c_b) \leq F(\beta \downarrow b) < F(\beta \downarrow b + \One).
	\end{equation}
	Hence the induction hypothesis with $\delta = \Zero$ gives us
	\begin{equation}\label{eq:reflection-result-first-part}
		\gamma \downarrow c_b \leq \beta \downarrow b.
	\end{equation}
	(Note that we swapped the arguments of the induction hypothesis here --- this is why we do induction on $<_\uo$.)

	Next, we unfold the last occurrence of $F$ in \eqref{eq:aux} using \cref{assumption:one}:
	\begin{equation*}
		F(\gamma \downarrow c_b) \leq F(\beta \downarrow b) < F(\gamma \downarrow c_b) + HF(\gamma \downarrow c_b).
	\end{equation*}
	This implies that there is a $z$ such that
	\begin{equation*}
		F(\beta \downarrow b) = (F(\gamma \downarrow c_b) + HF(\gamma \downarrow c_b)) \downarrow z.
	\end{equation*}
	If $z$ is of the form $z = \inl\, z_0$, then \(F(\beta \downarrow b) < F(\gamma \downarrow c_b)\) which contradicts \eqref{eq:F-inequality}.
	Hence $z$ must be of the form $z = \inr \, z_1$, which allows us to apply the induction hypothesis, as we can set $\delta \defeq H(F(\gamma \downarrow c_b)) \downarrow z_1$ and get $F(\beta \downarrow b) = F(\gamma \downarrow c_b) + \delta < F(\gamma \downarrow c_b + \One)$.
	Hence we $\beta \downarrow b \leq \gamma \downarrow c_b$ which combined with \eqref{eq:reflection-result-first-part} gives $\beta \downarrow b = \gamma \downarrow c_b$, and we get $b \leq \gamma$ as required.
\end{proof}

By instantiating \cref{thm:cancellation-generalised} with $\delta = \Zero$ and using that $F\,\gamma < F(\gamma + \One)$ by \cref{lem:F-strictly-mono}, we achieve our main result:

\begin{corollary}[\flinkprime{7}{Corollary-49}]
	Under \cref{assumption:one,assumption:two,assumption:three}, if $F\,\beta \leq F\,\gamma$ then $\beta \leq \gamma$, and if $F\,\beta = F\,\gamma$, then $\beta = \gamma$. \qed
\end{corollary}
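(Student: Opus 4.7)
The plan is to derive both statements directly from \cref{thm:cancellation-generalised}, instantiated at $\delta = \Zero$. For the first claim, assume $F\,\beta \leq F\,\gamma$. Using $F\,\gamma = F\,\gamma + \Zero$, the hypothesis of \cref{thm:cancellation-generalised} becomes $F\,\beta \leq F\,\gamma + \Zero < F(\gamma + \One)$. The left inequality is immediate from the assumption, and the strict inequality $F\,\gamma < F(\gamma + \One)$ follows from \cref{lem:F-strictly-mono} applied to $\gamma < \gamma + \One$; note that \cref{lem:F-strictly-mono} requires only \cref{assumption:two}, which we do have. \cref{thm:cancellation-generalised} then yields $\beta \leq \gamma$.

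For the equality claim, suppose $F\,\beta = F\,\gamma$. Then both $F\,\beta \leq F\,\gamma$ and $F\,\gamma \leq F\,\beta$ hold, and applying the first part symmetrically in $\beta$ and $\gamma$ gives $\beta \leq \gamma$ and $\gamma \leq \beta$. By antisymmetry of the partial order $\leq$ on $\Ord$ (recorded in \cref{subsec:ord-of-ords}), we conclude $\beta = \gamma$.

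There is no real obstacle here: all the combinatorial work --- the wellfounded induction on pairs via the unordered order of \cref{thm:uo-wellfounded}, the tightening step of \cref{lem:tightening}, and the careful case analysis on initial segments of $\gamma + \One$ that uses all three assumptions --- has already been carried out inside the proof of \cref{thm:cancellation-generalised}. The corollary is pure packaging: choosing $\delta = \Zero$ removes the auxiliary ordinal from the statement, and the equality variant is immediate from antisymmetry. The payoff, as flagged at the end of \cref{sec:satisfying-cases}, is that this yields left cancellation simultaneously for ordinal addition $(\alpha + -)$, for multiplication $(\alpha \times -)$ whenever $\alpha \geq \One$, and for exponentiation $(\abstrexp{\alpha}{-})$ whenever $\alpha$ has a trichotomous least element and $\alpha \geq \Two$.
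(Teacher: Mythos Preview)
Your proposal is correct and matches the paper's approach exactly: the paper also derives the corollary by instantiating \cref{thm:cancellation-generalised} with $\delta = \Zero$ and invoking \cref{lem:F-strictly-mono} for the strict inequality $F\,\gamma < F(\gamma + \One)$. Your added remark that the equality version follows from antisymmetry is the obvious (and intended) completion.
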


As discussed in \cref{sec:satisfying-cases}, addition, multiplication and exponentiation each satisfies \cref{assumption:one,assumption:two,assumption:three}, under some mild and expected conditions.

\begin{theorem}[\flink{Theorem-50}]\label{thm:cancellation}
  For any ordinal $\alpha$, the arithmetic operations reflect $\leq$ and $=$, as follows:
  \begin{enumerate}[label=(\roman*)]
  \item $(\alpha + \_)$, without any condition;
  \item \label{item:cancel-mult} $(\alpha \times \_)$, assuming $\alpha \geq \One$;
  \item $(\alpha^-)$, assuming that $\alpha \geq 2$ and the least element is trichotomous. \qed
  \end{enumerate}
\end{theorem}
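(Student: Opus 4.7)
The plan is to obtain each of (i)--(iii) as a direct instance of the preceding corollary by exhibiting the operation in question as specified by cases in the sense of \cref{def:satisfying-case-spec} and verifying \cref{assumption:one,assumption:two,assumption:three} under the stated hypotheses on $\alpha$. Reflection of equality then falls out from reflection of $\leq$ combined with antisymmetry of the ordinal order, so I concentrate on the $\leq$-reflecting statements throughout.

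For (i), addition is specified by $Z \defeq \alpha$ and $S\beta \defeq \beta + \One$, the case equations being \cref{addition-defining-equations}; setting $H\beta \defeq \One$ witnesses $S = \Id + H$ and gives $H(F\beta) = \One > \Zero$ unconditionally. For (ii), multiplication is specified by $Z \defeq \Zero$ and $S\beta \defeq \beta + \alpha$ via \cref{multiplication-defining-equations}; taking $H\beta \defeq \alpha$ gives $S = \Id + H$, and $H(F\beta) = \alpha > \Zero$ is exactly the hypothesis $\alpha \geq \One$. For (iii), \cref{trichotomy-gives-positive-subset-ordinal} decomposes $\alpha = \One + \posalpha$, and right-distributivity of $\times$ over $+$ gives $\beta \times \alpha = \beta + \beta \times \posalpha$; hence exponentiation is specified by $Z \defeq \One$ and $S\beta \defeq \beta \times \alpha$ (the case equations coming from \cref{abstract-exp-satisfies-spec}, which applies since trichotomy implies $\alpha \geq \One$) with $H\beta \defeq \beta \times \posalpha$. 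Here \cref{assumption:two} asks for $\abstrexp{\alpha}{\beta} \times \posalpha > \Zero$: the first factor is always at least $\One$ by construction of abstract exponentiation, while $\posalpha \geq \One$ is equivalent to $\alpha \geq \Two$.

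What remains is to verify \cref{assumption:three} in all three cases. I expect this to be the only mildly delicate point, since the assumption is phrased for classical comparisons rather than simulations, so \cref{+-x-right-monotone} cannot be invoked verbatim. However, in each case the desired extension of an order-preserving $f : \beta \to \gamma$ is built by pairing $f$ with an identity map: for (i) and (ii), $\inl b \mapsto \inl(f\,b)$ together with $\inr x \mapsto \inr x$ gives order-preserving maps $\beta + \One \to \gamma + \One$ and $\beta + \alpha \to \gamma + \alpha$ respectively; for (iii), $(b, a) \mapsto (f\,b, a)$ gives an order-preserving map $\beta \times \alpha \to \gamma \times \alpha$ with respect to the reverse-lexicographic order. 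Each is manifestly order-preserving by a case analysis on the relevant order, so the corollary applies in all three cases and delivers the theorem.
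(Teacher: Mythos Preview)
Your proposal is correct and follows exactly the paper's own route: the theorem is stated with a bare \qed\ because the paper has already, in \cref{sec:satisfying-cases}, identified the relevant $Z$, $S$ and $H$ for each operation and checked \cref{assumption:one,assumption:two,assumption:three} under precisely the hypotheses you list, so that Corollary~49 applies directly. Your verification of \cref{assumption:three} via the explicit maps $\inl b \mapsto \inl(f\,b)$, $\inr x \mapsto \inr x$, and $(b,a) \mapsto (f\,b,a)$ is a bit more detailed than the paper (which simply asserts the assumption holds ``trivially'' or ``always''), but the argument is the same.
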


It is worth contrasting \cref{thm:cancellation}\ref{item:cancel-mult}
with the situation for mere sets, without ordinal structure. A
classical result by Bernstein~\cite{Bernstein1905} shows that, using
classical logic, for sets $B$ and $C$, if
$\Two \times B \cong \Two \times C$ then $B \cong C$. Swan~\cite{Swan2018}
showed that this is not constructively provable, by exhibiting a topos
where the statement is false. Combined with \cref{thm:cancellation},
this gives a rather indirect proof that it is not constructively
provable that every set can be well ordered.

Finally, we note that uniqueness of simulations means that we can
characterise the computational behaviour of simulations between
arithmetic operations with the left component fixed:

\begin{proposition}[\flink{Proposition-51-i}]
  Let $\alpha$, $\beta$ and $\gamma$ be ordinals.
  \begin{enumerate}[label=(\roman*)]
  \item \label{item:canonical-addition} If $f : \alpha + \beta \leq \alpha + \gamma$, then there exists $h : \beta \leq \gamma$ such that $f(\inl\,a) = \inl\,a$ and $f(\inr\,b) = \inr\,(h\,b)$.
  \item If $\alpha \geq \One$ and $f : \alpha \times \beta \leq \alpha \times \gamma$, then there exists $h : \beta \leq \gamma$ such that $f(a, b) = (a, h(b))$.
  \item If $\alpha \geq \Two$, its least element is trichotomous, and $f : \listexp{\alpha}{\beta} \leq \listexp{\alpha}{\gamma}$, then there exists $h : \beta \leq \gamma$ such that $f\,\ell = \map\,((a, b) \mapsto (a , h\,b))\,\ell$.
  \end{enumerate}
\end{proposition}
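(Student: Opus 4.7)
The plan is to reduce each part to two ingredients that are already available: \cref{thm:cancellation} and the fact that \(\leq\) is proposition-valued, i.e., there is at most one simulation between any two ordinals. Cancellation supplies a simulation \(h : \beta \leq \gamma\) from the hypothesis, and the existing monotonicity constructions in the proofs of \cref{+-x-right-monotone,listexp-monotone} produce, from \(h\), a concrete simulation \(\tilde{f}\) with exactly the desired computational formula. Uniqueness of simulations then forces \(f = \tilde{f}\), which is the claim.

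For part (i), I would first apply \cref{thm:cancellation} to \(f : \alpha + \beta \leq \alpha + \gamma\) to extract \(h : \beta \leq \gamma\). The proof of \cref{+-x-right-monotone} gives, from this \(h\), a simulation \(\tilde f : \alpha + \beta \leq \alpha + \gamma\) defined by \(\inl\,a \mapsto \inl\,a\) and \(\inr\,b \mapsto \inr\,(h\,b)\); uniqueness identifies \(f\) with \(\tilde f\), and the claimed formula is read off. Part (ii) proceeds in exactly the same way: \cref{thm:cancellation} (applicable since \(\alpha \geq \One\)) yields \(h : \beta \leq \gamma\), and the other clause of \cref{+-x-right-monotone} constructs the simulation \((a,b) \mapsto (a, h\,b)\), which must equal \(f\) by uniqueness.

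For part (iii), the conditions \(\alpha \geq \Two\) and the existence of a trichotomous least element are exactly those required to invoke \cref{thm:cancellation}, which gives \(h : \beta \leq \gamma\). Then \cref{listexp-monotone} shows that \(h\) induces a simulation \(\overline{h} : \listexp{\alpha}{\beta} \leq \listexp{\alpha}{\gamma}\) acting on a list \(\ell\) by applying \(h\) to the second component of each pair, i.e., \(\overline{h}\,\ell = \map\,((a,b)\mapsto (a,h\,b))\,\ell\). Uniqueness of simulations gives \(f = \overline{h}\), completing the proof.

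I do not anticipate any real obstacle here, since all the substantive work has been done: the difficult step is \cref{thm:cancellation} itself, and combining it with uniqueness of simulations and the already-constructed simulations in \cref{+-x-right-monotone,listexp-monotone} is immediate. The only point to verify carefully is that the simulation produced by each of those lemmas matches, on the nose, the formula appearing in the statement of the proposition, which is visible by direct inspection of their proofs.
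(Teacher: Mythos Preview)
Your proposal is correct and follows essentially the same approach as the paper: the paper also extracts \(h\) from \cref{thm:cancellation}, builds the explicit simulation via \cref{+-x-right-monotone} (and \cref{listexp-monotone} for the exponential case), and then invokes uniqueness of simulations to identify it with \(f\).
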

\begin{proof}
  We prove \ref{item:canonical-addition}; the other cases follow in the same way. Assume $f : \alpha + \beta \leq \alpha + \gamma$. By \cref{thm:cancellation}, there exists $h : \beta \leq \gamma$, and by \cref{+-x-right-monotone}, there is $\overline{h} : \alpha + \beta \leq \alpha + \gamma$ with the desired computational behaviour. But by the uniqueness of simulations, $\overline{h} = f$, and hence also $f$ must compute in the same way.
\end{proof}

\section{Constructive Taboos}\label{sec:taboos}

We were able to give constructive proofs of many desirable properties of ordinal
exponentiation for both abstract and concrete exponentials, e.g.,
monotonicity in the exponent, or algebraic laws such as \(\alpha^{\beta + \gamma} = \alpha^\beta \times \alpha^\gamma\) and \(\alpha^{\beta \times \gamma} = {(\alpha^{\beta})}^{\gamma}\).
In this section, we explore classical properties that are not possible to prove constructively.
A first example is monotonicity in the base, which is
inherently classical.

\begin{proposition}[\flink{Proposition-52}]
  Exponentiation is monotone in the base if and only if \LEM\ holds.
  In fact, \LEM\ is already implied by each of the following weaker statements,
  even when \(\alpha\) and \(\beta\) are each assumed to have a trichotomous least
  element:
  \begin{enumerate}[label=(\roman*)]
  \item\label{exp-taboo} $\alpha < \beta \to \alpha^\gamma \leq \beta^\gamma$;
  \item\label{mult-taboo}
    \(\alpha < \beta \to \alpha \times \alpha \leq \beta \times \beta\).
  \end{enumerate}
\end{proposition}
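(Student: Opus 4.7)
The easy direction, $\LEM \Rightarrow$ monotonicity in the base, is classical. For the converse, I first observe that (i) implies (ii): set $\gamma \defeq \Two$ and apply \cref{basic-facts-from-spec} to reduce $\alpha^\Two$ to $\alpha \times \alpha$ (legitimate, since a trichotomous least element gives $\alpha \geq \One$). So it suffices to derive $\LEM$ from (ii).

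Given an arbitrary proposition $P$ (viewed as an ordinal with trivial order), I would set $\alpha \defeq \Two$ and $\beta \defeq \Three + P$. Both have trichotomous least elements, and $\alpha < \beta$ holds \emph{unconditionally}: the witness is $\inl 2 : \beta$, whose initial segment $\beta \initseg \inl 2 = \Two$ coincides with $\alpha$ regardless of $P$, because any $P$-contribution sits strictly above $\inl 2$ in $\beta$. Applying (ii) then yields a simulation $f : \alpha \times \alpha \to \beta \times \beta$, and I focus on $f(1, 1) = (x, y)$. Combining \cref{simulation-facts} (simulations preserve initial segments) with \cref{initial-segment-of-product} gives the equation $\beta \cdot (\beta \initseg y) + (\beta \initseg x) = (\alpha \times \alpha) \initseg (1, 1) = \alpha + \One = \Three$.

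The argument concludes with a case analysis on the top-level constructor of $x$ in $\beta = \Three + P$. If $x = \inr p$ for some $p : P$, then $P$ holds. Otherwise $x = \inl x'$ with $x' : \Three$, so $\beta \initseg x \in \{\Zero, \One, \Two\}$; in this case I claim $\neg P$. Suppose $p : P$: then $\beta$ has four elements, and trichotomy of $\beta$'s least element lets me split on whether $y$ is least or strictly above. If $y$ is the least element, $\beta \initseg y = \Zero$ forces $\beta \initseg x = \Three$, contradicting $\beta \initseg x \leq \Two$; otherwise $\beta \initseg y \geq \One$, so $\beta \cdot (\beta \initseg y) \geq \beta > \Three$, again making the sum too large. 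Both sub-cases contradict the displayed equation, so $\neg P$. Hence $P \vee \neg P$, establishing $\LEM$.

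The main obstacle is engineering $\alpha$ and $\beta$ satisfying three simultaneous requirements: both have trichotomous least elements, $\alpha < \beta$ must be provable without assuming anything about $P$, and yet the simulation forced by (ii) must still expose $P$-dependent structure. The trick is to place the proposition at the \emph{top} of $\beta$ via ordinal sum: the witness for $\alpha < \beta$ lives strictly below the $P$-contribution and is thus unconditional, while $f(1, 1)$ is nonetheless forced into the upper part of $\beta \times \beta$, whose exact shape depends on whether $P$ contributes an extra element.
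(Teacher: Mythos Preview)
Your argument is correct and shares the paper's setup ($\alpha = \Two$, $\beta = \Three + P$, inspect $f(1,1)$), but diverges in how decidability of $P$ is extracted from the simulation. The paper constructs, for each hypothetical $p : P$, an explicit simulation $g_p : \alpha \times \alpha \to \beta \times \beta$ with $g_p(1,1) = (\inr p,\inl 0)$, and then invokes \emph{uniqueness of simulations}: if $P$ holds then $f = g_p$, so $f(1,1)$ having first coordinate $\inl y$ immediately yields $\lnot P$ by disjointness of coproducts. You instead use that simulations preserve initial segments together with \cref{initial-segment-of-product} to obtain the equation $\beta \times (\beta \initseg y) + (\beta \initseg x) = \Three$, and then argue by size arithmetic under the assumption $p : P$. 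Your route avoids exhibiting and verifying an explicit simulation, at the cost of the extra case split on $y$ and the cardinality reasoning; the paper's route is shorter once one accepts the small verification that $g_p$ really is a simulation. Both are valid, and your approach nicely illustrates how the initial-segment calculus alone suffices.
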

\begin{proof}
  Note that \ref{exp-taboo} implies \ref{mult-taboo} by taking
  \(\gamma \defeq \Two\).
  To see that \ref{mult-taboo} implies \LEM, we consider an arbitrary
  proposition \(P\) and the ordinals \(\alpha \defeq \Two\) and
  \(\beta \defeq \Three + P\). Clearly, \(\alpha < \beta\) and \(\alpha\)
  and~\(\beta\) respectively have trichotomous least elements \(0\) and
  \(\inl 0\), so that we get a simulation
  \(f : \alpha \times \alpha \to \beta \times \beta\) by assumption.

  If we have \(p : P\), then
  \(g_p : \alpha \times \alpha \to \beta \times \beta\) defined by
  \begin{align*}
    &g_p(0,0) \defeq (\inl 0,\inl 0),\quad
    g_p(1,0) \defeq (\inl 1,\inl 0), \\
    &g_p(0,1) \defeq (\inl 2,\inl 0),\quad
    g_p(1,1) \defeq (\inr p,\inl 0).
  \end{align*}
  can be checked to be a simulation.

  Since simulations are unique, \(g_p\) must agree with \(f\) in case \(P\)
  holds. We now simply check where \((1,1)\) gets mapped by \(f\): if it is of
  the form \((\inr p,y')\) then obviously \(P\) holds; and if it is of the form
  \((\inl y,y')\), then we claim that \(\lnot P\) holds.
  Indeed, assuming \(p : P\) we obtain
  \((\inl y,y') = f(1,1) = g_p(1,1) \equiv (\inr p,\inl 0)\), which is
  impossible as coproducts are disjoint.
\end{proof}

We note that the argument above works for any operation satisfying the
exponentiation specification (by \cref{basic-facts-from-spec}).
The following also works for any operation satisfying \eqref{exp-strong-spec}.

\begin{lemma}[\flink{Lemma-53}]\label{exponentiate-by-prop}
  For a proposition \(P\) we have \(\abstrexp{\Two}{P} = \One + P\).
\end{lemma}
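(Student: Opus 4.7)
The plan is to exploit the observation that a proposition $P$, viewed as an ordinal, has an extremely simple structure: its strict order is constantly empty, so $P \initseg p = \Zero$ for every $p : P$. Combined with \cref{ordinal-as-sup-of-successors}, this gives a particularly nice presentation $P = \sup_{p : P}\left(P \initseg p + \One\right) = \sup_{p : P}\One$. The strategy is then to compute both sides of the desired equation by unfolding one level according to the recursive specifications of $\abstrexp{\Two}{(-)}$ and $\One + (-)$, and observe that both reduce to the same ordinal.

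For the left-hand side, note that $\Two \ge \One$, so \cref{abstract-exp-satisfies-spec} applies and the supremum clause of \eqref{exp-strong-spec} yields
\[
  \abstrexp{\Two}{P} \;=\; \abstrexp{\Two}{\sup_{p : P}\One} \;=\; \One \vee \sup_{p : P}\abstrexp{\Two}{\One}.
\]
Now \cref{basic-facts-from-spec} gives $\abstrexp{\Two}{\One} = \Two$, so the right-hand side simplifies to $\One \vee \sup_{p : P}\Two$. For the other side, the recursive characterization of addition from \cref{addition-defining-equations} yields
\[
  \One + P \;=\; \One \vee \sup_{p : P}\left(\One + (P \initseg p) + \One\right) \;=\; \One \vee \sup_{p : P}\left(\One + \Zero + \One\right) \;=\; \One \vee \sup_{p : P}\Two.
\]
By transitivity and antisymmetry, $\abstrexp{\Two}{P} = \One + P$.

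There is no real obstacle here: everything is a direct application of the specifications already established for addition and abstract exponentiation, together with the trivial fact that initial segments inside a proposition are empty. The only conceptual point worth flagging is that the same binary join $\One \vee \sup_{p : P}\Two$ appears naturally on both sides, which is why the argument goes through uniformly without a case split on whether $P$ holds --- a case split that would be unavailable constructively in any case.
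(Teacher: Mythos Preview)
Your proof is correct, and it follows a route that is close in spirit to the paper's but packaged differently. Both arguments compute \(\abstrexp{\Two}{P}\) as the supremum of the family over \(\One + P\) sending \(\inl\star\) to \(\One\) and \(\inr p\) to \(\Two\); in your notation this is exactly \(\One \vee \sup_{p:P}\Two\). The difference lies in how the other side is handled: the paper proves directly that this supremum equals \(\One + P\) by constructing simulations in both directions via initial-segment calculations, whereas you instead invoke the recursive characterization of addition from \cref{addition-defining-equations} to see that \(\One + P\) unfolds to the very same expression \(\One \vee \sup_{p:P}\Two\). Your approach is slightly more modular, reusing machinery already developed for addition and exponentiation rather than redoing an initial-segment computation by hand; the paper's approach is more self-contained at the point of use.

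One small remark: your concluding phrase ``by transitivity and antisymmetry'' is a little off --- you have shown both sides equal to the same ordinal, so transitivity of equality alone suffices; antisymmetry of \(\le\) is not invoked anywhere in your argument.
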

\begin{proof}
  Given \(p : P\), we note that
  \(\abstrexp{\Two}{P \initseg p} \times \Two = \abstrexp{\Two}{\Zero} \times
  \Two = \Two\), so that \(\abstrexp{\Two}{P} = \sup F_{\sbullet}\) with
  \(F : {\One + P} \to \Ord\) defined as \(F(\inl \star) \defeq \One\) and
  \(F(\inr p) \defeq \Two\).
  Since \(\sup\) gives the least upper bound and \(P\) implies
  \(\One + P = \Two\), we get a simulation \(\sup F_{\sbullet} \le \One + P\).
  Conversely, we note that
  \begin{align*}
    &(\One + P) \initseg (\inl\hspace{1pt} \star) = \Zero = \One \initseg \star =
    \sup F_{\sbullet} \initseg [\inl\hspace{1pt} \star,\star] \text{, and} \\
    &(\One + P) \initseg (\inr p) = \One = \Two \initseg 1 =
    \sup F_{\sbullet} \initseg [\inr p,1],
  \end{align*}
  yielding a simulation \({\One + P} \le {\sup F_{\sbullet}}\).
\end{proof}

As mentioned just after \cref{simulation-facts}, showing that \(\alpha \leq \beta\) is often straightforward in a classical metatheory, as \(\LEM\) ensures that a simulation can always be ``carved out'' out of any order preserving function \(\alpha \to \beta\). This result is unavoidably classical, in the sense that it in turn implies the law of excluded middle.

\begin{lemma}[\flink{Lemma-54}]\label{order-preserving-gives-sim-iff-LEM}
  Every order preserving function between ordinals induces a simulation if and only if\/ \(\LEM\) holds.
\end{lemma}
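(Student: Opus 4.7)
The plan is to prove both directions of the equivalence. The forward direction follows from classical ordinal trichotomy combined with \cref{lem:escardo-op-map}, while the backward direction requires a careful choice of ordinals so that a simulation encodes the truth value of an arbitrary proposition.

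For the forward direction, I would assume \(\LEM\) and let \(f : \alpha \to \beta\) be order preserving. Under \(\LEM\), the ordinal of ordinals \(\Ord\) is classically trichotomous, so one of \(\alpha < \beta\), \(\alpha = \beta\), or \(\beta < \alpha\) holds. The first two cases immediately give \(\alpha \leq \beta\), and the third is ruled out by \cref{lem:escardo-op-map} applied to \(f\). Hence \(\alpha \leq \beta\), providing the desired simulation.

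For the backward direction, given an arbitrary proposition \(P\), I would construct the ordinal \(\beta \defeq \One + (P + \One)\) via ordinal addition, viewing \(P\) as an ordinal with empty strict order (as permitted by the discussion in \cref{sec:ord-in-hott}). Its elements are \(\inl \star\), \(\inr(\inl p)\) for \(p : P\), and \(\inr(\inr \star)\), with \(\inl \star < \inr(\inl p) < \inr(\inr \star)\). The map \(f : \Two \to \beta\) defined by \(f(0) \defeq \inl \star\) and \(f(1) \defeq \inr(\inr \star)\) is clearly order preserving, so by hypothesis there is a simulation \(g : \Two \leq \beta\). Simulations preserve least elements (as noted after \cref{simulation-facts}) and are injective (\cref{simulation-facts}), so \(g(0) = \inl \star\) and \(g(1) = \inr y\) for some \(y : P + \One\). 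Case analysis on the coproduct \(y\) then yields \(P + \neg P\): if \(y = \inl p\), then \(P\) holds directly; if \(y = \inr \star\), then \(\neg P\) holds, because any \(p : P\) would produce \(\inr(\inl p) < g(1)\), and the simulation property would require some \(a < 1\) in \(\Two\) with \(g(a) = \inr(\inl p)\), forcing \(a = 0\) and so \(g(0) = \inr(\inl p)\), contradicting \(g(0) = \inl \star\).

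The hard part is the backward direction, where the key insight is the design of \(\beta\): by ``sandwiching'' \(P\) between the least and top elements, the order preserving map can bypass the middle layer entirely (so its definition is unconditional on \(P\)), while a simulation from \(\Two\) is rigid enough that its choice of \(g(1)\) must decide whether to sit in the \(P\)-layer or jump to the top, thereby deciding \(P\).
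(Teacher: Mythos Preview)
Your proof is correct. The forward direction matches the paper's argument in spirit: both amount to using \(\LEM\) to rule out \(\beta < \alpha\) via \cref{lem:escardo-op-map}, though the paper phrases this as ``under \(\LEM\), \(\lnot(\beta < \alpha)\) implies \(\alpha \le \beta\)'' rather than explicit trichotomy.

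The backward direction takes a slightly different route. The paper uses the simpler pair \(\alpha \defeq \One\), \(\beta \defeq P + \One\): the map \(\star \mapsto \inr\star\) is trivially order preserving, and any simulation \(\One \to P + \One\) must send \(\star\) to the least element of \(P + \One\), so inspecting that value decides \(P\). Your construction with \(\Two \to \One + (P + \One)\) works for the same underlying reason (a simulation is forced to hit the bottom layer, and its behaviour above that decides~\(P\)), but the extra point in the domain and the extra summand in the codomain are not needed: a simulation out of \(\One\) already pins down the least element of the target. The paper's version is thus more economical; your version makes the ``sandwich'' idea more visible but at the cost of a longer case analysis.
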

\begin{proof}
  The right-to-left
  direction --- that \(\LEM\) implies that order preserving functions induce simulations --- was proven and formalized by {Escard\'o}, see~\cite[\texttt{Ordinals.OrdinalOfOrdinals}]{TypeTopologyOrdinals}: using \(\LEM\), to prove \(\alpha \leq \beta\) it suffices to show that it is impossible that \(\beta < \alpha\), but this readily follows from having an order preserving function from \(\alpha\) to \(\beta\).

  In the other direction, let \(P\) be an arbitrary proposition, and consider \(\alpha \defeq \One\) and \(\beta \defeq P + \One\). The function \(\star \mapsto \inr\,\star : \alpha \to \beta\) is trivially order preserving, and thus gives rise to a simulation \(f : \alpha \to \beta\) by assumption. Since simulations preserve least elements, we can then decide \(P\), since \(P\) holds if and only if \(f\,\star = \inl\,p\) for some \(p : P\).
\end{proof}

The following proposition shows that the arithmetical operations are
inflationary if and only if \(\LEM\) holds.

\begin{proposition}[\flink{Proposition-55-i}]\label{inflationary-taboo}
  The following are all equivalent to \(\LEM\) (and hence to each other).
  \begin{enumerate}[label=(\roman*)]
  \item\label{addition-inflationary} For all ordinals \(\alpha\) and \(\beta\), we
    have \(\beta \le \alpha + \beta\).
  \item\label{multiplication-inflationary} For all ordinals \(\beta\) and
    \(\alpha > \Zero\), we have \(\beta \le \alpha \times \beta\).
  \item\label{exp-inflationary-base-two} For all ordinals \(\beta\), we have
    \(\beta \le \abstrexp{\Two}{\beta}\).
  \item\label{exp-inflationary} For all ordinals \(\beta\) and
    \(\alpha > \One\), we have \(\beta \le \abstrexp{\alpha}{\beta}\).
  \end{enumerate}
\end{proposition}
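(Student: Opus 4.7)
The plan is to show that $\LEM$ implies each of (i)–(iv), and that each of (i)–(iv) implies $\LEM$; together these yield the claimed equivalences.

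For the implications $\LEM \Rightarrow$ (i)–(iv), I appeal to \cref{order-preserving-gives-sim-iff-LEM}, which says that under $\LEM$ every order-preserving function between ordinals induces a simulation. It then suffices to produce an order-preserving map with the appropriate codomain. For (i), the right injection $b \mapsto \inr b : \beta \to \alpha + \beta$ works. For (ii), since $\alpha > \Zero$ is equivalent to $\alpha \ge \One$, we pick the least element $\bot$ of $\alpha$ and use $b \mapsto (\bot, b)$, which is order-preserving by definition of the reverse-lexicographic order. For (iii) and (iv), fix $a_1 \in \alpha$ with $\One = \alpha \initseg a_1$ (take $a_1 = 1$ in $\Two$ for (iii)) and send $b$ to $[\inr b, (\bot, a_1)]$, where the inner $\bot$ is the least element of $\abstrexp{\alpha}{\beta \initseg b}$. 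By \cref{initial-segment-of-abstrexp}, the initial segment of this element is $\abstrexp{\alpha}{\beta \initseg b} \times \One + \Zero = \abstrexp{\alpha}{\beta \initseg b}$, and order-preservation follows from strict monotonicity of $\abstrexp{\alpha}{-}$ (\cref{abstrexp-monotone}, applicable because $\alpha > \One$).

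For the converse implications, the idea is to specialise each hypothesis to proposition-parametrised ordinals and extract $\LEM$ from the resulting simulation. Let $P$ be an arbitrary proposition. For (i), take $\alpha = P$ and $\beta = \One$; the simulation $\One \le P + \One$ provides the (unique) least element $x$ of $P + \One$, and we decide $P$ by cases on $x$: if $x = \inl\,p$ then $P$ holds, while if $x = \inr\,\star$ then $\neg P$, since any $p : P$ would give $\inl\,p < x$, contradicting leastness. For (ii), take $\alpha = \One + P$ (always satisfying $\alpha > \Zero$ via $\inl\,\star$) and $\beta = \Two$; using $(\One + P) \times \Two = (\One + P) + (\One + P)$, the image of $1 \in \Two$ under the simulation must have initial segment $\One$, and a case analysis on its four possible forms decides $P$. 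For (iii), take $\beta = P + \One$: by \cref{exponentiate-by-prop} and the successor clause, $\abstrexp{\Two}{P + \One} = (\One + P) \times \Two = (\One + P) + (\One + P)$, and the simulation image of $\inr\,\star$ yields an analogous case analysis, which decides $P$. Finally, (iv) implies (iii) by specialising to $\alpha = \Two$.

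The main obstacle is rigorously excluding the ``impossible'' branches in the case analyses for the (ii) and (iii) arguments. Each such branch would require an ordinal equation of the form $\One + P = \One$, or $\One + P + \One = \One$, or $\One + P = P$, and these are refuted constructively by applying the induced ordinal equivalence to a specific element (such as $\inr\,p$ for $p : P$, or $\inr(\inr\,\star)$), noting that simulations preserve initial segments (\cref{simulation-facts}), and thereby deriving an impossible equation like $\One = \Zero$ from the mismatch. The $\LEM$-implies directions, in contrast, are essentially bookkeeping once the correct order-preserving map has been identified.
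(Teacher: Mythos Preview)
Your argument is correct and uses the same test ordinals as the paper (namely \(\alpha=P,\beta=\One\) for (i); \(\alpha=\One+P,\beta=\Two\) for (ii); \(\beta=P+\One\) for (iii); and the reduction (iv)\(\Rightarrow\)(iii)). The mechanics of the case analyses differ: the paper reasons directly with simulation properties (injectivity, preservation of least elements, and the defining ``lift'' condition for predecessors), whereas you compute the initial segment of the image and compare it against the known initial segment of the source element. Both routes work; yours is perhaps more uniform, while the paper's avoids having to solve little ordinal equations in each branch.

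One presentational point deserves tightening. In your final paragraph you describe the branches yielding equations such as \(\One+P=\One\) as ``impossible'' and to be ``excluded'', but that equation is not refutable: it holds precisely when \(\neg P\). Your own refutation applies the equivalence to \(\inr p\) \emph{for an assumed} \(p:P\), so what you have actually proved is \((\One+P=\One)\to\neg P\); this branch is the \(\neg P\) outcome, not an excluded one. Likewise in (iii) the branch \(f(\inr\star)=\inl(\inl\star)\) gives the equation \(\Zero=P\), i.e.\ \(\neg P\), which you do not list among your equations. By contrast, the equation \(\One+P=P\) really is refutable outright (from it one extracts an element of \(P\), whence \(\One+P\) has two elements while \(P\) has one), so that branch \emph{is} excluded. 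Once you separate ``this branch yields \(\neg P\)'' from ``this branch is impossible'', your case analyses go through cleanly.
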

\begin{proof}
  Assuming \(\LEM\), it suffices to construct order preserving maps to prove all
  inequalities. For \ref{addition-inflationary} the right coproduct inclusion
  is such a map. For \ref{multiplication-inflationary} we can use
  \(b \mapsto (a_0,b)\) where \(a_0\) is any element of \(\alpha\), which exists
  since \(\alpha > \Zero\).
  For \ref{exp-inflationary}, we have \(a_1 : \alpha\) such that
  \({\alpha \initseg a_1} = \One\) and we claim that the map
  \(f\,b \defeq [\inr b , (\bot , a_1)]\) does the job.
  Indeed, by \cref{initial-segment-of-abstrexp}, we have
  \(
    \abstrexp{\alpha}{\beta} \initseg f\,b = \abstrexp{\alpha}{\beta \initseg b},
  \)
  so that if \(b < b'\), then
  \(\abstrexp{\alpha}{\beta \initseg b} < \abstrexp{\alpha}{\beta \initseg b'}\)
  by \cref{abstrexp-monotone} and hence \(f\,b < f\,b'\) as the operation of
  taking initial segments is order reflecting.
  Finally, \ref{exp-inflationary-base-two} follows from
  \ref{exp-inflationary} of course.

  We now show that each of \ref{addition-inflationary},
  \ref{multiplication-inflationary} and \ref{exp-inflationary-base-two}
  implies \(\LEM\). It follows that \ref{exp-inflationary} also implies
  \(\LEM\).

  Assuming \ref{addition-inflationary}, we consider, for an arbitrary proposition
  \(P\), the ordinals \(\alpha \defeq P\) and \(\beta \defeq \One\).
  By assumption we get a simulation \(f : \One \le {P + \One}\). Using that
  simulations preserve least elements, we can decide \(P\) by inspecting where
  \(f\,\star\) gets mapped.

  Assuming \ref{multiplication-inflationary}, we consider, for an arbitrary proposition
  \(P\), the ordinals \(\alpha \defeq {\One + P}\) and \(\beta \defeq \Two\).
  By assumption we get a simulation \(f : \Two \le {(\One + P) \times
    \Two}\). Using that simulations preserve initial segments, we can decide \(P\)
  by inspecting where \(f\,1\) gets mapped.
  If it gets mapped to an element of the form \((\inr p,b)\) then obviously
  \(P\) holds.
  If it gets mapped to an element of the form \((\inl \star,b)\), then \(b = 1\)
  as \(f\,0 = (\inl \star,0)\) and simulations are injective (by
  \cref{simulation-facts}).
  Now if \(P\) were true, then \(f\,1\) would have two predecessors, namely
  \((\inl \star,0)\) and \((\inr p,1)\), which is impossible as \(1\) has only
  one. Thus, \(\neg P\) holds in this case.

  Finally, assuming \ref{exp-inflationary-base-two}, we consider, for an
  arbitrary proposition \(P\), the ordinal \(\beta \defeq P + \One\).
  By \cref{exponentiate-by-prop} we have
  \(\abstrexp{\Two}{\beta} = (1 + P) \times \Two\), so that we get a simulation
  \(f : {P + \One} \le {(1 + P) \times \Two}\) by assumption.
  We now decide \(P\) by inspecting \(f(\inr \star)\).
  If \(f(\inr\star) = (\inr p,b)\), then obviously \(P\) holds.
  If \(f(\inr\star) = (\inl \star,0)\), then \(\lnot P\) holds, for if we had
  \(p : P\), then, since simulations preserve the least element,
  \(f(\inl p) = (\inl\star,0) = f(\inr\star)\) which is impossible as \(f\) is
  injective and coproducts are disjoint.
  Finally, if \(f(\inr\star) = (\inl\star,1)\), then \(P\) holds, because \(f\)
  is a simulation and \((\inl\star,0) < (\inl\star,1) = f(\inr\star)\), so there
  must be \(x : P + \One\) with \(x < \inr\star\) and \(f\,x = (\inl\star,0)\).
  But then \(x\) must be of the form \(\inl p\).
\end{proof}

\section{Approximating Subtraction, Division and Logarithm Operations}\label{sec:approximation}

It is natural to wonder whether we can develop constructive ordinal arithmetic
further by constructing subtraction, division and logarithm operations.
Like with exponentiation, a careful treatment is required here, as e.g., the
existence of ordinal subtraction in the usual formulation is equivalent to
excluded middle, as observed by
Escard\'o~\cite[{\texttt{Ordinals.AdditionProperties}}]{TypeTopologyOrdinals}.
As we will see, it is possible, however, to construct a function
\(f : \Ord \times \Ord \to \Ord\) satisfying the weaker requirement that
\(f(\alpha,\beta)\) is the greatest ordinal \(\gamma\) with
\(\alpha + \gamma \leq \beta\) and \(\gamma \leq \beta\).

We start by proving that every bounded and antitone predicate that is closed
under suprema has a greatest element satisfying it.

\begin{definition}[Antitone, bounded predicates closed under suprema \flink{Definition-56}]
  A type family \(P\) over ordinals is
  \begin{enumerate}[label=(\roman*)]
  \item \emph{antitone} if \(\alpha \le \beta\) and \(P\,\beta\) together imply
    \(P\,\alpha\);
  \item \emph{bounded} if there exists an ordinal \(\delta\) such that for all
    ordinals \(\alpha\) with \(P\,\alpha\), we have \(\alpha \le \delta\);
  \item \emph{closed under suprema} if for every family \(F\) of ordinals,
    indexed by a small type \(I\), we have \(P\,(\sup F_{\sbullet})\) whenever
    we have \(P\,(F\,i)\) for all \(i : I\).
  \end{enumerate}
\end{definition}

We are interested in predicates which are antitone, bounded and closed under suprema, since these conditions guarantee that there is a greatest element satisfying the predicate.

\begin{proposition}[\flink{Proposition-57}]\label{greatest-ordinal-satisfying-predicate}
  Any antitone, bounded, proposition-valued (and small-valued) predicate \(P\) closed under suprema has a greatest ordinal satisfying it.
\end{proposition}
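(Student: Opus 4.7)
The plan is to take the supremum of all ordinals satisfying $P$ and verify it is the greatest such ordinal. The main work is to exhibit this collection as a small-indexed family, for which boundedness is essential.

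First, I note that the goal is a proposition: if $\gamma_1$ and $\gamma_2$ are both greatest, then each is $\leq$ the other by construction, so $\gamma_1 = \gamma_2$ by antisymmetry of $\leq$ on $\Ord$ (and the remaining components are equalities between propositions). Hence I may use truncation elimination to fix a concrete bound $\delta : \Ord$ such that $\alpha \leq \delta$ for every $\alpha$ satisfying $P$.

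Next, I represent all $P$-ordinals as a small family. Using characterization \ref{poset-simulation}, each $\alpha \leq \delta$ comes from a unique simulation $f : \alpha \to \delta$; by \cref{simulation-facts} this $f$ is injective and order-reflecting, so it is determined by its image, which is a downward-closed subtype $S_\alpha$ of $\delta$. Conversely, every downward-closed subtype of $\delta$ inherits an order making it an ordinal $\leq \delta$. This sets up an equivalence between $\{\alpha : \Ord \mid \alpha \leq \delta\}$ and the (small) set of downward-closed subtypes of $\delta$. Restricting further by $P$ yields a small index set $I$, and the corresponding family $G : I \to \Ord$ has the property that every $\alpha$ satisfying $P$ equals $G(S_\alpha)$ (using univalence to transport $P$ across the ordinal equivalence $\alpha = S_\alpha$).

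Finally, set $\gamma \defeq \sup G$. Since each $G(S)$ satisfies $P$ by construction, closure under suprema gives $P\,\gamma$ directly. For any $\alpha$ with $P\,\alpha$, boundedness produces $S_\alpha \in I$ with $G(S_\alpha) = \alpha$, so $\alpha = G(S_\alpha) \leq \sup G = \gamma$ by the defining property of the supremum. Thus $\gamma$ is the desired greatest ordinal satisfying $P$.

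The step I expect to be most delicate is verifying the smallness of $I$ precisely enough to feed it into the supremum construction from the preliminaries: it must be shown that the type of downward-closed subtypes of $\delta$ lies in the expected universe, which is routine via function extensionality (for equality of subtypes) and univalence (to transfer $P$ along ordinal equivalences), but is the only technicality that is not immediate.
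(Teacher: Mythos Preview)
Your approach differs from the paper's, and the step you flag as ``most delicate'' is in fact a genuine gap in the present predicative setting. The type of downward-closed subtypes of \(\delta\) is a subtype of \(\delta \to \Prop\), and without propositional resizing this lives one universe above \(\delta\); function extensionality and univalence do not change that. Since the supremum construction of \cref{sec:preliminary} requires a \emph{small} index type, your family \(G\) cannot be fed into it as written. Your claim that this is ``routine'' is therefore unjustified: making the index small is exactly the content of the proof, not a technicality.

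The paper avoids this by choosing a much smaller index set. Writing \(S\,d \defeq (\delta \initseg d) + \One\), it takes
\[
  \gamma \defeq \sup_{(d,p)\,:\,\Sigma(d:\delta).\,P(S\,d)} S\,d,
\]
which is indexed by a \(\Sigma\)-type over \(\delta\) and a small-valued predicate, hence manifestly small. Closure under suprema gives \(P\,\gamma\). For \(\alpha\) with \(P\,\alpha\), boundedness yields for each \(a : \alpha\) some \(d_a : \delta\) with \(\alpha \initseg a = \delta \initseg d_a\); then \(S\,d_a = (\alpha \initseg a) + \One \le \alpha\), so \emph{antitonicity} gives \(P(S\,d_a)\), placing \((d_a,p_a)\) in the index and producing a simulation \(\alpha \le \gamma\) via \cref{initial-segment-of-sup}. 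Note that your argument never invokes antitonicity---and indeed would not need it if the index were small---whereas in the paper antitonicity is precisely the hypothesis that lets one replace ``all \(P\)-ordinals'' by the small family of successor initial segments of \(\delta\).
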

\begin{proof}
  Since \(P\) is assumed to be proposition-valued, greatest ordinals
  are unique, if they exist, and hence we can assume that the bound
  \(\delta\) is explicitly given to us.
  Let us write \({S\,d} \defeq {\delta \initseg d} + \One\) and consider
  \[
    \gamma \defeq \sup_{d : \delta , P\,(S\,d)}S\,d
  \]
  Since \(P\) is closed under suprema, we have \(P\,\gamma\) by definition of
  \(\gamma\).
  Now suppose that we have \(\alpha\) with \(P\,\alpha\). We want to show that
  \(\alpha \le \gamma\). Since \(\delta\) is a bound for~\(P\), we have, for every
  \(a : \alpha\), an element \(d_a : \delta\) with
  \({\alpha \initseg a} = {\delta \initseg d_a}\).
  Since we have \({\alpha \initseg a + \One} \le \alpha\) and \(P\,\alpha\), we
  get \(p_a : P\,(S\,d_a)\) because \(P\) is antitone.
  Hence, \(c_a \defeq [(d_a,p_a),\inr \star]\) is an element of \(\gamma\) for
  which we compute, using~\cref{initial-segment-of-sup}, that
  \(\gamma \initseg c_a = {S\,d_a \initseg \inr \star} = {\delta \initseg d_a} =
  {\alpha \initseg a}\).
  Thus, the assignment \(a \mapsto c_a\) defines a simulation from \(\alpha\) to
  \(\gamma\), as desired.
\end{proof}

We remark that the result even holds for ``predicates'' which are not necessarily proposition-valued, as long as the bound is concretely given (i.e., if boundedness is formulated using a \(\sum\)-type rather than with its propositional truncation \(\exists\)).

\begin{theorem}[\flink{Theorem-58-i}]\label{Enderton-like}
  Let \(t : \Ord \to \Ord\) be an endofunction on ordinals that preserves suprema up to a binary
  join, in the sense that we have an ordinal \(\delta_0\) such that for every
  small family \(F\) of ordinals, the equation
  \[
    t (\sup F_{\sbullet}) = \delta_0 \vee {\sup (t \circ F_{\sbullet})}
  \]
  holds. Then for any ordinal \(\delta\) with \(\delta_0 \le \delta\), we have a
  greatest ordinal \(\gamma\) such that \(\gamma \le \delta\) and
  \(t\,\gamma \le \delta\).

  Moreover, if \(t\) is inflationary, i.e.\ \(\alpha \le t\,\alpha\) holds for
  all \(\alpha\), then we have an ordinal \(\gamma \le \delta\)
  such that \(\gamma\) is the greatest ordinal with \(t\,\gamma \le \delta\).
\end{theorem}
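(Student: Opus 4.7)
The plan is to apply \cref{greatest-ordinal-satisfying-predicate} to a suitably chosen predicate. For the first claim, I would take
\[
  P(\alpha) \defeq (\alpha \le \delta) \wedge (t\,\alpha \le \delta),
\]
which is proposition-valued since \(\le\) is, and then check that \(P\) is antitone, bounded, and closed under suprema.

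Before that, I would record the useful fact that \(t\) is monotone. Applying the hypothesis to the two-element family \(\{\alpha,\beta\}\) yields \(t(\alpha \vee \beta) = \delta_0 \vee t\,\alpha \vee t\,\beta\); when \(\alpha \le \beta\) we have \(\alpha \vee \beta = \beta\), so \(t\,\beta = \delta_0 \vee t\,\alpha \vee t\,\beta \ge t\,\alpha\). This is the same trick the paper uses right after its discussion of continuity.

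With monotonicity in hand, antitonicity of \(P\) is immediate: if \(\alpha \le \beta\) and \(P\,\beta\), then \(\alpha \le \beta \le \delta\) and \(t\,\alpha \le t\,\beta \le \delta\). Boundedness is witnessed by \(\delta\) itself. For closure under suprema, suppose \(F : I \to \Ord\) is small and \(P(F_i)\) for every \(i : I\). Then \(\sup F_{\sbullet} \le \delta\) since \(\delta\) is an upper bound of the family, and
\[
  t(\sup F_{\sbullet}) \;=\; \delta_0 \vee \sup(t \circ F_{\sbullet}) \;\le\; \delta,
\]
because \(\delta_0 \le \delta\) by assumption and each \(t(F_i) \le \delta\). \cref{greatest-ordinal-satisfying-predicate} now produces the greatest \(\gamma\) with \(P\,\gamma\), which is exactly the first claim.

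For the \emph{moreover} part, I would reuse the same machinery with the simpler predicate \(P'(\alpha) \defeq t\,\alpha \le \delta\). Antitonicity and closure under suprema go through exactly as before, and boundedness is where inflation enters: if \(P'\,\alpha\) holds, then \(\alpha \le t\,\alpha \le \delta\), so \(\delta\) is still a bound. The resulting greatest \(\gamma\) therefore automatically satisfies \(\gamma \le \delta\) and is the greatest ordinal with \(t\,\gamma \le \delta\). I do not anticipate a real obstacle here; the only subtle point is noticing that the generalized continuity hypothesis combined with \(\delta_0 \le \delta\) is precisely what is needed to push the bound \(\delta\) through suprema in the verification of closure.
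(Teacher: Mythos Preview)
Your proof of the first claim is exactly the paper's: the same predicate \(P(\alpha) \defeq (\alpha \le \delta) \wedge (t\,\alpha \le \delta)\), the same derivation of monotonicity of \(t\) from the generalized continuity hypothesis, and the same verification of the three conditions of \cref{greatest-ordinal-satisfying-predicate}.

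For the \emph{moreover} part you take a slightly different route. The paper does not switch to a new predicate \(P'\) and re-apply \cref{greatest-ordinal-satisfying-predicate}; instead it keeps the \(\gamma\) already produced and observes that, once \(t\) is inflationary, any \(\alpha\) with \(t\,\alpha \le \delta\) automatically satisfies \(\alpha \le t\,\alpha \le \delta\), hence \(P\,\alpha\), hence \(\alpha \le \gamma\). Thus the same \(\gamma\) is already greatest for the weaker condition \(t\,\gamma \le \delta\). Your approach is correct as well---the predicates \(P\) and \(P'\) in fact coincide under the inflationary assumption, so the two constructions yield the same \(\gamma\)---but the paper's argument is more economical since it avoids a second invocation of \cref{greatest-ordinal-satisfying-predicate}.
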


Before proving the theorem, we pause to discuss it.
Note that the equation forces \(t\,\Zero = \delta_0\) by considering the
supremum of the empty family.
For a fixed ordinal \(\alpha\), examples of such endofunctions are
\begin{itemize}
\item addition \(\alpha + (-)\) with \(\delta_0 = \alpha\),
\item multiplication \(\alpha \times (-)\) with \(\delta_0 = \Zero\),
\item exponentiation \(\abstrexp{\alpha}{(-)}\) with \(\delta_0 = \One\) (for \(\alpha \ge \One\)).
\end{itemize}

The theorem is close to Enderton's~\cite{Enderton1977} classical Theorem Schema 8D, but with two differences:
\begin{enumerate}[label=(\roman*)]
\item \emph{loc.~cit.}\ restricts to ``normal'' operations, i.e., endomaps \(t : \Ord \to \Ord\)
  such that \(t\) preserves \(<\), and
  \(t\,\lambda = \sup_{\beta < \lambda}t\,\beta\) for limit ordinals
  \(\lambda\);
\item \emph{loc.~cit.}\ proves: for any bound \(\delta\) with
  \(\delta_0 \le \delta\), we have a greatest ordinal \(\gamma\) such that
  \(t\,\gamma \le \delta\) (so the condition \(\gamma \le \delta\) is absent).
\end{enumerate}
We will shortly see that in several examples of \(t\) and \(\delta\), excluded
middle is equivalent to the existence of \(\gamma\) such that
\(\gamma \le \delta\) and \(\gamma\) is the greatest ordinal such that
\(t\,\gamma \le \delta\).

\begin{proof}[Proof of~\cref{Enderton-like}]
  Given such an endofunction \(t\) and ordinals \(\delta_0\) and \(\delta\), we
  use \cref{greatest-ordinal-satisfying-predicate} with the family
  \(P\,\alpha \defeq (t\,\alpha \le \delta) \wedge (\alpha \le \delta)\).
  Then \(P\) is closed under suprema because \(t\) preserves suprema up to a
  join with \(\delta_0\), antitone because \(t\) is monotone (this follows from
  the preservation of suprema up to a join with \(\delta_0\)) and bounded by
  \(\delta\).
  Hence we obtain a greatest ordinal \(\gamma\) satisfying \(P\) which proves
  the first part of the theorem.

  Now suppose in addition that \(t\) is inflationary. We claim that in this case
  \(\gamma\) is the greatest ordinal with \(t\,\gamma \le \delta\).
  Indeed, if we have \(\alpha\) with \(t\,\alpha \le \delta\), then
  \(\alpha \le t\,\alpha \le \delta\) as \(t\) is inflationary, so
  \(\alpha \le \gamma\) by construction of \(\gamma\).
\end{proof}

We now consider some examples and applications.
As already mentioned, the existence of ordinal subtraction in the usual
formulation is equivalent to the law of excluded middle, as observed by
Escard\'o~\cite[{\texttt{Ordinals.AdditionProperties}}]{TypeTopologyOrdinals}.
It is possible, however, to approximate ordinal subtraction, as well as division
and logarithms of ordinals, in the following sense.

\begin{proposition}[\flink{Proposition-59}]\label{thm:approximation}
  For any two ordinals \(\alpha\) and \(\beta\), we have greatest ordinals
  \(\gamma_s\), \(\gamma_d\) and \(\gamma_l\) such that
  \begin{enumerate}[label=(\roman*)]
  \item\label{subtraction} \(\gamma_s \le \beta\) and \(a + \gamma_s \le \beta\),
    provided \(\alpha \le \beta\);
  \item\label{division} \(\gamma_d \le \beta\) and
    \(\alpha \times \gamma_d \le \beta\);
  \item\label{logarithm} \(\gamma_l \le \beta\) and
    \(\abstrexp{\alpha}{\gamma_l} \le \beta\), provided \(\beta \ge \One\).
  \end{enumerate}
\end{proposition}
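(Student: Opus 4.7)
The proof is a direct three-fold application of \cref{Enderton-like} with $\delta \defeq \beta$. In each case, I exhibit the endofunction $t$ and the ordinal $\delta_0$ witnessing the hypotheses, and observe that the side condition $\delta_0 \le \beta$ reduces to the hypothesis stated in the corresponding clause of the proposition.

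For clause (i), take $t \defeq \alpha + (-)$ and $\delta_0 \defeq \alpha$. The required preservation-of-suprema-up-to-a-join property is precisely \cref{+-preserves-suprema}, and $\delta_0 \le \beta$ is the assumed $\alpha \le \beta$. For clause (ii), take $t \defeq \alpha \times (-)$ and $\delta_0 \defeq \Zero$. \Cref{multiplication-is-continuous} yields the stronger equation $\alpha \times \sup F_{\sbullet} = \sup(\alpha \times F_{\sbullet})$, which has the required form since $\Zero \vee \sup(\alpha \times F_{\sbullet}) = \sup(\alpha \times F_{\sbullet})$; and $\Zero \le \beta$ is automatic, so no side condition is needed. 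For clause (iii), take $t \defeq \abstrexp{\alpha}{(-)}$ and $\delta_0 \defeq \One$. The supremum clause of \cref{abstract-exp-satisfies-spec} gives $\abstrexp{\alpha}{\sup F_{\sbullet}} = \One \vee \sup \abstrexp{\alpha}{F_{\sbullet}}$, and the side condition $\One \le \beta$ is the assumed $\beta \ge \One$.

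In each case, \cref{Enderton-like} then delivers a greatest ordinal $\gamma$ with $\gamma \le \beta$ and $t\,\gamma \le \beta$, which is exactly the conclusion required. Were the operations in question constructively inflationary, the bound $\gamma \le \beta$ could be dropped via the second part of \cref{Enderton-like}; but by \cref{inflationary-taboo} this is a genuinely classical property, so retaining $\gamma_s, \gamma_d, \gamma_l \le \beta$ in the statement is unavoidable. There is no real obstacle to the proof: \cref{Enderton-like} has already done the substantial work, and what remains is simply matching the hypothesis of each clause to the required $\delta_0 \le \beta$.
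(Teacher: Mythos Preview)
Your proposal is correct and follows essentially the same approach as the paper: all three clauses are reduced to \cref{Enderton-like} with \(\delta = \beta\), using \cref{+-preserves-suprema}, \cref{multiplication-is-continuous}, and the supremum clause for abstract exponentiation, respectively, with the side condition \(\delta_0 \le \beta\) matching the stated hypothesis in each clause. Your additional remark about why the bound \(\gamma \le \beta\) cannot be dropped (via \cref{inflationary-taboo}) is correct and anticipates exactly the content of the paper's subsequent proposition.
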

\begin{proof}
  To prove \ref{subtraction}, apply \cref{Enderton-like} with the parameters
  \(t = \alpha + (-)\), \(\delta_0 = \alpha\) and \(\delta = \beta\), and use
  \cref{+-preserves-suprema}.
  For \ref{division}, apply \cref{Enderton-like} with the parameters
  \(t = \alpha \times (-)\), \(\delta_0 = \Zero\) and \(\delta = \beta\), and use
  \cref{multiplication-is-continuous}.
  Finally, for \ref{logarithm}, apply \cref{Enderton-like} with the parameters
  \(t = \abstrexp{\alpha}{(-)}\), \(\delta_0 = \One\) and \(\delta = \beta\).
\end{proof}

Note that it is not technically necessary to assume \(\alpha > \Zero\)
in~\ref{division}, even though division by \(\Zero\) is not well defined. In
fact, the \(\gamma_d \le \beta\) requirement forces \(\gamma_d = \beta\) in case
\(\alpha = \Zero\).
Similarly, we do not need to assume \(\alpha > \One\) in~\ref{logarithm}.

Perhaps surprisingly, seemingly mild variations of the above are equivalent to
\(\LEM\).
\begin{proposition}[\flink{Proposition-60-i}]
  The following are all equivalent to \(\LEM\).
  \begin{enumerate}[label=(\roman*)]
  \item\label{subtraction-var} For every \(\alpha\), \(\beta\) with \(\alpha \le \beta\), we have
    \(\gamma \le \beta\) such that \(\gamma\) is the greatest with
    \(a + \gamma \le \beta\).
  \item\label{division-var} For every \(\alpha\) and \(\beta\) with
    \(\alpha > \Zero\), we have \(\gamma \le \beta\) such that \(\gamma\) is the
    greatest with \(a \times \gamma \le \beta\).
  \item\label{logarithm-var} For every \(\alpha\) and \(\beta\) with
    \(\alpha > \One\) and \(\beta \ge \One\), we have \(\gamma \le \beta\) such that
    \(\gamma\) is the greatest with \(\abstrexp{\alpha}{\gamma} \le \beta\).
  \end{enumerate}
\end{proposition}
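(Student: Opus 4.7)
The plan is to use the cancellation theorem \cref{thm:cancellation} together with the inflationary taboo \cref{inflationary-taboo}. In each of (i)--(iii), I will instantiate $\beta$ as $t(\delta_0)$ for an arbitrary $\delta_0$, where $t$ denotes the relevant operation $(\alpha + -)$, $(\alpha \times -)$, or $\abstrexp{\alpha}{-}$; cancellation will then force the greatest $\gamma$ promised by the taboo to equal $\delta_0$ itself, so the hypothesis $\gamma \le \beta$ yields $\delta_0 \le t(\delta_0)$. Since $\delta_0$ is arbitrary, this is exactly the universal inflationary property that \cref{inflationary-taboo} identifies with \LEM.

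The forward direction --- that \LEM{} implies each of (i), (ii), (iii) --- is the easy one. Under \LEM, \cref{inflationary-taboo} gives that $t$ is inflationary in each case, so any $\delta$ with $t(\delta) \le \beta$ already satisfies $\delta \le t(\delta) \le \beta$. Hence the greatest $\gamma$ produced by the constructive \cref{thm:approximation}, which satisfies both $\gamma \le \beta$ and $t(\gamma) \le \beta$, is automatically the greatest among \emph{all} $\delta$ with $t(\delta) \le \beta$, and it lies below $\beta$ by construction.

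For the converse, I proceed case by case. For (i), given arbitrary $\alpha$ and $\delta_0$, set $\beta \defeq \alpha + \delta_0$; then $\alpha \le \beta$ via the left coproduct inclusion, which is a simulation. The taboo produces $\gamma \le \beta$ which is greatest with $\alpha + \gamma \le \alpha + \delta_0$. By \cref{thm:cancellation}, this condition is equivalent to $\gamma \le \delta_0$, so the greatest such $\gamma$ is $\delta_0$ itself. Thus $\delta_0 = \gamma \le \alpha + \delta_0$, and since $\alpha, \delta_0$ were arbitrary, we obtain $\beta \le \alpha + \beta$ universally, giving \LEM{} by \cref{inflationary-taboo}. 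Case (ii) proceeds analogously, setting $\beta \defeq \alpha \times \delta_0$ (with $\alpha > \Zero$, so $\alpha \ge \One$ and cancellation applies), and case (iii) sets $\beta \defeq \abstrexp{\alpha}{\delta_0}$, noting that $\abstrexp{\alpha}{\delta_0} \ge \One$ holds by construction so the hypothesis $\beta \ge \One$ is automatic.

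The one subtlety I expect concerns case (iii): the cancellation theorem \cref{thm:cancellation} for exponentiation requires $\alpha \ge \Two$ with a trichotomous least element, whereas the hypothesis of (iii) only demands $\alpha > \One$. Since the taboo quantifies universally, I can simply choose $\alpha \defeq \Two$, which satisfies both conditions, and derive $\delta \le \abstrexp{\Two}{\delta}$ for arbitrary $\delta$. This is precisely the assumption in \cref{inflationary-taboo} that is already known to imply \LEM. Once this specialization is made, the whole argument becomes mechanical given \cref{thm:cancellation,thm:approximation,inflationary-taboo}.
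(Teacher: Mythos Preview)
Your proposal is correct, but the paper's converse argument is more elementary. You both instantiate \(\beta \defeq t(\delta_0)\) and look at the promised greatest \(\gamma \le t(\delta_0)\); you then invoke the cancellation theorem \cref{thm:cancellation} to pin down \(\gamma = \delta_0\), whereas the paper observes only the trivial half \(\delta_0 \le \gamma\) (since \(\delta_0\) itself satisfies \(t(\delta_0) \le t(\delta_0)\) and \(\gamma\) is greatest), which already gives \(\delta_0 \le \gamma \le t(\delta_0)\). This avoids \cref{thm:cancellation} entirely and therefore dissolves your subtlety in case~(iii): no trichotomous-least-element hypothesis is needed, and the argument works uniformly without specializing to \(\alpha = \Two\). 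Your route is perfectly valid, just heavier than necessary; the forward direction and the overall structure match the paper.
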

\begin{proof}
  We only spell out the details for \ref{subtraction-var}. The equivalence of
  \ref{division-var} and \ref{logarithm-var} with \(\LEM\) is proved
  similarly, using the other clauses of \cref{inflationary-taboo}.
  If \(\LEM\) holds, then the operation \(\alpha + (-)\) is inflationary by
  \cref{inflationary-taboo}. Hence we can apply the
  second clause of \cref{Enderton-like} to prove \ref{subtraction-var}.
  Conversely, assume we have \ref{subtraction-var}, we show that
  \(\beta \le \alpha + \beta\) holds for all ordinals \(\alpha\) and \(\beta\),
  which implies \(\LEM\) by \cref{inflationary-taboo}.
  Given \(\alpha\) and \(\beta\), we use \ref{subtraction-var} for \(\alpha\)
  and \(\alpha + \beta\) to obtain \(\gamma \le {\alpha + \beta}\) such that
  \(\gamma\) is the greatest with \(\alpha + \gamma \le {\alpha + \beta}\).
  Obviously \({\alpha + \beta} \le {\alpha + \beta}\), so that
  \(\beta \le \gamma\), and hence \(\beta \le {\alpha + \beta}\), completing the
  proof.
\end{proof}

\section{Conclusions and Future Work}

We have presented two constructively well behaved
definitions of ordinal exponentiation and showed them to be equivalent in case
the base ordinal has a trichotomous least element.
Working in homotopy type theory, we used the equivalence, in combination with the univalence axiom, to transfer
various results, such as algebraic laws and decidability properties, from one
construction to the other.
We also showed that arithmetic operations can be constructively cancelled on the left, e.g.\ \(\alpha \times \beta \leq \alpha \times \gamma\) implies \(\beta \leq \gamma\) for \(\alpha \geq \One\), and
we furthermore marked the limits of a constructive treatment by presenting no-go
theorems that show the law of excluded middle (\LEM) to be equivalent to certain
statements about ordinal exponentiation.

A natural question, to which we do not yet have a conclusive answer, is whether
it is possible to fuse the two constructions of this paper and define ordinal
exponentiation for base ordinals that do not necessarily have a trichotomous
least element via quotiented lists.
Other worthwhile open questions include what kind of large ordinals can be shown to exist in constructive foundations, now that we have access to constructive ordinal exponentiation, and if these larger ordinals can be useful for applications such as the construction of fixed points by transfinite iteration.

\section*{Acknowledgements and Funding}
  We are grateful to Mart\'in Escard\'o, Paul Levy and David W\"arn for helpful
  discussions on ordinals. In particular, Escard\'o pointed us to the work of
  Grayson, while the definition of abstract exponentiation builds on a
  suggestion by W\"arn.
  We appreciate the pointers and suggestions by the anonymous reviewers that
  helped us to improve the paper.
  We also thank the participants, organizers and support staff of the HIM
  trimester programme \emph{Prospects of Formal Mathematics} for the opportunity
  to work on this project during our stay.

  This work was supported by The Royal Society (grant references
  URF\textbackslash{}R1\textbackslash{}191055,
  RF\textbackslash{}ERE\textbackslash{}210032,
  RF\textbackslash{}ERE\textbackslash{}231052,
  URF\textbackslash{}R\textbackslash{}241007), the UK National Physical
  Laboratory Measurement Fellowship project ``Dependent types for trustworthy
  tools'', and the Engineering and Physical Sciences Research Council
  (EP/Y000455/1).
  The HIM programme was funded by the Deutsche Forschungsgemeinschaft (DFG,
  German Research Foundation) under Germany's Excellence Strategy -- EXC-2047/1
  -- 390685813.

\sloppy
\printbibliography

\end{document}